\documentclass{lmcs}
\pdfoutput=1
\usepackage[utf8]{inputenc}

\usepackage{lastpage}
\lmcsdoi{20}{2}{13}
\lmcsheading{}{\pageref{LastPage}}{}{}%
{Sep.~17,~2021}{Jun.~13,~2024}{}

\bibliographystyle{alphaurl}

\keywords{coalgebra; fixpoint logic; quantitative logic; linear-time logic}

\ACMCCS{[{\bf Theory of computation}]:~Logic and verification; Modal and temporal logics;} 

\usepackage{hyperref}

\theoremstyle{defC}
\newtheorem{exaC}[thm]{Example}

\theoremstyle{plain}

\usepackage{stmaryrd,amssymb}
\usepackage[curve,cmtip]{xypic}

\usepackage{scalerel}
\usepackage{stackengine,wasysym}
\newcommand\reallywidetilde[1]{\ThisStyle{%
  \setbox0=\hbox{$\SavedStyle#1$}%
  \stackengine{-.1\LMpt}{$\SavedStyle#1$}{%
    \stretchto{\scaleto{\SavedStyle\mkern.2mu\AC}{.5150\wd0}}{.6\ht0}%
  }{O}{c}{F}{T}{S}%
}}

\newcommand{\llsem}{\llparenthesis}
\newcommand{\rrsem}{\rrparenthesis}
\newcommand{\Cyl}{{\mathsf{Cyl}}}
\newcommand{\M}{{\mathcal{M}}}
\newcommand{\A}{{\mathcal{A}}}
\newcommand{\C}{{\mathcal{C}}}
\renewcommand{\H}{{\mathcal{H}}}
\renewcommand{\S}{{\mathcal{S}}}
\newcommand{\R}{{\mathcal{R}}}
\newcommand{\Paths}{{\mathsf{Paths}}}
\newcommand{\pref}{{\mathsf{pref}}}
\newcommand{\depth}{{\mathsf{depth}}}
\newcommand{\fnd}{{\mathsf{fnd}}}
\newcommand{\tr}{{\mathsf{tr}}}
\newcommand{\ftr}{{\mathsf{ftr}}}
\newcommand{\ptr}{{\mathsf{ptr}}}
\newcommand{\ran}{{\mathsf{ran}}}

\newcommand{\At}{{\mathsf{At}}}
\newcommand{\extent}{{\mathsf{e}}}
\newcommand{\arity}{{\mathsf{ar}}}
\newcommand{\op}{{\mathsf{op}}}
\newcommand{\V}{\mathcal{V}}
\newcommand{\Id}{{\mathsf{Id}}}
\newcommand{\Op}{{\mathsf{O}}}
\newcommand{\LOp}{{\mathsf{o}}}
\newcommand{\id}{{\mathsf{id}}}
\renewcommand{\phi}{\varphi}
\newcommand{\Cl}{{\mathsf{Cl}}}  
\renewcommand{\L}{{\mathsf L}}            
\newcommand{\E}{{\mathsf E}}
\newcommand{\LL}{{\mathcal L}}
\newcommand{\T}{{\mathsf T}}
\newcommand{\Set}{{\mathsf{Set}}}
\newcommand{\Rel}{{\mathsf{Rel}}}
\newcommand{\Pred}{{\mathsf{Pred}}}
\newcommand{\lsem}{\llbracket}
\newcommand{\rsem}{\rrbracket}
\newcommand{\st}{{\mathsf{st}}}
\newcommand{\dst}{{\mathsf{dst}}}
\newcommand{\tw}{{\mathsf{tw}}}
\newcommand{\Pow}{{\mathcal{P}}}
\newcommand{\FPow}{{\mathcal{P}_\omega}}
\newcommand{\SPred}{{\mathsf{Pred}}}
\newcommand{\supp}{{\mathsf{supp}}}
\newcommand{\Eq}{{\mathsf{Eq}}}
\renewcommand{\d}{{\mathsf{d}}}
\renewcommand{\next}{\bigcirc}

\allowdisplaybreaks

\begin{document}           
\bibliographystyle{alphaurl}

\title{Linear-time logics -- a coalgebraic perspective}
\author[C.~C\^{\i}rstea]{Corina C\^{\i}rstea\lmcsorcid{0000-0003-3165-5678}}	
\address{University of Southampton, UK}	
\email{cc2@ecs.soton.ac.uk}  
\thanks{Part of this work was funded by the Leverhulme Trust Research Project Grant RPG-2020-232.}	

\begin{abstract}
\noindent We describe a general approach to deriving linear-time logics for a wide variety of state-based, quantitative systems, by modelling the latter as coalgebras whose type incorporates both branching and linear behaviour. Concretely, we define logics whose syntax is determined by the type of linear behaviour, and whose domain of truth values is determined by the type of branching behaviour, and we provide two semantics for them: a step-wise semantics akin to that of standard coalgebraic logics, and a path-based semantics akin to that of standard linear-time logics. The former semantics is useful for model checking, whereas the latter is the more natural semantics, as it measures the extent with which qualitative properties hold along computation paths from a given state. Our main result is the equivalence of the two semantics. We also provide a semantic characterisation of a notion of logical distance induced by these logics. Instances of our logics support reasoning about the possibility, likelihood or minimal cost of exhibiting a given linear-time property.
\end{abstract}

\maketitle

\section{Introduction}

Linear-time temporal logics such as LTL (see e.g.~\cite{Vardi95}) or the linear-time $\mu$-calculus (see e.g.~\cite{Vardi88,Dam1992}), originally interpreted over non-deterministic models, have been successfully adapted to quantitative transition system models \cite{Vardi99,FAELLA200861,Droste2012}. These logics specify qualitative properties of paths in a (quantitative) transition system, and depending on the type of branching present in the models, have either a qualitative semantics (in the case of non-deterministic branching) or a quantitative one (in the case of probabilistic or weighted branching). Despite commonalities, which extend to the associated automata-based verification techniques, a general and uniform account of linear-time logics and their use in formal verification is still missing. Such an account would ultimately support the development of verification tools that are applicable to a wider class of models, thereby extending the scope of automated verification to include complex systems with a variety of quantitative features and correctness/optimality concerns. Extensions include (i) resource-aware systems, which associate \emph{costs} to the different actions a system can take, and where the goal is to achieve the desired behaviour with minimal cost, and (ii) systems whose executions are \emph{tree-shaped}, that is, an action can result in several successor states; the latter is needed e.g.~to model systems whose structure evolves dynamically (see (\ref{dyn}) of Example~\ref{ex:lin-beh}).

The present paper makes some steps towards filling this gap, by studying linear-time, quantitative logics for a variety of models within a unifying framework. Specifically, we model systems as \emph{coalgebras} whose type incorporates both \emph{branching} and \emph{linear} behaviour. These models come with a natural notion of \emph{path}, which formalises an individual system execution. We study linear-time, fixpoint logics for such models, that are naturally induced by the model type. This builds on a recent coalgebraic account of maximal traces in systems with branching \cite{Cirstea17}. The branching type of a model is determined by a choice of quantitative domain for the transition weights, in the form of a partial semiring; this provides the domain of truth values for the logics, and dictates how the semantics quantifies over the branching. The linear type of a model describes the structure of individual transitions, and induces a \emph{qualitative} notion of observable behaviour (of an individual execution), together with associated (both qualitative and quantitative) linear-time modal operators. The quantitative logics thus obtained have a natural step-wise semantics, as is standard in coalgebraic logics. However, this semantics does not directly capture the intuition that the (quantitative) interpretation of a formula measures the set of computation paths from a given state which satisfy the formula in the qualitative sense -- the semantics makes no reference to computation paths. Formalising this intuition requires an extension of standard measure-theoretic concepts and results to measures valued into partial semirings. Once this is done, a more natural, path-based semantics can be defined, and proved equivalent to the step-wise semantics.

For this equivalence result to hold, it is necessary to omit standard propositional operators (conjunctions and disjunctions) from the logics. To see why this is necessary, consider the LTL formula $\next p \wedge \next q$, where $\next$ is the LTL \emph{next} operator. Its interpretation in existential LTL is "there exists a path along which $p$ is true in the next step and $q$ is true in the next step". Under a step-wise semantics, the interpretation of this formula would be a function of the interpretations of $\next p$ and $\next q$. However, such a function cannot be defined: depending on the transition system, a path satisfying both $p$ and $q$ in the next step may or may not exist, when (potentially different) paths satisfying $p$ and respectively $q$ in the next step exist. The argument for the absence of arbitrary disjunctions is similar: in probabilistic LTL, there is no choice of interpretation for disjunctions of formulas which can compute the likelihood of $\next p \vee \next q$ holding along paths from a given state, from the likelihoods of $\next p$, respectively $\next q$ holding on these paths. While the absence of conjunctions and disjunctions from the syntax of the logics appears rather restrictive at first, this has relatively limited impact on the expressiveness of the logics: when interpreted qualitatively, our logics have the expressive power of \emph{deterministic} parity automata; these, in turn, are as expressive as non-deterministic parity automata in the case of \emph{word-like} linear behaviour, but less expressive than non-deterministic parity automata in the case of \emph{tree-like} linear behaviour \cite{TW68} (see also Remark~\ref{rem-disj}). As a result, as far as existential LTL and probabilistic LTL are concerned, our logics are equally expressive; in fact they are slightly \emph{more} expressive, since LTL does not have the full expressive power of deterministic parity automata over infinite words.

 Apart from matching the expressive power of existing linear-time logics for non-deterministic and probabilistic systems (see Example~\ref{ex:ltl}), the logics we consider support reasoning quantitatively about the linear behaviour of resource-aware systems (see Example~\ref{example-semirings}), and of branching systems whose executions are tree-shaped (see Example~\ref{ex:lin-beh}).

A more detailed outline of our approach and results is given below:
\begin{itemize}
\item We model systems as coalgebras of endofunctors obtained by composing a  \emph{branching monad} $\T_S : \Set \to \Set$ with a \emph{polynomial endofunctor $F : \Set \to \Set$}. The branching monad arises from a partial commutative semiring $S = (S,+,0,\bullet,1)$, whose carrier provides the domain of truth values for a quantitative logic for $\T_S \circ F$-coalgebras. Specifically, we take $\T_S X = \{ f : X \to S \mid f \text{ has finite support} \}$. Our logics contain both a (hidden) branching modality, used to quantify over the branching structure, and linear-time modalities, used to express properties of the linear behaviour. The semantics of the logics therefore requires \emph{quantitative predicate liftings} to interpret both the linear-time modalities and the hidden branching modality. The forgetful logics of \cite{KlinRot16} are similar in their use of a hidden branching modality, although they are much more permissive in the choice of such a modality -- they also cover logics which are expressive for bisimulation. The propositional operators employed by our logics are more restrictive than those found in linear-time temporal logics such as LTL \cite{Vardi95} or probabilistic LTL \cite{Vardi99}; nonetheless, our logics match the expressiveness of LTL, when instantiated to non-deterministic and probabilistic transition systems (see Example~\ref{ex:ltl}), with the only impact on expressiveness occurring when linear-time behaviours are tree-shaped (see Example~\ref{exa:express}).
\item In spite of their linear-time flavour, a key feature of our logics is their step-wise semantics: the interpretation of a formula is defined by successively unfolding the coalgebra structure, as required by the structure of the formula. This is different from logics such as (probabilistic) LTL or the (probabilistic) linear-time $\mu$-calculus, where the interpretation of a formula on a state is defined in terms of its interpretation on the (infinite) computation paths from that state. Moreover, unlike other coalgebraic logics (e.g.~the forgetful logics of \cite{KlinRot16}), the canonical choices made in the semantics of our logics also allow for a (measure-theoretic) path-based semantics to be defined, and proved equivalent to the step-wise semantics. The path-based semantics relies on a coalgebraic notion of \emph{path} through a coalgebra with branching, and involves defining a $\sigma$-algebra structure on the set of paths. Key to the equivalence result between the two semantics is the close relationship between the semiring structure and the associated branching modality.
\item In order to define our path-based semantics, a generalisation of standard measure extension results to semiring-valued measures is required. Specifically, we use a generalisation of the notion of \emph{outer measure} (see e.g.~\cite{Ash,Royden}) to extend a semiring-valued measure on a field of sets to a similar measure on the induced $\sigma$-algebra. Our focus is on the \emph{existence} of such extensions; uniqueness is not of interest to us, since the measure extensions obtained via the use of outer measures fit our goal of providing an equivalent path-based semantics for our logics.
\item Our quantitative logics naturally give rise to a logical distance between states of coalgebras with branching, and the existence of a path-based semantics helps to provide a semantic characterisation of this logical distance. Specifically, we introduce the semantic notion of \emph{linear-time distance} between states of two coalgebras, and show that it coincides with the logical distance.
\end{itemize}

We now comment on the relationship between our logics and existing quantitative, linear-time logics. We begin with the logic LTL, interpreted over either non-deterministic \cite{Vardi95} or probabilistic transition systems \cite{Vardi99}. In the case of non-deterministic systems, we consider the existential variant of LTL, whose semantics requires the existence of infinite computation paths satisfying certain linear-time properties. On the other hand, the probabilistic semantics of LTL uses measure theory to assign, to each linear-time property, the likelihood of it being satisfied on the infinite computation paths from a given state. A suitable choice of endofunctor $F : \Set \to \Set$ allows us to recover the expressiveness of both existential LTL and probabilistic LTL, albeit using a slightly different syntax (see Example~\ref{ex:ltl}). Our logics can therefore encode existing linear-time logics used in verification.

Other existing quantitative, LTL-like logics are either \emph{weighted ones}, with weights taken from a strong bimonoid and with propositional operators interpreted using the bimonoid sum, respectively multiplication (see e.g.~\cite{Droste2012}), or \emph{lattice-based ones} \cite{Kupferman2007}, with truth values taken from a lattice and with propositional operators interpreted using the lattice operations. The former approach immediately rules out partial semirings and does not appear to admit a path-based semantics, whereas the latter is much more restrictive than a semiring-based approach, given that the lattice idempotence laws do not generally hold in a semiring.

The remainder of the paper is structured as follows. Section~\ref{prelim} introduces partial semiring monads and semiring-valued relations and predicates, summarises previous work on a coalgebraic account of finite/maximal traces in systems with branching \cite{Cirstea17}, and recalls the basics of (qualitative) coalgebraic logics. Section~\ref{fixpoint-logic} describes linear-time fixpoint logics for coalgebras with branching, along with their step-wise semantics. Section~\ref{semiring-measure-gen} introduces \emph{semiring-valued measures} on $\sigma$-algebras, and studies the existence of measure extensions (from fields of sets to $\sigma$-algebras) in this setting. Building on this, Section~\ref{path-based-sem} provides an alternative, path-based semantics for the linear-time fixpoint logics of Section~\ref{fixpoint-logic}, based on notions of \emph{maximal path} and \emph{path fragment} in a coalgebra with branching, whereas Section~\ref{equiv-sem} proves the equivalence of the two semantics. Finally, Section~\ref{sem-char} shows that the logical distance induced by our logics coincides with a natural semantic distance, which itself has a linear-time flavour, while Section~\ref{conclusions} summarises the results presented and briefly outlines future work.

\subsection*{Related Work} The logics considered here were originally introduced in \cite{Cirstea14} and further studied in \cite{Cirstea15}, where a path-based semantics for their \emph{fixpoint-free} fragment was also described; the latter, however, did not employ any measure-theoretic machinery. The connection between the logics in \cite{Cirstea14} and parity automata was studied in \cite{CirsteaSH17}.

\subsection*{Acknowledgements} {Thanks are due to Clemens Kupke for prompting towards a study of logical \emph{distance} as opposed to logical \emph{equivalence}, the result of which is Section~\ref{sem-char}; to Clemens Kupke and Toby Wilkinson for related discussions on logical and semantic distances in this context; and to Alexandre Goy for interesting discussions regarding the results of this paper. Many thanks also to the anonymous reviewers, whose comments and suggestions led to considerable improvements of both the results presented and their exposition.}

\section{Preliminaries}
\label{prelim}

\subsection{Monads and Partial Semirings}
\label{monads-semirings}

In what follows, we use \emph{monads} on the category $\Set$ of sets and functions to capture branching in coalgebraic types.

\begin{defi}
A \emph{monad} on a category $\C$ is a tuple $(\T,\eta,\mu)$  where $\T : \C \to \C$ is a functor and $\eta : \Id \Rightarrow \T$ and $\mu : \T \circ \T \Rightarrow \T$ are natural transformations, called the \emph{unit} and \emph{multiplication} of $\T$, subject to the following laws:
\begin{align*}
\UseComputerModernTips\xymatrix@+0.5pc{
\T \ar@{=>}[r]^-{\T \eta} \ar@{=>}[dr]_-{\id_\T} & \T^2 \ar@{=>}[d]^-{\mu} & \T \ar@{=>}[l]_-{\eta_\T} \ar@{=>}[dl]^-{\id_\T} & & \T^3 \ar@{=>}[d]_-{\T \mu} \ar@{=>}[r]^-{\mu_\T} & \T^2 \ar@{=>}[d]^-{\mu} \\
& \T & & & \T^2 \ar@{=>}[r]_-{\mu} & \T
}
\end{align*}
\end{defi}

The specific nature of the monads we consider makes them \emph{strong} and \emph{commutative}. A \emph{strong monad} comes equipped with a \emph{strength map} $\st_{X,Y} : X \times \T Y \to \T(X \times Y)$, natural in $X$ and $Y$ and subject to coherence conditions w.r.t.~$\eta$ and $\mu$ (see e.g.~\cite[Chapter 5]{JacobsBook} for details). For such a monad, one can also define a \emph{swapped strength map} $\st'_{X,Y} : \T X \times Y \to \T(X \times Y)$ by:
\[\UseComputerModernTips\xymatrix@+0.5pc{\T X \times Y \ar[r]^-{\tw_{\T X,Y}} & Y \times \T X \ar[r]^-{\st_{Y,X}} & \T(Y \times X) \ar[r]^-{\T \tw_{Y,X}} & \T(X \times Y)}\]
where $\tw_{X,Y} : X \times Y \to Y \times X$ is the \emph{twist map} taking $(x,y) \in X \times Y$ to $(y,x)$. \emph{Commutative monads} are strong monads where the maps $\mu_{X,Y} \circ \T \st'_{X,Y} \circ \st_{\T X,Y} : \T X \times \T Y \to \T(X \times Y)$ and $\mu_{X,Y} \circ \T \st_{X,Y} \circ \st'_{X,\T Y} : \T X \times \T Y \to \T(X \times Y)$ coincide, yielding a \emph{double strength map} $\dst_{X,Y} : \T X \times \T Y \to \T(X \times Y)$ for each choice of sets $X,Y$.

The monads considered in this paper arise from partial commutative semirings.

\begin{defi}
A \emph{partial commutative monoid} $(S,+,0)$ is given by a set $S$ together with a partial operation $+ : S \times S \to S$ and an element $0 \in S$, such that:
\begin{itemize}
\item $s + 0$ is defined for all $s \in S$ and moreover, $s+0 = s$,
\item $(s+t) +u$ is defined if and only if $s + (t + u)$ is defined, and in that case $(s+t) +u = s+( t+u)$,
\item whenever $s + t$ is defined, so is $t + s$ and moreover, $s + t = t + s$.
\end{itemize}
A \emph{partial commutative semiring} is a tuple $S := (S,+,0,\bullet,1)$ with $(S,+,0)$ a partial commutative monoid and $(S,\bullet,1)$ a commutative monoid, with $\bullet$ distributing over existing sums; that is, for all $s,t,u \in S$, $s \bullet 0 = 0$, and whenever $t + u$ is defined, then so is $s \bullet t + s \bullet u$ and moreover, $s \bullet t + s \bullet u = s \bullet (t + u)$.
\end{defi}

The addition operation of any partial commutative semiring induces a pre-order relation $\sqsubseteq \,\subseteq S \times S$, given by
\begin{equation}
\label{preorder}
x \sqsubseteq y ~~~\text{if and only if}~~~ \text{there exists } z \in S \text{ such that }x + z = y\end{equation}
for $x,y \in S$. It follows immediately from the axioms of a partial commutative semiring that $\sqsubseteq$ has $0 \in S$ as \emph{bottom element}, that is, $0 \sqsubseteq s$ for all $s \in S$, and that $\sqsubseteq$ is preserved by $+$ and $\bullet$ in each argument  (see \cite{Cirstea17} for details).

\begin{asm}
\label{ass:cpo}
We assume that $(S,\sqsubseteq)$ is a complete lattice, with the unit $1$ of $\bullet$ as top element. Moreover, we assume that both $+$ and $\bullet$ preserve joins of increasing countable chains, respectively meets of decreasing countable chains, in each argument.
\end{asm}

\begin{rem}
\label{rem-semiring}
Below we list some basic properties of a partial commutative semiring $(S,+,0,\bullet,1)$ satisfying Assumption~\ref{ass:cpo}:
\begin{enumerate}
\item \label{c1} Each $s \in S$ has a (not necessarily unique) "1-complement" w.r.t.~$+$ (that is, an element $s' \in S$ such that $s + s' = 1$). This follows directly from the definition of $\sqsubseteq$, using the fact that $1$ is top for $\sqsubseteq$.
\item \label{c2} Whenever $\sum\limits_{i \in I} a_i$ is defined (with $I$ a finite set) and $J \subseteq I$, $\sum\limits_{i \in J} a_i$ is also defined. This follows from $(S,+,0)$ being a partial commutative monoid.
\item \label{c3} Whenever $\sum\limits_{i \in I} a_i$ is defined (with $I$ a finite set), then so is $\sum\limits_{i \in I} a_i \bullet b_i$. To see this, let $c_i \in S$ be such that $b_i + c_i = 1$ for $i \in I$ (by using (\ref{c1}) above). Then $\sum\limits_{i \in I} a_i = \sum\limits_{i \in I} a_i \bullet (b_i + c_i) = \sum\limits_{i \in I} (a_i \bullet b_i + a_i \bullet c_i) = \sum\limits_{i \in I} (a_i \bullet b_i)  + \sum\limits_{i \in I} (a_i \bullet c_i)$ and therefore by (\ref{c2}), $\sum\limits_{i \in I} a_i \bullet b_i$ is defined.
\item \label{c4} Let $a_0 \sqsubseteq a_1 \sqsubseteq \ldots$ and $b_0 \sqsubseteq b_1 \sqsubseteq \ldots$ be increasing chains in $S$ with joins $a$ and $b$, respectively. Then, $a+b$ is the join of $a_0 + b_0 \sqsubseteq a_1 + b_1 \sqsubseteq \ldots$. That $a+b$ is an upper bound for this increasing chain is immediate from the preservation of $\sqsubseteq$ by $+$ in each argument. That it is the least upper bound also follows easily: if $c$ is an upper bound for $a_0 + b_0 \sqsubseteq a_1 + b_1 \sqsubseteq \ldots$ then, for $i \in \omega$, $a_0 + b_i \sqsubseteq a_1+b_i \sqsubseteq \ldots \sqsubseteq c$ together with the preservation of joins of increasing countable chains by $+$ in the first argument gives $a + b_i = \sup_{j \in \omega}a_j + b_i = \sup_{j \in \omega} (a_j + b_i) \sqsubseteq c$; we then obtain $a + b = a + \sup\limits_{i \in \omega} b_i = \sup\limits_{i \in \omega} (a + b_i) \sqsubseteq c$.
\item \label{c5} Similarly, if $a_0 \sqsupseteq a_1 \sqsupseteq \ldots$ and $b_0 \sqsupseteq b_1 \sqsupseteq \ldots$ are decreasing chains in $S$ with meets $a$ and $b$, respectively, then, $a+b$ is the meet of $a_0 + b_0 \sqsupseteq a_1 + b_1 \sqsupseteq \ldots$. This is proved similarly to (\ref{c4}) above.
\end{enumerate}
\end{rem}

\begin{exa}
\label{example-semirings}
In what follows, we consider the \emph{boolean semiring} $(\{0,1\},\vee,0,\wedge,1)$, the (partial) \emph{probabilistic semiring} $([0,1],+,0,*,1)$, the \emph{tropical semiring} ${\mathbb N} = (\mathbb N^\infty,\min,\infty,+,0)$ (with $\mathbb N^\infty = \mathbb N \cup \{\infty\}$) and its bounded variants $S_B = ([0,B] \cup \{\infty\},\min,\infty,+_B,0)$ with $B \in \mathbb N$, where for $m,n \in [0,B]\cup \{\infty\}$ we have
\begin{eqnarray*}
m +_B n = \begin{cases} m + n, \text{ if } m + n \le B\\
\infty, \text{ otherwise} \end{cases} .
\end{eqnarray*} The associated pre-orders are $\le$ on $\{0,1\}$ and $[0,1]$, and $\ge$ on $\mathbb N^\infty$ and $[0,B] \cup \{\infty\}$. All these pre-orders satisfy Assumption~\ref{ass:cpo}.
\end{exa}

A partial commutative semiring $S$ satisfying Assumption~\ref{ass:cpo} induces a \emph{semiring monad} $(\T_S,\eta,\mu)$ with
\begin{align*}
\T_S X &= \textstyle\{\, \varphi : X \to S \mid \supp(\varphi) \text{ is finite}\,, \sum\limits_{x \in \supp(\phi)} \phi(x) \text{ is defined}\,\}\,,\\
(\T_S f)(\sum_{i \in I} c_i x_i) &= \sum_{i \in I} c_i f(x_i) \text{ for } f : X \to Y,\\
\eta_X(x)(y) &= \begin{cases}1 & \text{if } y = x \\
0 & \text{otherwise} \end{cases}, \\
\mu_X(\Phi)(x) &= \textstyle\sum\limits_{\varphi \in \supp(\Phi)}\Phi(\varphi) \bullet \varphi(x) \text{ for } \Phi \in \T_S^2 X,
\end{align*}
Here, $\supp(\varphi) = \{ x \in X \mid \varphi(x) \ne 0\}$ is the \emph{support} of $\varphi$, and we use the formal sum notation $\sum_{i \in I} c_i x_i$, with $I$ finite, to refer to the element of $\T_S X$ mapping 
 $x \in X$ to $\sum_{j \in J_x} c_j$ with $J_x = \{ i \in I \mid x_i = x\}$. To see that the sum $\textstyle\sum\limits_{\varphi \in \supp(\Phi)}\Phi(\varphi) \bullet \varphi(x)$ used in the definition of the monad multiplication is always defined, note that since both $\textstyle\sum\limits_{\varphi \in \supp(\Phi)}\Phi(\varphi)$ and $\sum\limits_{x \in \supp(\phi)} \phi(x)$ with $\phi \in \supp(\Phi)$ are defined, then by (\ref{c3}) of Remark~\ref{rem-semiring}, so is $\textstyle\sum\limits_{\varphi \in \supp(\Phi)}\Phi(\varphi) \bullet (\sum\limits_{x \in \supp(\phi)} \phi(x)) =\textstyle\sum\limits_{\varphi \in \supp(\Phi)} \sum\limits_{x \in \supp(\phi)} \Phi(\varphi) \bullet \phi(x)$. It then follows by (\ref{c2}) of Remark~\ref{rem-semiring} that for $x \in X$, $\textstyle\sum\limits_{\varphi \in \supp(\Phi)}\Phi(\varphi) \bullet \phi(x)$ is itself defined.

The monad $\T_S$ above is \emph{strong} (as in fact all monads on $\Set$ have this property) and \emph{commutative} (see e.g.~\cite{CoumansJ2011}), with \emph{strength} $\st_{X,Y} : X \times \T_S Y \to \T_S(X \times Y)$ and \emph{double strength} $\dst_{X,Y} : \T_S X \times \T_S Y \to \T_S(X \times Y)$ given by
\begin{eqnarray*}
\st_{X,Y}(x,\psi)(z,y) & = &\begin{cases} \psi(y) & \text{if } z = x \\
0 & \text{otherwise}
\end{cases}, \qquad 
\dst_{X,Y}(\varphi,\psi)(z,y) = \varphi(z) \bullet \psi(y)\,.
\end{eqnarray*}

We write $1 = \{*\}$ for a final object in the category $\Set$; this should not be confused with the unit of the semiring multiplication. We immediately note that $\T_S 1 = S$. In what follows, we will use $\T_S 1$ and $S$ interchangeably, in particular $\T_S 1$ will be used in contexts where its (free) $\T_S$-algebra structure (given by $\mu_1 : \T_S^2 1 \to \T_S 1$) is relevant.

The relationship between monads and partial semirings was thoroughly studied in \cite{CoumansJ2011,Cirstea17}. We use semiring monads to model branching, with the semirings in Example~\ref{example-semirings} accounting for finite non-deterministic, probabilistic and weighted branching. In the latter case, we think of the weights as costs associated to single computation steps, with the bounded variants of the tropical semiring imposing an upper limit on the maximum allowable costs.

\begin{rem}
\label{partially-additive-rem}
Our earlier work \cite{Cirstea14,Cirstea15} was parameterised by a so-called \emph{partially additive monad}. The connection with the partial semiring monads used here is as follows: any commutative, partially additive monad which is, in addition, finitary, is isomorphic to a partial semiring monad (see \cite[Remark~2.4]{Cirstea15}). While our earlier work also covers the unbounded powerset monad (as an additive monad), we are not aware of any other examples of non-finitary monads to which the results in \cite{Cirstea14,Cirstea15} apply.
\end{rem}

\subsection{Algebras and Coalgebras}

\begin{defi}
Given an endofunctor $F : \Set \to \Set$, an \emph{$F$-algebra} is a pair $(A,\alpha)$ with $A$ a set and $\alpha : F A \to A$ a function, while an \emph{$F$-algebra homomorphism} between $F$-algebras $(A,\alpha)$ and $(B,\beta)$ is given by a function $f : A \to B$ such that $f \circ \alpha = \beta \circ F f$. Also, an \emph{$F$-coalgebra} is a pair $(C,\gamma)$ with $C$ a set (of states) and $\gamma : C \to F C$ a function called \emph{transition map}, while an \emph{$F$-coalgebra homomorphism} between $F$-coalgebras $(C,\gamma)$ and $(D,\delta)$ is given by a function $g : C \to D$ such that $F g \circ \gamma = \delta \circ g$. 
\end{defi}

\begin{exa}
\label{ex:lin-beh}~
\begin{enumerate}
\item When $F = A \times \Id$, $F$-coalgebras are in one-to-one correspondence with deterministic, labelled transition systems. Each state in such a system has a unique, labelled transition to another state. 
\item When $F = \{*\} + A \times \Id$, $F$-coalgebras are in one-to-one correspondence with deterministic, labelled transition systems with termination. Each state in such a system has either a terminating transition or a unique, labelled transition to another state.
\item When $F = A \times \Id \times \Id$, each state in an $F$-coalgebra has a unique, labelled transition resulting in \emph{two} successor states.
\item \label{dyn} When $F =  \{*\} + A \times \Id + B \times \Id \times \Id$, $F$-coalgebras model (deterministic) systems whose execution in a given state can result in either termination, or a \emph{single} successor state, or \emph{two} successor states. The latter can be used to model systems whose structure can evolve dynamically: one can think of the second successor state as modelling the creation of a new process.
\end{enumerate}
\end{exa}

The coalgebras of interest to us are $\T_S \circ F$-coalgebras, where $\T_S : \Set \to \Set$ is a semiring-valued monad (see Section~\ref{monads-semirings}) and $F : \Set \to \Set$ is a \emph{polynomial} functor (that is, $F$ is constructed from identity and constant functors using \emph{finite} products and set-indexed coproducts). We assume, w.l.o.g., that $F = \coprod_{i \in I} \Id^{j_i}$, with $j_i \in \omega$ for $i \in I$. We note that any polynomial endofunctor on $\Set$ is naturally isomorphic to a coproduct of finite (including empty) products of identity functors.

We write $(Z,\zeta)$ for the \emph{final} $F$-coalgebra (that is, a final object in the category of $F$-coalgebras and $F$-coalgebra homomorphisms). Its existence, under the assumption that $F$ is a polynomial functor, follows from \cite[Theorem~10.1]{Rutten00}. We also write $(I,\iota)$ for the \emph{initial} $F$-algebra (that is, an initial object in the category of $F$-algebras and $F$-algebra homomorphisms); its existence, for polynomial functors $F$, was proved in \cite{Lambek68}. Both $\zeta : Z \to F Z$ and $\iota : F I \to I$ are isomorphisms. We will refer to the elements of $Z$ ($I$) as \emph{maximal} (resp.~\emph{finite}) \emph{traces}. Both finite and maximal traces are completed traces, with maximal traces including, in addition to the finite traces, also any infinite ones; this is illustrated in the next example.

\begin{exa}
When $F = A \times \Id$, there are \emph{no} finite traces (the initial $F$-algebra has an empty carrier), whereas maximal traces are given by infinite sequences of elements of $A$. Changing $F$ to $\{*\} + A \times \Id$ results in finite traces being given by finite sequences of elements of $A$ (these capture terminating behaviours ending in $\iota_1(*)$), whereas maximal traces are given by either finite or infinite sequences of elements of $A$ (the latter capture non-terminating behaviours).
\end{exa}

A notion of \emph{(maximal) path} from a state in a $\T_S \circ F$-coalgebra (similar to that in \cite{Cirstea11}), as well as a notion of \emph{path fragment} from a state, are defined below.

\begin{defi}[Path, path fragment]
\label{path-fragm-def}
Let $F = \coprod_{i \in I} \Id^{j_i}$ with $j_i \in \omega$ for $i \in I$, and let $C$ be a set (of states). A \emph{path} is an element of the final $C \times F$-coalgebra $(Z_C,\zeta_C)$, while a \emph{path fragment} is an element of the initial $C  \times (\{*\} + F)$-algebra $(I_C,\iota_C)$. The \emph{depth} of a path fragment $q \in I_C$ is defined inductively by:
\begin{itemize}
\item if $\pi_2(\iota^{-1}_C(q)) = \iota_1(*)$, then $\depth(q) = 0$;
\item if $\pi_2(\iota^{-1}_C(q)) = \iota_2(\iota_i(*))$ with $j_i = 0$, then $\depth(q) = 1$;
\item if $\pi_2(\iota^{-1}_C(q)) = \iota_2(\iota_i (q_1,\ldots,q_{j_i}))$ for some $i \in I$ with $j_i > 0$, then $\depth(q) = 1 + \max(\depth(q_1),\ldots,\depth(q_{j_i}))$.
\end{itemize}
\end{defi}
A path thus corresponds to a possibly infinite tree, with each node annotated by a pair $(c,i)$ with $c \in C$ and $i \in I$, and having $j_i$ immediate sub-trees. The leaf nodes in such a tree are of the form $(c,i)$ with $j_i = 0$.  As $\iota_C : C \times (\{*\} + F I_C) \to I_C$ is an isomorphism, a path fragment is given by a \emph{finite} tree, with non-leaf nodes similar to those in a path but with an additional type of leaf node, namely one annotated by elements of $C$ only.

\begin{defi}
\label{defi:pref}
A path fragment $q$ is said to be a \emph{prefix} of a path $p$ if
\begin{enumerate}
\item $\pi_1(\iota^{-1}_C(q)) = \pi_1(\zeta_C(p))$,
\item either $\pi_2(\iota^{-1}_C(q)) = \iota_1(*)$, or $\pi_2(\iota^{-1}_C(q)) = \iota_2(\iota_i (q_1,\ldots,q_{j_i}))$ for some $i \in I$ and moreover, $\pi_2(\zeta_C(p)) = \iota_i (p_1,\ldots,p_{j_i})$ and $q_k$ is a prefix of $p_k$ for $k \in \{1,\ldots,j_i\}$.
\end{enumerate}
We write $\pref(p)$ for the set $\{q \mid q \text{ is a prefix of } p \}$. 
\end{defi}
Thus, $q$ is a prefix of $p$ if the finite tree associated to $q$ is a subtree of the possibly infinite tree associated to $p$. One can define when a path fragment $q$ is a prefix of another path \emph{fragment} $p$ in a similar way, namely by replacing the path $p$ with a path fragment, and the use of $\zeta_C$ by $\iota_C^{-1}$ in Definition~\ref{defi:pref}. Two path fragments are then said to be \emph{compatible} if they are both prefixes of some other path fragment, and are said to be \emph{incompatible} otherwise.

\begin{defi}[Path, path fragment in $(C,\gamma)$]
\label{defi-path}
Let $(C,\gamma)$ be a $\T_S \circ F$-coalgebra. A \emph{path from $c \in C$ in $(C,\gamma)$} is a path $p \in Z_C$ such that
\begin{itemize}
\item $\pi_1(\zeta_C(p)) = c$,
\item if $\pi_2(\zeta_C(p)) = \iota_i(p_1,\ldots,p_{j_i})$ and $\pi_1 ( \zeta_C(p_k)) = c_k$ for $k \in \{1,\ldots,j_i\}$, then\linebreak$\gamma(c)(\iota_i(c_1,\ldots,c_{j_i})) \ne 0$ and moreover, $p_k$ is a path from $c_k$ in $(C,\gamma)$, for $k \in \{1,\ldots,j_i\}$.
\end{itemize}
A \emph{path fragment from $c \in C$ in $(C,\gamma)$} is an element $q$ of the initial $C  \times (\{*\} + F)$-algebra $(I_C,\iota_C)$ such that
\begin{itemize}
\setcounter{enumi}{2}
\item $\pi_1(\iota_C^{-1}(q)) = c$,
\item either $\pi_2(\iota_C^{-1}(q))\!= \iota_1(*)$, or $\pi_2(\iota_C^{-1}(q))\!= \iota_2(\iota_i(q_1,\ldots,q_{j_i}))$ for some $i \in I$ with $\pi_1 ( \iota_C^{-1}(q_k))\!= c_k$ for $k \in \{1,\ldots,j_i\}$ and $\gamma(c)(\iota_i(c_1,\ldots,c_{j_i})) \ne 0$ and $q_k$ a path fragment from $c_k$ in $(C,\gamma)$, for $k \in \{1,\ldots,j_i\}$.
\end{itemize}
The set of all paths from $c \in C$ in $(C,\gamma)$ is denoted $\Paths_c$.  We also write $(\Paths_C,\zeta_C)$ for the $C \times F$-subcoalgebra of $(Z_C,\zeta_C)$ whose elements are paths from some state of $(C,\gamma)$. (That this is a subcoalgebra follows immediately from the definition of a path from $c \in C$.) For simplicity of notation, the dependency on $\gamma$ is left implicit in both cases.
\end{defi}
The difference between Definitions~\ref{path-fragm-def} and \ref{defi-path} is that the latter requires \emph{non-zero} weights for all transitions belonging to paths/path fragments in $(C,\gamma)$. One consequence of this, which will be exploited later, is that there are only \emph{finitely many} path fragments of a given depth from any state of $(C,\gamma)$.

In what follows, we will write $(c,i)(c_1,\ldots,c_{j_i})$ for the path fragment with root $(c,i)$ and immediate leaves given by $(c_1,\iota_1(*)),\ldots,(c_{j_i},\iota_1(*))$, and $(c,i)(c_1,\ldots,c_{j_i})[A_1/c_1,\ldots,A_{j_i}/c_{j_i}]$ for the set of paths obtained by replacing each leaf node $(c_i,\iota_1(*))$ of $(c,i)(c_1,\ldots,c_{j_i})$ by one of the paths in $A_i$, with $i \in \{1,\ldots,j_i\}$. As a result, if any of the sets $A_i$ is empty, then so is $(c,i)(c_1,\ldots,c_{j_i})[A_1/c_1,\ldots,A_{j_i}/c_{j_i}]$.

\subsection{Semiring-Valued Relations and Relation Lifting}
\label{rel-lifting}

Throughout this section we fix a partial commutative semiring $(S,+,0,\bullet,1)$ satisfying Assumption~\ref{ass:cpo}.

We let $\Rel$ denote the category\footnote{To keep notation simple, the dependency on $S$ is left implicit.} with objects given by triples $(X,Y,R)$, where $R : X \times Y \to S$ is an \emph{$S$-valued relation}, and with arrows from $(X,Y,R)$ to $(X',Y',R')$ given by pairs of functions $(f,g)$ as below, such that $R \sqsubseteq R' \circ (f \times g)$:
\[\UseComputerModernTips\xymatrix{X \times Y \ar@{}[dr]|-{\sqsubseteq}\ar[r]^-{f \times g} \ar[d]_-{R} & X' \times Y' \ar[d]^-{R'}\\ S \ar@{=}[r] & S}\]
Here, the order $\sqsubseteq$ on $S$ has been extended pointwise to $S$-valued relations with the same carrier. We write $q : \Rel \to \Set \times \Set$ for the functor taking $(X,Y,R)$ to $(X,Y)$ and $(f,g)$ to itself. It follows easily that $q$ is a fibration, with reindexing functors $(f,g)^* : \Rel_{X',Y'} \to \Rel_{X,Y}$ taking $R' : X' \times Y' \to S$ to $R' \circ (f \times g) : X \times Y \to S$. We also write $\Rel_{X,Y}$ for the \emph{fibre over $(X,Y)$}, i.e.~the subcategory of $\Rel$ with objects given by $S$-valued relations over $X \times Y$ and arrows given by $(\id_X,\id_Y)$.

\begin{defi}[Relation lifting]
An \emph{$S$-valued relation lifting} for a functor $F : \Set \to \Set$ is a \emph{fibred} functor\footnote{That is, a functor which preserves reindexings.} $L : \Rel \to \Rel$ making the following diagram commute:
\[\UseComputerModernTips\xymatrix{
\Rel \ar[d]_-{q} \ar[r]^-{L} & \Rel \ar[d]^-{q} \\
\Set \times \Set \ar[r]_-{F \times F} & \Set \times \Set}\]
\end{defi}

\cite{Cirstea17} shows how to canonically lift polynomial endofunctors to the category of $S$-valued relations. The definition of the lifting makes use of the partial semiring structure on $S$.

\begin{defi}
\label{rel-lift-pol}
For a polynomial endofunctor $F : \Set \to \Set$, the \emph{relation lifting} $\Rel(F) : \Rel \to \Rel$ is defined by structural induction on $F$:
\begin{itemize}
\item If $F = \Id$, $\Rel(F)$ takes an $S$-valued relation to itself.
\item If $F = C$, $\Rel(F)$ takes an $S$-valued relation to the \emph{equality relation} $\Eq(C) : C \times C \to S$ given by
\[\Eq_C(c,c') ~=~ \begin{cases} 1, \text{ if }c = c' \\
0, \text{ otherwise} \end{cases} .\]
\item If $F = F_1 \times F_2$,  $\Rel(F)$ takes an $S$-valued relation $R : X \times Y \to S$ to:
\[\!\!{\small \UseComputerModernTips\xymatrix@-1.2pc{(F_1 X \times F_2 X)\!\times\!(F_1 Y \times F_2 Y) \ar[rrr]^-{\langle \pi_1 \times \pi_1,\pi_2 \times \pi_2 \rangle} & & & (F_1 X \times F_1 Y) \times (F_2 X \times F_2 Y) \ar[rrrrr]^-{\Rel(F_1)(R) \times \Rel(F_2)(R)} & & & & & S \times S \ar[r]^-{\bullet} & S} .}\]
The functoriality of this definition follows from the preservation of $\sqsubseteq$ by $\bullet$.
\item if $F = \coprod_{i \in I}F_i$, $\Rel(F)(R) : (\coprod_{i \in I}F_i X) \times (\coprod_{i \in I}F_i Y) \to S$ is defined by case analysis:
\begin{align*}
\Rel(F)(R)(\iota_i(u),\iota_j(v)) & ~=~ \begin{cases} \Rel(F_i)(R)(u,v), & \text{ if } i = j\\
0, & \text{ otherwise} \end{cases}
\end{align*}
for $i,j \in I$, $u \in F_i X$ and $v \in F_j Y$. 
\end{itemize}
\end{defi}
It follows immediately from the above definition that $q \circ \Rel(F) = (F \times F) \circ q$. Moreover, an easy inductive proof shows that:
\begin{enumerate}
\item $\Rel(F)$ is a fibred functor,
\item $\Rel(F)$ preserves joins of increasing countable chains and meets of decreasing countable chains in each fibre of $q$.
\end{enumerate}

\begin{rem}
When $S = (\{0,1\},\vee,0,\wedge,1)$, $S$-valued relations $R : X \times Y \to S$ coincide with standard ones $R \subseteq X \times Y$. In this case, the notion of relation lifting of Definition~\ref{rel-lift-pol} also coincides with the standard one, as described e.g.~in \cite{JacobsBook}[Chapter 3]. 
\end{rem}

A special relation lifting called \emph{extension lifting}  is defined in \cite{Cirstea17} for any commutative, partially additive monad $\T$. The extension lifting $\E_\T : \Rel \to \Rel$ lifts the endofunctor $\T \times \Id$ to $\Rel$
\[\UseComputerModernTips\xymatrix{
\Rel \ar[d]_-{q} \ar[r]^-{\E_\T} & \Rel \ar[d]^-{q} \\
\Set \times \Set \ar[r]_-{\T \times \Id} & \Set \times \Set}\]
in a canonical way. In the special case when $\T$ is the partial semiring monad $\T_S$, the extension lifting takes $R : X \times Y \to S$ to the relation $\E_{\T_S}(R) : \T_S (X) \times Y \to S$ given by
\begin{equation}
\label{extension-relation-lifting}
\E_{\T_S}(R)(\sum_{i \in I} c_i x_i,y) = \sum_{i \in I} c_i \bullet R(x_i,y)
\end{equation}
with $c_i \in S$ and $x_i \in X$ for $i \in I$, and $y \in Y$. (Note here that the definedness of $\sum_i c_i \bullet R(x_i,y)$ follows from the definedness of $\sum_{i \in I} c_i$ by (\ref{c3}) of Remark~\ref{rem-semiring}.) Fibredness of $\E_{\T_S}$ also follows directly from its definition.

\begin{rem}
An alternative definition of the extension relation lifting, applicable to any strong monad $\T$, maps $R : X \times Y \to \T 1$ to
\[\UseComputerModernTips\xymatrix@+0.5pc{
\T X \times Y \ar[r]^-{\st'_{X,Y}} & \T(X \times Y) \ar[r]^-{\T (R)} & \T^2 1 \ar[r]^-{\mu_1} & \T 1
}\]
This general definition is canonical in the sense that the relation lifting of $R$ is its unique extension to a left-linear map -- note that both $\T X$ and $\T 1$ are the carriers of free $\T$-algebras. It is straightforward to check that, for partial semiring monads, the two definitions coincide.
\end{rem}

\subsection{Semiring-Valued Predicates and Predicate Lifting}
\label{pred-lifting}

The standard approach to defining the semantics of modal and fixpoint logics involves interpreting formulas as predicates over the state space of the system of interest. In the coalgebraic approach to modal logic, individual modal operators are interpreted using so called \emph{predicate liftings} \cite{Pattinson03}. In order to follow the same approach for \emph{quantitative} logics, we work with predicates valued in the partial commutative semiring $(S,+,0,\bullet,1)$ used to model branching. A similar approach is taken in \cite{Schroder:2011:DLF}, where \emph{fuzzy predicate liftings}, valued in the unit interval, are used to provide a semantics to fuzzy description logics. The more general notion of predicate lifting considered here is also implicit in some of the earlier work on coalgebraic logic, e.g.~\cite{Klin2005,Schroeder2008}.

We let $\SPred$ denote the category with objects given by pairs $(X,P)$ with $P : X \to S$ an \emph{$S$-valued predicate}, and arrows from $(X,P)$ to $(X',P')$ given by functions $f : X \to X'$ such that $P \sqsubseteq P' \circ f$:
\[\UseComputerModernTips\xymatrix{X \ar@{}[dr]|-{\sqsubseteq}\ar[r]^-{f} \ar[d]_-{P} & X' \ar[d]^-{P'}\\ S \ar@{=}[r] & S}\]
As with $S$-valued relations, we obtain a fibration $p : \Pred \to \Set$, with $p$ taking $(X,P)$ to $X$ and $f$ to itself. The fibre over $X$ is denoted $\Pred_X$, and the reindexing functor $f^* : \Pred_{X'} \to \Pred_{X}$ takes $P' : X' \to S$ to $P' \circ f : X \to S$.

The next definition generalises \emph{monotone} predicate liftings, as used in the semantics of coalgebraic modal logics \cite{Pattinson03}, to a quantitative setting.

\begin{defi}[Predicate lifting]
\label{def-pred-lifting}
An \emph{($S$-valued) predicate lifting} of arity $n \in \omega$ for an endofunctor $F : \Set \to \Set$  is a fibred functor $L : \Pred^n \to \Pred$ making the following diagram commute:
\[\UseComputerModernTips\xymatrix{
\Pred^n \ar[d]_-{p} \ar[r]^-{L} & \Pred \ar[d]^-{p} \\
\Set \ar[r]_-{F} & \Set}\]
where the category $\Pred^n$ has objects given by tuples $(X,P_1,\ldots,P_n)$ with $P_i : X \to S$ for $i \in \{1,\ldots,n\}$, and arrows from $(X,P_1,\ldots,P_n)$ to $(X',P_1',\ldots,P_n')$ given by functions $f : X \to X'$ such that $P_i \sqsubseteq P'_i \circ f$ for all $i \in \{1,\ldots,n\}$.
\end{defi}

\begin{rem}
It is not difficult to see that predicate liftings as defined above coincide with \emph{monotone} predicate liftings in the sense of \cite{Pattinson03}, suitably generalised to an $S$-valued setting, that is, with monotone natural transformations $l : (S^{^{\_}})^n \Longrightarrow S^{^{\_}} \circ F$, where $S^{^{\_}} : \Set \to \Set$ is the contravariant functor mapping $X$ to the set of $S$-valued functions on $X$. Here, a natural transformation $l$ as above is \emph{monotone} if whenever $f_i \sqsubseteq g_i$ with $f_i, g_i : X \to S$ and $i \in \{1,\ldots,n\}$, then also $l(f_1,\ldots,f_n) \sqsubseteq l(g_1,\ldots,g_n)$. The naturality of standard predicate liftings is here captured by the fibredness requirement on $\L$, whereas monotonicity corresponds to the functoriality of $\L$.
\end{rem}

We now restrict attention to polynomial functors $F : \Set \to \Set$, and show how to define a canonical \emph{set} of predicate liftings for $F$ by induction on its structure. The next definition exploits the observation that any polynomial endofunctor is naturally isomorphic to a coproduct of finite (including empty) products of identity functors. 

\begin{defi}
\label{canonical-pred-lift}
Let $F = \coprod_{i \in I} \Id^{j_i}$, with $j_i \in \omega$ for $i \in I$. The set of predicate liftings $\Lambda = \{ L_i \mid i \in I\}$ has elements $L_i : \Pred^{j_i} \to \Pred$ with $i \in I$ given by:
\[(L_i)_X(P_1,\ldots,P_{j_i})(f) ~=~ \begin{cases} P_1(x_1) \bullet \ldots \bullet P_{j_i}(x_{j_i}),  & \text{ if } f = (x_1,\ldots,x_{j_i}) \in \iota_{i}(\Id^{j_i}) \\ 0 & \text{ otherwise} \end{cases}. \]
\end{defi}
The functoriality of this definition follows from the preservation of $\sqsubseteq$ by $\bullet$. The fact that each $L_i$ is a fibred functor follows directly from its definition. As a result of Assumption~\ref{ass:cpo}, all these predicate liftings preserve joins of increasing countable chains and meets of decreasing countable chains in each argument.

\begin{rem}
\label{lambda-cont-cocont}
The predicate liftings $L_i$ of Definition~\ref{canonical-pred-lift} preserve joins of increasing countable chains as well as meets of decreasing countable chains in each fibre of $p$, in each argument. This is immediate from the definition of $L_i$ and the preservation of such joins and meets by $\bullet$ in each argument.
\end{rem}

\begin{rem}
\label{rem:nabla}
When $S = (\{0,1\},\vee,0,\wedge,1)$, the predicate liftings of Definition~\ref{canonical-pred-lift} are essentially the same as the \emph{Nabla modality} of \cite{Moss1999}. 
\end{rem}
\begin{exa}
\label{trees}
For $F = \{*\} + A \times \Id \times \Id \,\simeq\, \{*\} + \coprod_{a \in A} \Id \times \Id$, states in $F$-coalgebras unfold to (potentially infinite) binary trees with internal nodes labelled by elements of $A$ and leaves not carrying any label. Definition~\ref{canonical-pred-lift} yields a nullary predicate lifting $L_0$ and an $A$-indexed set of binary predicate liftings $(L_a)_{a \in A}$:
\begin{align*}
(L_0)_X(f) & ~=~ \begin{cases} 1, & \text{ if } f = \iota_1(*) \\
0, & \text{ otherwise} \end{cases},\\
(L_a)_X(P_1,P_2)(f) & ~=~ \begin{cases} P_1(x_1) \bullet P_2(x_2), & \text{ if } f = \iota_a(x_1,x_2) \\ 0, & \text{ otherwise} \end{cases}.
\end{align*}
\end{exa}
As we are interested in \emph{linear-time} logics, a special \emph{extension lifting}, akin to the extension relation lifting of Section~\ref{rel-lifting}, will be used to abstract away branching.
\begin{defi}
\label{def-ext-pred-lifting}
Let $(S,+,0,\bullet,1)$ be a partial commutative semiring with associated monad $\T_S$. The \emph{extension predicate lifting} $\E_{\T_S} : \Pred \to \Pred$ is the lifting of $\T_S : \Set \to \Set$ to $\Pred$
\[\UseComputerModernTips\xymatrix{
\Pred \ar[d]_-{p} \ar[r]^-{\E_{\T_S}} & \Pred \ar[d]^-{p} \\
\Set \ar[r]_-{\T_S} & \Set}\]
which takes $P : X \to S$ to the predicate $\E_{\T_S}(P) : \T_S X \to S$ given by
\[\sum_{i \in I}c_i x_i ~\mapsto~ \sum_{i \in I} c_i \bullet P(x_i)\]
with $c_i \in S$ for $i \in I$ being such that $\sum_{i \in I} c_i$ is defined, and with $x_i \in X$ for $i \in I$.
\end{defi}
It then follows directly from the definition that $\E_{\T_S}$ is a fibred functor. It also follows from Assumption~\ref{ass:cpo} together with (\ref{c4}) and (\ref{c5}) of Remark~\ref{rem-semiring} that $\E_{\T_S}$ preserves the joins of increasing chains $P_0 \sqsubseteq P_1 \sqsubseteq \ldots$ and the meets of decreasing chains $P_0 \sqsupseteq P_1 \sqsupseteq \ldots$ in each fibre of $p$.

\begin{exa}
For $S = (\{0,1\},\vee,0,\wedge,1)$, the predicate lifting $\E_{\T_S}$ takes a standard predicate $P \subseteq X$ to the predicate $\{ Y \subseteq X \mid Y \text{ finite}\,,\,  Y \cap P \ne \emptyset\} \subseteq \FPow X$, where $\FPow : \Set \to \Set$ is the finite powerset functor (naturally isomorphic to $\T_S$). This corresponds to the standard $\Diamond$ modality. For $S = ([0,1], +, 0, *, 1)$, the predicate lifting $\E_{\T_S}$ takes a predicate $P : X \to [0,1]$ to the predicate $\E_{\T_S}(P) : \T_S X \to [0,1]$ given by $\E_{\T_S}(P)(\sum_{i \in I} c_i x_i) = \sum\limits_{i \in I} (c_i * P(x_i))$. For $S = (\mathbb N^\infty, \min, \infty, +, 0)$, the predicate lifting $\E_{\T_S}$ takes a predicate $P : X \to \mathbb N^\infty$ to the predicate $\E_{\T_S}(P) : \T_S X \to \mathbb N^\infty$ given by $\E_{\T_S}(P)(\sum_{i \in I} c_i x_i) = \min\limits_{i \in I} (c_i + P(x_i))$.
\end{exa}

\begin{rem}
\label{rem-ext}
$\E_{\T_S}(P)$ can alternatively be defined as $\mu_1 \circ \T_S P$, for $P : X \to S$.
\end{rem}

\subsection{Finite and Maximal Traces via Relation Lifting}

We now summarise the definitions of \emph{finite trace behaviour} and \emph{maximal trace behaviour} of a state in a coalgebra with branching, as defined in \cite{Cirstea17}. The approach in loc.\,cit.~applies to coalgebras of functors obtained as compositions of a single partially additive monad and a finite number of polynomial endofunctors on $\Set$. For simplicity, here we restrict attention to compositions of type $\T_S \circ F$, with $S$ a partial commutative semiring and $F : \Set \to \Set$ a polynomial functor.

The notion of \emph{coalgebraic bisimulation} provides a canonical and uniform observational equivalence relation between states of $F$-coalgebras. One of the many (and largely equivalent) definitions of bisimulation involves lifting the functor $F$ to the category of two-valued relations (obtained in our setting by taking $S = (\{0,1\},\vee,0,\wedge,1)$). Then, \emph{$F$-bisimilarity}, defined as the largest bisimulation between the states of two $F$-coalgebras, can be characterised as the greatest fixpoint of a monotone operator on the complete lattice of relations between the underlying state spaces \cite{JacobsBook}. A similar approach is taken in \cite{Cirstea17} to define the \emph{extent} to which a state in a coalgebra with branching can exhibit a given maximal/finite trace. The definition in loc.\,cit.~differs from the above characterisation of bisimilarity in two ways: (i) $S$-valued relations are used in place of two-valued relations, and (ii) the relation lifting employed also involves the extension relation lifting $\E_{\T_S}$ defined earlier, as the goal is to relate \emph{branching-time} behaviours with \emph{linear-time} ones, as opposed to relating behaviours of the same coalgebraic type.

\begin{defiC}[\cite{Cirstea17}]
\label{def-max-trace-beh}
Let $(Z,\zeta)$ denote the final $F$-coalgebra. The \emph{maximal trace behaviour} of a state in a $\T_S \circ F$-coalgebra $(C,\gamma)$ is the greatest fixpoint $\tr_\gamma : C \times Z \to S$ of the monotone operator $\Op : \Rel_{C,Z} \to \Rel_{C,Z}$ given by the composition
\begin{equation}
\UseComputerModernTips\xymatrix@+0.5pc{
\Rel_{C,Z} \ar[r]^-{\Rel(F)} & \Rel_{F C,F Z} \ar[r]^-{\E_{\T_S}} & \Rel_{\T_S F C, F Z}  \ar[r]^-{(\gamma \times \zeta)^*} & \Rel_{C,Z}}
\end{equation}
Two states $c,d \in C$ are said to be \emph{maximal trace equivalent}, written $c \simeq_\tr d$, iff $\tr_\gamma(c,z) = \tr_\gamma(d,z)$ for all $z \in Z$.
\end{defiC}
We spell out the definition of the operator $\Op$ in Definition~\ref{def-max-trace-beh}, based on the definitions of $\Rel(F)$ (Definition~\ref{rel-lift-pol}) and $\E_{\T_S}$ (see (\ref{extension-relation-lifting})). For an $S$-valued relation $R : C \times Z \to S$ and $(c,z) \in C \times Z$ with $\gamma(c) = \sum\limits_{i \in I} s_i \iota_{\lambda_i}(c_1^i,\ldots,c_{j_i}^i)$ and $\zeta(z) = \iota_{\lambda}(z_1,\ldots,z_{\arity(\lambda)})$, we have:
\begin{align*}
\Op(R)(c,z) = \sum\limits_{i \in I \,\text{s.t.}\, \lambda_i = \lambda}s_i \bullet R(c^i_1,z_1) \bullet \ldots \bullet R(c^i_{\arity(\lambda)},z_{\arity(\lambda)}) \,.
\end{align*}
Thus, if the trace $z$ prescribes $\lambda$ as the next transition and then continues with traces $z_1,\ldots,z_{\arity(\lambda)}$, then only $\lambda$-transitions from $c$ are taken into account when defining $\Op(R)(c,z)$. This value is obtained by summing, across all $\lambda$-transitions from $c$ ($\lambda_i = \lambda$), the values $R(c^i_1,z_1) \bullet \ldots \bullet R(c^i_{\arity(\lambda)},z_{\arity(\lambda)})$ weighted with the corresponding transition weights ($s_i$).

The existence of a greatest fixpoint for $\Op$ follows from the Knaster-Tarski theorem \cite{Tarski55}. This result will be used repeatedly in what follows.

\begin{thmC}[{\cite{Tarski55}}]
\label{thm:tarski}
Let $\Op : L \to L$ be a monotone operator on a complete lattice $(L,\sqsubseteq)$. Then, the set of fixpoints of $\Op$ forms a complete lattice, also under $\sqsubseteq$.
\end{thmC}
Under the assumptions of Theorem~\ref{thm:tarski}, the following result provides a way to obtain the least and greatest fixpoint of $\Op$.

\begin{thmC}[{\cite{Cousot79}}]
\label{thm:cousot}
Let $\Op : L \to L$ be a monotone operator on a complete lattice $(L,\sqsubseteq)$. Consider the ascending chain $(\Op^\alpha(\bot))_\alpha$, with $\alpha$ ranging over the ordinals, defined by $\Op^0(\bot) = \bot$, $\Op^{\alpha+1}(\bot) = \Op(\Op^\alpha(\bot))$ for any ordinal $\alpha$, and $\Op^{\alpha}(\bot) = \bigsqcup_{\beta < \alpha} \Op^\beta(\bot)$ for $\alpha$ a limit ordinal. Then, the least fixpoint of $\Op$ is given by $\Op^\gamma(\bot)$ for some ordinal $\gamma$. The greatest fixpoint of $\Op$ is obtained dually, via a descending chain $(\Op^\alpha(\top))_\alpha$.
\end{thmC}

The fact that the operator $\Op$ of Definition~\ref{def-max-trace-beh} is monotone follows from the functoriality of each of $\Rel(F)$, $\E_{\T_S}$ and $(\gamma \times \zeta)^*$.

The definition of \emph{finite} trace behaviour simply replaces maximal traces by finite ones.

\begin{defiC}[\cite{Cirstea17}]
\label{def-finite-trace-beh}
Let $(I,\iota)$ denote the initial $F$-algebra. The \emph{finite trace behaviour} of a state in a $\T_S \circ F$-coalgebra $(C,\gamma)$ is the greatest fixpoint of the operator $\Op' : \Rel_{C,I} \to \Rel_{C,I}$ given by the composition
\begin{equation}
\UseComputerModernTips\xymatrix@+0.5pc{
\Rel_{C,I} \ar[r]^-{\Rel(F)} & \Rel_{F C,F I} \ar[r]^-{\E_{\T_S}} & \Rel_{\T_S F C, F I}  \ar[r]^-{(\gamma \times \iota^{-1})^*} & \Rel_{C,I}}
\end{equation}
\end{defiC}

The operator $\Op'$ is thus similar to the operator $\Op$, except that the final $F$-coalgebra is replaced by the initial $F$-algebra. We note that taking the \emph{least} fixpoint of $\Op'$ would yield an equivalent definition, since in this case the fixpoint is unique. Informally, this is because the elements of $I$ are \emph{finite} traces, and because the $n$th approximations $(\Op')^n(\bot)$ and $(\Op')^n(\top)$ of the least, respectively greatest fixpoint of $\Op'$ will coincide on pairs $(c,i) \in C \times I$ with $i$ a finite trace of depth at most $n$, for all $n \in \omega$ (and thus so will the least and greatest fixpoints of $\Op'$).

\begin{exa}~
\begin{enumerate}
\item For $S = (\{0,1\}, \vee, 0, \wedge, 1)$, the greatest fixpoint of $\Op$/$\Op'$ relates a state $c$ in a $\T_S \circ F$-coalgebra $(C,\gamma)$ with a maximal/finite trace $t$ iff there exists a sequence of choices in the unfolding of $\gamma$ starting from $c$ that results in an $F$-behaviour bisimilar to $t$.
\item For $S = ([0,1], +, 0, *, 1)$, the greatest fixpoint of $\Op$/$\Op'$ yields, for each state in a $\T_S \circ F$-coalgebra and each maximal/finite trace, the accumulated probability (across all branches) of this trace being exhibited. Here we note that, for \emph{infinite} maximal traces, the associated probability is often $0$. Arguably, this has limited usefulness, and a measure-theoretic definition that takes into account the accumulated probabilities of exhibiting \emph{finite prefixes} of infinite traces would in this case be more useful. The logics defined later in this paper do not suffer from this issue, as they allow expressing the probability of exhibiting certain \emph{sets} of traces, including traces with a given finite prefix.
\item For $S=(\mathbb N^\infty,\min,\infty,+,0)$, the greatest fixpoint of $\Op$/$\Op'$ maps a pair $(c,t)$, with $c$ a state in a $\T_S \circ F$-coalgebra and $t$ a maximal/finite trace, to the minimal cost of exhibiting $t$ from $c$. Intuitively, this is computed by adding the weights of individual transitions along the same branch, and minimising this sum across all the branches. Again, such minimal costs are often infinite for \emph{infinite} maximal traces, but in this case this simply reflects the fact that infinitely running systems have infinite costs. While resource gain could be modelled by considering coalgebras of a different type (namely $S \times (\T_S \circ F)$, with $F$ as before and with the first compoment being used to associate resource gains to coalgebra \emph{states}), we leave a detailed study of this more general case to future work.
\end{enumerate}
\end{exa}

\subsection{\texorpdfstring{$\mu$}{μ}- and \texorpdfstring{$\nu$}{ν}-Extents via Predicate Lifting}

We now define the notions of \emph{$\nu$-extent} and \emph{$\mu$-extent} of a coalgebra with branching, which generalise the non-emptiness of the set of maximal, respectively finite traces in a system with non-deterministic branching, to systems with quantitative branching. These notions were introduced in \cite{CirsteaSH17} in order to provide automata-based model checking techniques for the logics in \cite{Cirstea14,Cirstea15}. Here, the notion of $\nu$-extent will be key to defining a notion of linear-time behaviour of a state in a system with branching (Section~\ref{sem-char}), as well as a path-based semantics for quantitative, linear-time fixpoint logics interpreted over such systems (Section~\ref{path-based-sem}).

\begin{defiC}[\cite{CirsteaSH17}]
\label{extent-coalgebra}
The \emph{$\nu$-extent} (\emph{$\mu$-extent}) of a $\T_S \circ F$-coalgebra $(C,\gamma)$ is the greatest fixpoint (resp.~least fixpoint) of the operator on $S^C$ taking $p : C \to S$ to the composition
\begin{align*}
\UseComputerModernTips\xymatrix@-0.5pc{C \ar[r]^-{\gamma} & \T_S F C \ar[rr]^-{\T_S F p} & & \T_S F S \ar[rr]^-{\T_S(\bullet_F)} & & \T_S S = \T_S^2 1 \ar[r]^-{\mu_1} & \T_S 1 = S}
\end{align*}
where $\bullet_F : F S \to S$ is given by $\bullet_F(\iota_i(s_1,\ldots,s_{j_i})) = s_1 \bullet \ldots \bullet s_{j_i}$ for $i \in I$. We write $\extent^\nu_\gamma : C \to S$ for the $\nu$-extent of $(C,\gamma)$.
\end{defiC}
The above operator uses a one-step unfolding of the coalgebra structure to compute (a finer approximation of) the extent of a state based on (current approximations of) the extents of its immediate successors. As the generality of $F$ allows for immediate successors which are \emph{tuples} of states, the semiring multiplication also needs to be used (namely in $\bullet_F$). On the other hand, the monad multiplication is used to accumulate the values from different branches. The composition in Definition~\ref{extent-coalgebra} takes $c \in C$ with $\gamma(c) = \sum_i s_i (c_i^1,\ldots,c_i^{j_i})$ to $\mu_1(\sum_i s_i (p(c_i^1) \bullet \ldots \bullet p(c_i^{j_i})))$. As a result, the extent in state $c$ is a sum, across all transitions from $c$, of the extents of the successors of $c$, scaled by the weights of the corresponding transitions. In case of transitions with \emph{several} successor states, the extents of these states are first multiplied and then scaled by the transition weight.

\begin{exa}~
\label{exa:extent}
\begin{enumerate}
\item For $S = (\{0,1\},\vee,0,\wedge,1)$, the $\nu$- (resp.~$\mu$-)extent has a value of $1$ on a state iff there exists a maximal (resp.~finite) trace from that state, arising from a sequence of choices in the branching behaviour. (Such a trace will not exist e.g.~from a state which offers no choices for proceeding.)  
\item For $S = ([0,1],+,0,*,1)$, the $\nu$- (resp.~$\mu$-)extent on a state gives the probability of not deadlocking (resp.~not deadlocking and not executing forever) from that state. 
\item For $S = (\mathbb N^\infty,\min,\infty, +,0)$, the $\nu$- (resp.~$\mu$-)extent on a given state gives the minimal cost that can be achieved along a maximal (resp.~finite) trace from that state.
\end{enumerate}
\end{exa}

\begin{exaC}[\cite{CirsteaSH17}]
\label{mu-nu-extent-example}
Consider the $\T_S \circ F$-coalgebras below, with $F = \{*\} + A \times \Id$ and $S = ([0,1],+,0,*,1)$ (resp.~$S=(\mathbb N^\infty,\min,\infty,+,0)$), and note that in both cases, all finite traces must end with a transition from $y$ to $\iota_1(*)$. The $\nu$-extent maps $x$ to $\frac{2}{5}$, $y$ to $\frac{3}{5}$ and $z$ to $\frac{1}{5}$ (resp.~$x$ and $y$ to $1$ and $z$ to $0$), whereas the $\mu$-extent again maps $x$ to $\frac{2}{5}$, $y$ to $\frac{3}{5}$ and $z$ to $\frac{1}{5}$ (resp.~$x$ and $z$ to $4$ and $y$ to $2$). Intuitively, the reason for the $\mu$- and $\nu$-extents being the same in the probabilistic case is that the probability of never reaching $y$ from either $x$ or $z$ is $0$.
\begin{align*}
\UseComputerModernTips\xymatrix@-1pc{
& & & & x \ar@/_0.5pc/[dll]_-{\frac{1}{2},a} \ar@/^0.5pc/[drr]^-{\frac{1}{2},b} \\
& & y \ar[ll]_-{\frac{1}{2},*} \ar@/_1ex/[urr]_-{\frac{1}{4},c} & & & & z \ar@(u,r)[]^-{\frac{1}{2},c} \ar@/^1ex/[ull]^-{\frac{1}{4},c} &
} \qquad \UseComputerModernTips\xymatrix@-1pc{
& & & & x \ar@/_0.5pc/[dll]_-{2,a} \ar@/^0.5pc/[drr]^-{1,b} \\
& & y \ar[ll]_-{2,*} \ar@/_1ex/[urr]_-{0,c} & & & & z \ar@(u,r)[]^-{0,c} \ar@/^1ex/[ull]^-{0,c} &
}
\end{align*}
To see why the $\nu$-extents are as claimed in the probabilistic case, note that they are the greatest (and in this case unique) solution of the following system of equations:
\begin{align*}
\qquad 
\begin{bmatrix}
x & = & \frac{1}{2} * y + \frac{1}{2} * z\\
y & = & \frac{1}{2} + \frac{1}{4} * x\\
z & = & \frac{1}{4} * x + \frac{1}{2} * z
\end{bmatrix}
\end{align*}
\end{exaC}
It turns out that the notions of $\nu$- and $\mu$-extent can be used to give alternative characterisations of the notions of maximal trace behaviour, and respectively of finite trace behaviour. Specifically, these can be shown to coincide with the $\nu$-, respectively $\mu$-extents of certain "products" of the coalgebra in question with the final, respectively the initial $F$-algebra, where the latter two are now viewed as $\T_S \circ F$-coalgebras.

\begin{defiC}[\cite{CirsteaSH17}]
\label{simple-prod-aut}
The \emph{product} of $\T_S \circ F$-coalgebras $(C,\gamma)$ and $(D,\delta)$ is the $\T_S \circ F$-coalgebra with carrier $C \times D$ and transition function $\gamma \otimes \delta$ given by
\begin{align*}
\UseComputerModernTips\xymatrix{C \times D \ar[r]^-{\gamma \times \delta} & \T_S F C \times \T_S F D \ar[rr]^-{\dst_{F C,F D}} & & \T_S (F C \times F D) \ar[rr]^-{\langle F \pi_1,F\pi_2 \rangle^*} & & \T_S F (C \times D) }
\end{align*}
where $\dst$ is the double strength of the monad $\T_S$ (see Section~\ref{monads-semirings}) and $\langle F \pi_1,F\pi_2 \rangle^*$ is pre-composition with $\langle F \pi_1,F\pi_2 \rangle : F(C \times D) \to F C \times F D$.
\end{defiC}
The effect of pre-composing with $\langle F \pi_1,F\pi_2 \rangle$ is that pairs of non-matching one-step behaviours are discarded from the resulting coalgebra. Thus, the product coalgebra collects the common $F$-behaviour of the two coalgebras, suitably weighted according to the weights of the two coalgebras.

The following result, while new, resembles \cite[Theorem~14]{CirsteaSH17} in both its statement and its proof.

\begin{thm}
Let $(C,\gamma)$ be a $\T_S \circ F$-coalgebra, let $(Z,\zeta)$ denote the final $F$-coalgebra, and let $(I,\iota)$ denote the initial $F$-coalgebra. The following hold:
\begin{enumerate}
\item Both the $\mu$-extent and the $\nu$-extent of the product coalgebra $(C \times I,\gamma \otimes \iota)$ coincide with the finite trace behaviour $\ftr_\gamma : C \times I \to S$.
\item The $\nu$-extent of the product coalgebra $(C \times Z,\gamma \otimes \zeta)$ coincides with the maximal trace behaviour $\tr_\gamma : C \times Z \to S$.
\end{enumerate}
\end{thm}
\begin{proof}[{\bfseries Proof (sketch).}]
The proof is similar to that of \cite[Theorem~14]{CirsteaSH17} and involves showing that the operators used in the definition of the maximal/finite trace behaviour on the one hand, and of the $\mu$-/$\nu$-extent of the product automaton on the other, coincide.
\end{proof}

\subsection{Coalgebraic Fixpoint Logics}
\label{coalg-fixpoint-logics}

We now recall how a \emph{two-valued} fixpoint logic for $F$-coalgebras can be defined from a set of monotone predicate liftings for $F$ \cite{CKP2011}. A fragment of this logic, which does not contain conjunctions and arbitrary disjunctions, will later (Section~\ref{fixpoint-logic}) also be given a quantitative interpretation over $\T_S \circ F$-coalgebras, with $F : \Set \to \Set$ a polynomial functor and $\T_S : \Set \to \Set$ as before. Given a functor $F : \Set \to \Set$, a ($\{0,1\}$-valued) fixpoint logic for $F$-coalgebras is parameterised by a set of $\{0,1\}$-valued predicate liftings for $F$ (see Definition~\ref{def-pred-lifting}). Concretely, given a set $\Lambda$ of modal operators with associated $\{0,1\}$-valued, monotone predicate liftings $(\lsem \lambda \rsem)_{\lambda \in \Lambda}$ for $F$, the associated fixpoint logic has syntax:
\begin{eqnarray*}\textstyle
\!\!\!\!\!\!\!\mu\LL_\Lambda^\V \ni \varphi ::= \bot \,\mid\, \top \,\mid\, x \,\mid\, \langle \lambda  \rangle(\varphi_1,\ldots,\varphi_{\arity(\lambda)}) \,\mid\, \phi \vee \psi \,\mid\, \phi \wedge \psi \,\mid\, \mu x .\varphi \,\mid\, \nu x.\varphi
\end{eqnarray*}
with $x \in \V$ for some set $\V$ of variables. Here, $\arity(\lambda)$ is the arity of the modal operator $\langle \lambda \rangle$. Then, for an $F$-coalgebra $(C,\gamma)$ and a \emph{valuation} $V : \V \to \{0,1\}^C$, the semantics $\lsem \phi \rsem_\gamma^V \in \{0,1\}^C$ of formulas $\phi$ as above is given by:
\begin{itemize}
\item $\lsem \bot \rsem_\gamma^V (c) = 0$ for $c \in C$,
\item $\lsem \top \rsem_\gamma^V (c) = 1$ for $c \in C$,
\item $\lsem x \rsem_\gamma^V = V(x)$,
\item $\lsem \phi \vee \psi \rsem_\gamma^V(c) = \lsem \phi \rsem_\gamma^V(c) \vee \lsem \psi \rsem_\gamma^V(c)$ for $c \in C$,
\item $\lsem \phi \wedge \psi \rsem_\gamma^V(c) = \lsem \phi \rsem_\gamma^V(c) \wedge \lsem \psi \rsem_\gamma^V(c)$ for $c \in C$,
\item $\lsem \langle \lambda \rangle(\varphi_1,\ldots,\varphi_{\arity(\lambda)}) \rsem_\gamma^V = \gamma^* (\lsem \lambda \rsem_C (\lsem \varphi_1 \rsem_\gamma^V,\ldots,\lsem \varphi_{\arity(\lambda)} \rsem_\gamma^V))$, where $\gamma^* : \{0,1\}^{F C} \to \{0,1\}^{C}$ performs reindexing along $\gamma : C \to F C$.
\item $\lsem \mu x.\varphi \rsem_\gamma^{V \setminus \{x\}}$ ($\lsem \nu x.\varphi \rsem_\gamma^{V \setminus \{x\}}$) is the least (resp.~greatest) fixpoint of the operator $\Op_\phi$ on $\{0,1\}^C$ taking $p : C \to \{0,1\}$ to $\lsem \phi \rsem_\gamma^{V[p/x]}$, where the valuation $V[p/x] : \V \to \{0,1\}^C$ takes $x$ to $p$ and $y \in \V \setminus \{x\}$ to $V(y)$.
\end{itemize}
The existence of the fixpoints required to interpret fixpoint formulas $\mu x.\phi$ and $\nu x.\phi$ is guaranteed by Theorem~\ref{thm:tarski}, given the complete lattice structure of $\{0,1\}^C$ (inherited from $\{0,1\}$) and the monotonicity of the operators involved. The latter is an immediate consequence of the monotonicity of the predicate liftings.

\begin{rem}
\label{rem:poly}
One can take $F : \Set \to \Set$ to be a polynomial functor and the predicate liftings $\lsem\lambda\rsem$ to be as in Definition~\ref{canonical-pred-lift} (with $S = (\{0,1\},\vee,0,\wedge,1)$). As these predicate liftings preserve joins of increasing countable chains and meets of decreasing countable chains in $\Pred_C$ (see Remark~\ref{lambda-cont-cocont}), the operator $\Op_\phi$ used in defining the semantics of fixpoint formulas is both continuous and co-continuous. It then follows by Kleene's fixpoint theorem (included below for reference) that the least (respectively greatest) fixpoint of $\Op_\phi$ can be constructed as the join of the increasing chain $0 \sqsubseteq \Op_\phi(0) \sqsubseteq \ldots$ (respectively the meet of the decreasing chain $1 \sqsupseteq \Op_\phi(1) \sqsupseteq \ldots$) in $\Pred_C$.
\end{rem}

\begin{thmC}[{\cite{Kleene52}}]
Let $\Op : L \to L$ be a continuous operator on a complete lattice $(L,\sqsubseteq)$. Then, the least fixpoint of $\Op$ is given by $\bigsqcup_{\alpha < \omega} \Op^\alpha(\bot)$.
\end{thmC}

\section{Quantitative Linear-Time Fixpoint Logics}
\label{fixpoint-logic}

We are now ready to define \emph{linear-time} fixpoint logics for coalgebras of type $\T_S \circ F$, where $\T_S : \Set \to \Set$  is the monad induced by a partial semiring $(S, +,0,\bullet,1)$ whose associated order $\sqsubseteq$ satisfies Assumption~\ref{ass:cpo}, and $F : \Set \to \Set$ is a polynomial functor. Similarly to Definition~\ref{canonical-pred-lift}, we assume that $F$ is given by $\coprod_{\lambda \in \Lambda} \Id^{\arity(\lambda)}$, with $\Lambda$ a set and each $\lambda \in \Lambda$ being assigned a finite arity $\arity(\lambda)$. Our logics will be valued into the semiring carrier $S$, and will use modalities from $\Lambda$ with associated \emph{$S$-valued} predicate liftings $(\lsem \lambda \rsem)_{\lambda \in \Lambda}$ for $F$ as in Definition~\ref{canonical-pred-lift}. A step-wise semantics for these logics is given herewith, while an alternative, path-based semantics is described in Section~\ref{path-based-sem} and proved equivalent to the step-wise semantics in Section~\ref{equiv-sem}.

\begin{defi}[Linear-time fixpoint logic \cite{Cirstea14}]
\label{syntax}
Let $\V$ be a set (of variables). The logic $\mu\LL_\Lambda^\V$ has syntax given by
\begin{eqnarray*}\textstyle
\!\!\!\!\!\!\!\mu\LL_\Lambda^\V \ni \varphi ::=~ \bot \,\mid\,\top \,\mid\, x \,\mid\, \langle \lambda \rangle (\varphi_1,\ldots,\varphi_{\arity(\lambda)}) \,\mid\, \mu x .\varphi \,\mid\, \nu x.\varphi
\end{eqnarray*}
with $x \in \V$ and $\lambda \in \Lambda$. We write $\mu\LL_\Lambda$ for the set of \emph{closed} formulas (containing no free variables).
\end{defi}

While here we are concerned with a \emph{quantitative} interpretation of $\mu\LL_\Lambda$ over $\T_S \circ F$-coalgebras, we immediately note that $\mu\LL_\Lambda$ also has a \emph{qualitative} interpretation over $F$-coalgebras, described in Section~\ref{coalg-fixpoint-logics}. For this reason, we can view $\mu\LL_\Lambda$-formulas as describing \emph{linear-time} properties, that is, \emph{qualitative} properties of $F$-structures.

A step-wise semantics for $\mu\LL_\Lambda$ is defined below, by induction on the structure of formulas.

\begin{defi}[Step-wise semantics for $\mu\LL_\Lambda$]
\label{step-wise-semantics}
For a $\T_S \circ F$-coalgebra $(C,\gamma)$ and a valuation $V : \V \to S^C$ (interpreting the variables in $\V$ as $S$-valued predicates over $C$), the denotation $\lsem \phi \rsem_\gamma^V \in S^C$ of a formula $\phi \in \mu\LL_\Lambda^\V$ is defined inductively on the structure of $\phi$ by
\begin{itemize}
\item $\lsem \bot \rsem_\gamma^V$ is given by $\lsem \bot \rsem_\gamma^V(c) = 0$ for all $c \in C$,
\item $\lsem \top \rsem_\gamma^V$ is given by the $\nu$-extent $\extent^\nu_\gamma : C \to S$,
\item $\lsem x \rsem_\gamma^V = V(x)$,
\item $\lsem \langle \lambda \rangle(\varphi_1,\ldots,\varphi_{\arity(\lambda)}) \rsem_\gamma^V = \gamma^* (\E_{\T_S} (\lsem \lambda \rsem_C (\lsem \varphi_1 \rsem_\gamma^V,\ldots,\lsem \varphi_{\arity(\lambda)} \rsem_\gamma^V)))$, where $\gamma^* : S^{\T_S F C} \to S^C$ denotes reindexing along $\gamma : C \to \T_S F C$.
\item $\lsem \mu x.\varphi \rsem_\gamma^{V \setminus \{x\}}$ ($\lsem \nu x.\varphi \rsem_\gamma^{V \setminus \{x\}}$) is the least (resp.~greatest) fixpoint of the operator on $S^C$ taking $p \in S^C$ to $\lsem \phi \rsem_\gamma^{V[p/x]}$, where the valuation $V[p/x] : \V \to S^C$ is defined as in Section~\ref{coalg-fixpoint-logics}.
\end{itemize}
(In the third clause, both the formal sum notation and the action of the monad multiplication have been extended pointwise to functions on $C$.) 
\end{defi} 

The fact that the operator used to interpret fixpoint formulas is monotone follows from the functoriality of both $\E_{\T_S}$ and the predicate liftings $\lsem \lambda \rsem$ with $\lambda \in \Lambda$. The existence of the required least and greatest fixpoints now follows from Theorem~\ref{thm:tarski}.

The semantics of $\mu\LL_\Lambda$ thus resembles that of coalgebraic fixpoint logics (see Section~\ref{coalg-fixpoint-logics}), with the following differences: (i) the interpretation of a formula is an \emph{$S$-valued} predicate over the state space, (ii) the \emph{extension} predicate lifting is used alongside the predicate liftings $\lsem \lambda \rsem$ to give semantics to modal formulas, and (iii) the constant $\top$ has a \emph{coinductive} interpretation. In particular, the use of the extension predicate lifting to abstract away branching behaviour is what makes $\mu\LL_\Lambda$ a \emph{linear-time logic}.

We argue that a coinductive interpretation of truth is a natural generalisation of the  semantics of (the existential variant of) the linear-time $\mu$-calculus, wherein a state satisfies a fixpoint formula if and only if there exists an infinite path from that state which satisfies the formula. In particular, a state in a $\T_S \circ F$-coalgebra from which \emph{no} maximal path exists interprets $\top$ as $0$, whereas the interpretation of $\top$ in a state which admits one or more maximal paths "measures" the set of \emph{all} such maximal paths, as a value in $S$ which is computed coinductively. It is precisely this coinductive interpretation of $\top$ that allows the existence of an equivalent path-based semantics.

\begin{rem}
\label{rem-disj}
A restricted form of disjunction can easily be incorporated into the logics. Specifically, for each $\lambda,\lambda' \in \Lambda$ with $\lambda \ne \lambda'$, one can define a new modal operator $\langle \lambda \rangle\_ \sqcup \langle \lambda' \rangle\_$, of arity equal to $\arity(\lambda) + \arity(\lambda')$, with associated predicate lifting $L : \Pred^{\arity(\lambda) + \arity(\lambda')} \to \Pred$ given by:
\begin{align*}
L_X(p_1,\ldots,p_{\arity(\lambda)}, p_1',\ldots,p_{\arity(\lambda')}')(f) & \!=\! \begin{cases}
p_1(x_1) \bullet \ldots \bullet p_{\arity(\lambda)}(x_{\arity(\lambda)}), & \!\!\!\text{if } f = \iota_\lambda(x_1,\ldots,x_{\arity{(\lambda)}}) \\ 
p_1'(y_1) \bullet \ldots \bullet p'_{\arity(\lambda')}(y_{\arity(\lambda')}), & \!\!\!\text{if } f = \iota_{\lambda'}(y_1,\ldots,y_{\arity{(\lambda')}}) \\ 
0, & \!\!\!\text{otherwise} \end{cases}\!.
\end{align*}
Thus, the formula $\langle \lambda \rangle (\phi_1,\ldots,\phi_{\arity(\lambda)}) \sqcup \langle \lambda' \rangle (\phi_1',\ldots,\phi'_{\arity(\lambda')})$ represents a disjunction which can be resolved using a one-step unfolding of the coalgebra structure (since each branch resulting from a one-step unfolding will match either $\langle \lambda \rangle$ or $\langle \lambda' \rangle$ or none of them). Furthermore, the definition of such \emph{guarded disjunctions} generalises to any choice of a finite subset of $\Lambda$. The qualitative interpretation of $\mu\LL_\Lambda$ over $F$-coalgebras also extends naturally to such guarded disjunctions. Then, in the presence of such enhanced modalities, \emph{deterministic} parity automata over $F$-structures\footnote{This includes automata over infinite words ($F = A \times \Id$) and also over infinite trees ($F = A \times \Id \times \Id$).} can be encoded as formulas of our logic. The reader is referred to \cite{CirsteaSH17}, wherein translations from $\mu\LL_\Lambda$-formulas to \emph{parity $\T_S \circ F$-automata} ($\T_S \circ F$-coalgebras equipped with a parity map) and back are described. It is worth noting that the syntax of the logics in loc.\,cit.~includes weighted sums; in the absence of weighted sums, the resulting automata are essentially deterministic ones\footnote{The resulting automata have \emph{at most} one transition labelled by any given $\lambda \in \Lambda$ from each state.}.
\end{rem}
\begin{asm}
In what follows, we assume that the additional modal operators described in Remark~\ref{rem-disj} are present in the logics. With this extended syntax, the \emph{qualitative} semantics of $\mu\LL_\Lambda$ (given by a variant of the semantics in Section~\ref{coalg-fixpoint-logics} which leaves out the cases for disjunctions and conjunctions of formulas) extends to the additional modal operators $\langle \lambda \rangle\_ \sqcup \langle \lambda' \rangle\_$, via predicate liftings whose components $L_X : (\Pow (X))^{\arity(\lambda) + \arity(\lambda')} \to \Pow (F X)$ are given by
\begin{align*}
L_X(Y_1,\ldots,Y_{\arity(\lambda)}, Y_1',\ldots,Y_{\arity(\lambda')}') ~=~ & \{\iota_\lambda(x_1,\ldots,x_{\arity{(\lambda)}}) \mid x_i \in Y_i \text{ for } i \in \{1,\ldots,\arity(\lambda) \}\,\} \,\cup \\
& \{\iota_{\lambda'}(x_1',\ldots,x'_{\arity{(\lambda')}}) \mid x_i' \in Y_i' \text{ for } i \in \{1,\ldots,\arity(\lambda') \}\,\}\,.
\end{align*}
(In the above, $\{0,1\}$-valued predicates on $X$ are identified with elements of $\Pow(X)$.) In other words, under the qualitative semantics, the formula $\langle \lambda \rangle\phi \sqcup \langle \lambda' \rangle\psi$ with $\lambda \ne \lambda'$ is equivalent to the disjunction $\langle \lambda \rangle\phi \vee \langle \lambda' \rangle\psi$.
\end{asm}
\begin{exa}
\label{exa:instances}
This example describes instantiations to concrete semirings of our quantitative logics. The characterisations provided below are a direct consequence of our main result (Theorem~\ref{thm-equiv}), and at this point only serve as intuitions for the semantics.
\begin{enumerate}
\item As noted in Remark~\ref{rem:nabla}, when $S = (\{0,1\},\vee,0,\wedge,1)$, the predicate liftings $\lsem \lambda \rsem$ with $\lambda \in \Lambda$ are closely related to the Nabla modality of coalgebraic logics for $F$-coalgebras \cite{Moss1999}. However, the use of the extension predicate lifting $\E_{\T_S}$ to abstract away branching results in the logic $\mu\LL_\Lambda$ being similar to (the existential variant of) the \emph{linear-time} $\mu$-calculus: a $\mu\LL_\Lambda$-formula holds in a state $s$ of a coalgebra with non-deterministic branching whenever a maximal trace (element of the final $F$-coalgebra) satisfying that formula in the qualitative sense can be exhibited from $s$. Yet, unlike the linear-time $\mu$-calculus, which has a path-based semantics and contains conjunction operators, our logics have a step-wise semantics which prevents conjunctions with the expected interpretation from being included.
\item For $S = ([0,1],+,0,*,1)$, $\mu\LL_\Lambda$-formulas measure the likelihood of satisfying a certain linear-time property. In spite of the absence of disjunctions or conjunctions from the logics, if one additionally assumes that $F$ specifies word-like behaviour (that is, $F = \coprod_{\lambda \in \Lambda} \Id^{\arity(\lambda)}$, with $\arity(\lambda) \in \{0,1\}$ for $\lambda \in \Lambda$), our logics have the same expressive power, when interpreted qualitatively over $F$-structures, as deterministic parity automata (see \cite{CirsteaSH17} for details). Thus, our quantitative logics match the expressive power of logics such as LTL or the linear-time $\mu$-calculus, when interpreted over \emph{probabilistic} models.
\item For $S = (\mathbb N^\infty,\min,\infty, +,0)$ or one of its bounded variants, $\mu\LL_\Lambda$-formulas measure the minimal cost needed to satisfy a certain linear-time property.
\end{enumerate} 
\end{exa}

\begin{exa}
\label{exa:express}
To illustrate the use of modalities incorporating guarded disjunctions, first let $F = A \times \Id \simeq \coprod\limits_{a \in A} \Id$ with $A$ finite, and define $\langle \overline a \rangle \phi ::= \bigsqcup_{b \in A \setminus \{a\}} \langle b \rangle \phi$ for $a \in A$. Then, the \emph{qualitative} properties stating that $a \in A$ appears (i) always, (ii) eventually, (iii) finitely often, and (iv) infinitely often in the unfolding of a state in an $F$-coalgebra are captured by the formulas (i) $\nu x.\langle a \rangle x$, (ii) $\mu x.(\langle a \rangle\top \sqcup \langle \overline{a} \rangle x)$, (iii) $\mu x.\nu y.(\langle a \rangle x \sqcup \langle \overline{a} \rangle y)$, and (iv) $\nu x.\mu y.(\langle a \rangle x \sqcup \langle \overline{a} \rangle y)$, respectively. Now let $F = \{*\} + A \times \Id \simeq \{*\} + \coprod\limits_{a \in A} \Id$ with $A$ finite, and define $\langle A\rangle \phi ::= \bigsqcup_{a \in A} \langle a\rangle \phi$. Then, the formula $\mu x.(\langle * \rangle \sqcup \langle A\rangle x)$ captures terminating behaviour. Finally, let $F = \{a,b\} \times \Id \times \Id$. In this case, the behaviour of states in $F$-coalgebras is \emph{tree-shaped}. Then, the formula $\mu x.(\langle a \rangle (\top,\top) \sqcup \langle b \rangle (x,x))$ captures the qualitative property that $a$ occurs eventually on \emph{every} branch of the unfolding of a state in an $F$-coalgebra. (Note that the property that $a$ occurs eventually on \emph{some} branch of the unfolding of a state in an $F$-coalgebra is \emph{not} expressible -- a non-guarded disjunction would be needed to make a choice between the left and the right branch.) All the above formulas can also be interpreted over coalgebras with non-deterministic, probabilistic or weighted branching, with the resulting semantics measuring the extent (possibility, likelihood and minimal cost, respectively) of exhibiting the corresponding qualitative property.
\end{exa}

\begin{exa}
Assume $F = \{*\} + A \times \Id$ and consider the formula $\mu x.(\langle a \rangle \top \sqcup \langle \overline{a}\rangle x)$, expressing the qualitative property "eventually $a$" (where $\langle \overline{a} \rangle$ is defined as in Example~\ref{exa:express}). Its semantics in the two coalgebras from Example~\ref{mu-nu-extent-example} is given by $(x \mapsto \frac{2}{5}, y \mapsto \frac{1}{10}, z \mapsto \frac{1}{5})$ and respectively $(x \mapsto 4, y \mapsto 4, z \mapsto 4)$:
\begin{enumerate}
\item For the first coalgebra, using the fact that the $\nu$-extent of state $y$ is $\frac{3}{5}$, it follows that $\lsem \langle a \rangle \top \rsem$ is given by $(x \mapsto \frac{1}{2} \times \frac{3}{5}= \frac{3}{10}, y \mapsto 0, z \mapsto 0)$. Then, $\lsem \mu x.(\langle a \rangle \top \sqcup \langle \overline{a} \rangle x)\rsem$ is obtained as the least solution of the following system of equations:
\begin{eqnarray*}
\begin{cases}
x = \frac{3}{10} + \frac{1}{2}z\\
y = \frac{1}{4}x\\
z = \frac{1}{4}x + \frac{1}{2}z
\end{cases}
\end{eqnarray*}
\item In the second coalgebra, $\lsem \langle a\rangle \top \rsem$ is given by $(x \mapsto 4, y \mapsto 0, z \mapsto 0)$. Then, $\lsem \mu x.(\langle a \rangle \top \sqcup \langle \overline{a} \rangle x)\rsem$ is the least (w.r.t.~$\sqsubseteq$, which is now $\ge$ on $\mathbb N^\infty$) solution of:
\begin{eqnarray*}
\begin{cases}
x = \min(4,1 + z)\\
y = x\\
z = \min(x,z)
\end{cases}
\end{eqnarray*}
\end{enumerate}
\end{exa}

We conclude this section by recalling an alternative, automata-theoretic characterisation of the semantics of $\mu\LL_\Lambda$, given in \cite{CirsteaSH17}, which will prove useful later on. A notion of \emph{quantitative parity automaton} is defined in \cite{CirsteaSH17} as a $\T_S \circ F$-coalgebra  $(A,\alpha)$ together with a function $\Omega : A \to \{1,2,\ldots\}$ with finite range, called a \emph{parity map}. A translation from $\mu\LL_\Lambda$-formulas to such parity automata is then provided in loc.\,cit., and an automata-theoretic characterisation of the semantics of $\mu\LL_\Lambda$ is also given. This characterisation makes use of a generalisation of the notions of $\mu$-/$\nu$-extent of a $\T_S \circ F$-coalgebra, defined as a \emph{nested} fixpoint which takes into account the parity map of a quantitative parity automaton. As is the case for existential LTL/the existential variant of the linear-time $\mu$-calculus, the semantics of a $\mu\LL_\Lambda$-formula in a given coalgebra is recovered as the (generalised) extent of a parity automaton obtained as the product (Definition~\ref{simple-prod-aut}) between the coalgebra in question and the automaton induced by the formula, with parities inherited from the automaton.

\begin{defiC}[\cite{CirsteaSH17}]
\label{extent}
Let $(A,\alpha,\Omega)$ be a quantitative parity automaton with $\ran(\Omega) \subseteq \{1,\ldots,n\}$, let $A_k = \{a \in A \mid \Omega(a) = k \}$, and let $\alpha_k = \alpha \circ \iota_k : A_k \to \T_S F A$ denote the restriction of $\alpha$ to $A_k$. The \emph{extent $\extent_\alpha = [\extent_1,\ldots,\extent_n] : A \to S$ of $(A,\alpha,\Omega)$} is the solution of the nested equational system\footnote{See e.g.~\cite{Niwinski2001} for a definition of nested equational systems and their solutions.}
\begin{align}
\label{eqn-extent}\begin{bmatrix}
u_1 & =_\mu & \mu_1 \circ \T_S (\bullet_F) \circ T_S F [u_1, \ldots,u_n] \circ \alpha_1\\
& \vdots \\
u_n & =_\eta & \mu_1 \circ \T_S (\bullet_F) \circ T_S F [u_1, \ldots, u_n] \circ \alpha_n
\end{bmatrix}
\end{align}
with $\eta = \mu$ ($\eta = \nu$) if $n$ is odd (resp.~even), with variables $u_k$ ranging over the lattice $(S^{A_k},\sqsubseteq)$ (and therefore $[u_1,\ldots,u_n] : A \to S$), and with the right-hand-sides of the equations pictured below:
\begin{align*}
\UseComputerModernTips\xymatrix{A_k \ar[r]^-{\alpha_k} & \T_S F A \ar[rrr]^-{\T_S F [u_1, \ldots,u_n]} & & & \T_S F S \ar[rr]^-{\T_S(\bullet_F)} & & \T_S S = \T_S^2 1 \ar[r]^-{\mu_1} & \T_S 1 = S}
\end{align*}
\end{defiC}
\begin{rem}
\label{rem:extents}
Given that $\extent_\alpha$ is defined as a \emph{nested} fixpoint whereas the $\extent^\nu_\alpha$ of Definition~\ref{extent-coalgebra} is defined as the solution of a similar equational system, but with all variables being $\nu$-variables, it follows immediately that $\extent_\alpha \sqsubseteq \extent^\nu_\alpha$.
\end{rem}
The main result of \cite{CirsteaSH17} is an automata-theoretic characterisation of the semantics of the logics $\mu\LL_\Lambda$. Below, $\Cl(\phi)$ is the \emph{closure} of a formula $\phi$, defined as usual for fixpoint logics, and the \emph{product} between a $\T_S \circ F$-coalgebra $(C,\gamma)$ and a quantitative parity automaton $(A,\alpha,\Omega)$ is another quantitative parity automaton obtained by endowing the product of Definition~\ref{simple-prod-aut} with parities inherited from $(A,\alpha,\Omega)$.

\begin{thmC}[\cite{CirsteaSH17}]
\label{thm:extent-char}
If $(\Cl(\phi),\beta,\Omega)$ with $\ran(\Omega) \subseteq \{1,\ldots,n\}$ is the parity automaton for a \emph{clean}\footnote{A formula $\phi \in \mu\LL_\Lambda$ is said to be \emph{clean} if no variable appears both free and bound, or is bound more than once, in $\phi$.} and \emph{strictly guarded}\footnote{A formula is said to be \emph{strictly guarded} if every fixpoint variable is immediately preceded by a modal operator.} formula $\phi \in \mu\LL_\Lambda$, $(C,\gamma)$ is a $\T_S \circ F$-coalgebra and $\extent_\alpha = [(\extent_{n})_{n \in \ran(\Omega)}] : A \to S$ is the extent of the product parity automaton $(A,\alpha,\Omega)$ of $(C,\gamma)$ and $(\Cl(\phi),\beta,\Omega)$, then $\lsem \phi \rsem_\gamma(c) = \extent_\alpha (c,\phi)$ for $c \in C$.
\end{thmC}

While the logics in \cite{CirsteaSH17} lack the propositional constant $\top$, the results in loc.\,cit.~can easily be generalised by including $\top$ in $\mu\LL_\Lambda$. This is because the formula $\top$ corresponds to a $\T_S \circ F$-coalgebra automaton with a single state (with even parity) and a $\lambda$-transition with weight $1 \in S$ for each $\lambda \in \Lambda$. The fact that $\Lambda$ may be infinite is not an issue, since taking a product with a $\T_S \circ F$-coalgebra will yield a coalgebra with finite branching. It is easy to see that each formula in $\mu\LL_\Lambda$ is equivalent to a strictly guarded one: since modal operators are the only non-nullary operators in $\mu\LL_\Lambda$, the only way a clean formula can fail to be strictly guarded is when it contains a sub-formula of the form $\eta x.y$. Such a sub-formula can be replaced by $y$ (if $x \ne y$) or $\bot$ (if $x = y$ and $\eta = \mu$) or $\top$ (if $x = y$ and $\eta = \nu$).

\section{Semiring-Valued Measures}
\label{semiring-measure-gen}

This section gives a generalisation of the notion of (real-valued) measure on a $\sigma$-algebra (see e.g.~\cite{Ash}) to measures valued into a partial semiring, and shows how standard measure extension results generalise to this setting.

Throughout this section, we fix a partial commutative semiring $(S,+,0,\bullet,1)$ satisfying Assumption~\ref{ass:cpo}. For such a semiring, we define a countable (partial) addition operation on $S$ as an extension of the binary semiring addition:
\begin{eqnarray}
\label{eq:csum}
\sum\limits_{i \in \omega} s_i := \sup\limits_{n \in \omega} (s_0 + \ldots + s_n)
\end{eqnarray}
If $S$ is partial, the above countable sum is defined iff all sums $s_0 + \ldots + s_n$ with $n \in \omega$ are defined. The definition exploits the fact that $s \sqsubseteq s + t$ for any $s,t \in S$ for which $s + t$ is defined, together with the existence of joins of increasing countable chains (see Assumption~\ref{ass:cpo}). 
\begin{rem}
\label{commutativity-rem}
It follows immediately from the definition of countable sums that $\sum\limits_{i \in \omega}s_i = \sum\limits_{i \in \omega}s_{f(i)}$ for any bijection $f : \omega \to \omega$. 
\end{rem}

\begin{rem}
\label{rem:distrib-countable-sums}
It also follows from the definition of countable sums, together with the distributivity of $\bullet$ over finite sums and the preservation of joins of increasing countable chains by $\bullet$ in each argument (Assumption~\ref{ass:cpo}), that $\bullet$ distributes over countable sums; that is, whenever $\sum\limits_{i \in \omega}s_i$ is defined, then so is $\sum\limits_{i \in \omega} (s \bullet s_i)$ and moreover, $\sum\limits_{i \in \omega} (s \bullet s_i) = s \bullet \sum\limits_{i \in \omega}s_i$. This is because when $\sum\limits_{i \in \omega}s_i$ is defined, all the partial sums $s_0 + \ldots + s_n$ are also defined, and therefore so is $s \bullet s_0 + \ldots + s \bullet s_n = s \bullet (s_0 + \ldots + s_n)$, for all $n \in \omega$; this, in turn, gives that $\sum\limits_{i \in \omega} (s \bullet s_i)$ is defined and equal to $\sup\limits_{n \in \omega} (s \bullet (s_0 + \ldots + s_n)) = s \bullet \sup\limits_{n \in \omega} (s_0 + \ldots + s_n) = s \bullet \sum\limits_{i \in \omega} s_i$.
\end{rem}

A further assumption on the semiring $(S,+,0,\bullet,1)$ is needed to develop the theory of $S$-valued measures. In particular, this is needed to prove countable sub-additivity of our semiring-valued measures. The results in Sections~\ref{path-based-sem} and \ref{equiv-sem} also rely on this assumption.

\begin{asm}
\label{ass:cl}
We assume that $\bullet$ preserves both suprema and infima in each argument, and that the following holds for all $A_i \subseteq S$ with $i \in \omega$:
\begin{eqnarray}
\label{asm-eq}
\sum\limits_{i \in \omega} \inf A_i = \inf \big\{\sum\limits_{i \in \omega} a_i \mid a_i \in A_i \text{ for } i \in \omega,\, \sum\limits_{i \in \omega} a_i  \text{ is defined\,} \big\}
\end{eqnarray}
whenever $\sum\limits_{i \in \omega} \inf A_i $ is defined.
\end{asm}
\begin{rem}
\label{rem:partial}
The following always holds when $(S,\sqsubseteq)$ is a complete lattice:
\begin{eqnarray*}
\sum\limits_{i \in \omega} \inf A_i \sqsubseteq \inf \big\{\sum\limits_{i \in \omega} a_i \mid a_i \in A_i \text{ for } i \in \omega,\, \sum\limits_{i \in \omega} a_i  \text{ is defined\,} \big\}
\end{eqnarray*}
provided that $\sum\limits_{i \in \omega} \inf A_i $ is defined. Assumption~\ref{ass:cl} strengthens this to an equality.
\end{rem}

\begin{rem}
We verify that our example semirings (see Example~\ref{example-semirings}) satisfy Assumption~\ref{ass:cl}.
\begin{enumerate}
\item We first consider the case $S = ([0,1], +, 0, *, 1)$. Clearly, the equality (\ref{asm-eq}) holds when $\sum\limits_{i \in \omega} \inf A_i  = 1$ -- one direction follows from Remark~\ref{rem:partial} and the other is immediate. When $\sum\limits_{i \in \omega} \inf A_i  < 1$, the fact that $1$ is a limit point can be used to construct $c_i \in [0,1]$ with $i \in \omega$ such that $\inf A_i < c_i$ for $i \in \omega$ and $\sum\limits_{i \in \omega} c_i$ is defined. For this, let $b_0 >0$ be such that $(\sum\limits_{i \in \omega}\inf A_i) < (\sum\limits_{i \in \omega}\inf A_i + b_0) < 1$. Now assuming $b_0,\ldots,b_i$ have been defined in such a way that $(\sum\limits_{i \in \omega}\inf A_i ~+ \sum\limits_{j \in \{0,\ldots,i\}} b_j) < 1$, let $b_{i+1} > 0$ be such that $(\sum\limits_{i \in \omega}\inf A_i + \sum\limits_{j \in \{0,\ldots,i\}} b_j) < (\sum\limits_{i \in \omega}\inf A_i + \sum\limits_{j \in \{0,\ldots,i+1\}} b_j) < 1$; finally, let $c_i = \inf A_i + b_i$ for $i \in \omega$. We now have that $\sum\limits_{i \in \omega}c_i = \sum\limits_{i \in \omega} \inf A_i + \sum\limits_{i \in \omega}b_i$ is defined since $\sum\limits_{i \in \omega} \inf A_i + \sum\limits_{j \in \{0,\ldots,i\}}b_j$ is defined for all $i \in \omega$. Then, since $\inf A_i < c_i$ for $i \in \omega$ and since $\le$ is a \emph{total} order on $[0,1]$, we have
\begin{eqnarray*}
\inf A_i = \inf \{a_i \in A_i \mid c_i \not\le a_i\} = \inf \{a_i \in A_i \mid a_i < c_i\} \text{ for }i \in \omega\,.
\end{eqnarray*}
The required inequality now follows from:
\begin{align*}
\sum\limits_{i \in \omega} \inf A_i  & = \sum\limits_{i \in \omega} \inf \{a_i \in A_i \mid a_i < c_i\} \\ & = \inf \{ \sum\limits_{i \in \omega} a_i \mid a_i \in A_i,\, a_i < c_i \text{ for } i \in \omega\,\} \\ & \sqsupseteq \inf \{\sum\limits_{i \in \omega} a_i \mid a_i \in A_i \text{ for } i \in \omega,\, \sum\limits_{i \in \omega} a_i \text{ is defined\,} \} 
\end{align*}
where the second equality above is a consequence of all the sums $\sum\limits_{i \in \omega} a_i$ with $a_i < c_i$ being defined in $[0,1]$ (since $\sum\limits_{i \in \omega} c_i$ is defined) and of the same equality holding over the positive reals, whereas the inequality follows from $\{ \sum\limits_{i \in \omega} a_i \mid a_i \in A_i,\, a_i < c_i \text{ for } i \in \omega\,\} \subseteq \{\sum\limits_{i \in \omega} a_i \mid a_i \in A_i \text{ for } i \in \omega,\, \sum\limits_{i \in \omega} a_i \text{ is defined\,} \}$.
\item Second, we consider the case $S = (\mathbb N^\infty,\min,\infty,+,0)$; in this case, $\sqsubseteq$ is the $\ge$ relation on $\mathbb N^\infty$ and $\sum\limits_{i \in \omega} a_i$ is given by $\inf\limits_{i \in \omega}a_i$. Then, the required inequality instantiates to:
\begin{eqnarray*}
\inf\limits_{i \in \omega} \sup A_i \le \sup \big\{\inf\limits_{i \in \omega} a_i \mid a_i \in A_i \text{ for } i \in \omega \big\}
\end{eqnarray*}
That this is true now follows by a relatively straighforward case analysis on whether the lhs equals $\infty$. A similar argument applies to bounded variants of the tropical semiring.
\item Finally, in the case when $S = (\{0,1\},\vee,0,\wedge,1)$, the proof is straightforward.
\end{enumerate}
\end{rem}

\begin{defi}
A collection of sets $\A \subseteq \Pow X$ is a \emph{$\sigma$-algebra} provided that:
\begin{enumerate}
\item $\emptyset \in \A$,
\item if $A \in \A$ then $X \setminus A \in \A$,
\item if $A_i \in \A$ for $i \in \omega$, then $\bigcup\limits_{i \in \omega}A_i \in \A$.
\end{enumerate}
\end{defi}
Thus, a $\sigma$-algebra is also closed under (finite and) countable intersections (as $\bigcap\limits_{i \in \omega}A_i = X \setminus \bigcup\limits_{i \in \omega}(X \setminus A_i)$).  Also, if $A,B \in \A$, then $B \setminus A = B \cap (X \setminus A) \in \A$.

The next definition generalises real-valued measures on $\sigma$-algebras to measures valued into a partial semiring $(S,+,0,\bullet,1)$.
\begin{defi}[$S$-valued measure]
\label{semiring-valued-measure}
An \emph{$S$-valued measure} on a $\sigma$-algebra $\A$ is a function $\mu : \A \to S$ such that:
\begin{itemize}
\item $\mu(\emptyset) = 0$,
\item if $A_i \in \A$ for $i \in \omega$ are pairwise disjoint, then $\sum\limits_{i \in \omega} \mu(A_i)$ is defined and moreover,
$\mu(\bigcup\limits_{i \in \omega}A_i) = \sum\limits_{i \in \omega} \mu(A_i)$.
\end{itemize}
\end{defi}
It follows immediately that any $S$-valued measure on $\A$ is monotone, that is, $\mu(A) \sqsubseteq \mu(B)$ whenever $A,B \in \A$ are such that $A \subseteq B$: in this case, $\mu(B) = \mu(A) + \mu(B \setminus A) \sqsupseteq \mu(A)$. Moreover, as for standard measures, the following holds.
\begin{prop}
\label{prop:countable-unions}
If $\mu : \A \to S$ is an $S$-valued measure on a $\sigma$-algebra $\A$ and $A_i \in \A$ with $i \in \omega$ are such that $A_i \subseteq A_{i+1}$ for $i \in \omega$, then $\mu(\bigcup\limits_{i \in \omega} A_i) = \sup_{i \in \omega} \mu(A_i)$.
\end{prop}
\begin{proof}
We have $\mu(\bigcup\limits_{i \in \omega}A_i) = \mu(A_0 \cup (A_1 \setminus A_0) \cup (A_2 \setminus A_1) \cup \ldots) = \mu(A_0) + \mu(A_1\setminus A_0) + \mu(A_2 \setminus A_1) + \ldots = \sup_{i \in \omega} (\mu(A_0) + \mu(A_1\setminus A_0) + \ldots + \mu(A_i \setminus A_{i-1})) = \sup_{i \in \omega} \mu(A_i)$.
\end{proof}
Next, we recall the notions of \emph{semi-ring}\footnote{The  terminology clash with the notion of semiring should not cause any problems, given the different contexts in which \emph{semirings} (like~$S$) and \emph{semi-rings} (of sets) are used.}, \emph{ring} and \emph{field of sets}, and show how $S$-valued measures on $\sigma$-algebras arise from certain $S$-valued functions on semi-rings.

\begin{defi}
\label{semi-ring}
A collection of sets $\S \subseteq \Pow X$ is a \emph{semi-ring} provided that:
\begin{enumerate}
\item $\emptyset \in \S$,
\item if $A,B \in \S$, then $A \cap B \in \S$,
\item if $A,B \in \S$, there exist $A_1,\ldots,A_n  \in \S$ pairwise disjoint such that $A \setminus B = A_1 \cup \ldots \cup A_n$.
\end{enumerate}
A non-empty collection of sets $\R \subseteq \Pow X$ is a \emph{ring} provided that:
\begin{enumerate}
\item if $A,B \in \R$, then $A \cup B \in \R$,
\item  if $A,B \in \R$, then $A \setminus B \in \R$.
\end{enumerate}
A ring $\R \subseteq \Pow X$ which contains $X$ is called a \emph{field}.
\end{defi}
Thus, every ring is a semi-ring. Moreover, given a semi-ring $\S \subseteq \Pow X$, the smallest ring containing $\S$ (defined as the intersection of \emph{all} rings containing $\S$) is obtained by closing $\S$ under finite unions of pairwise disjoint sets. That this yields a ring (i.e.~it is also closed under relative complement) follows easily from $(\bigcup \limits_{i = 1}^{n} A_i) \setminus (\bigcup \limits_{j = 1}^{m} B_j) = \bigcup \limits_{i = 1}^{n} \bigcap \limits_{j = 1}^{m} (A_i \setminus B_j)$ together with the distributivity of finite intersections over finite unions.

The next definition generalises the notion of measure on a ring (see e.g.~\cite{Ash}) to \emph{$S$-valued} measures.
\begin{defi}
\label{def:measure-ring}
An \emph{$S$-valued measure on a ring $\R$} is a function $\mu : \R \to S$  such that:
\begin{itemize}
\item $\mu(\emptyset) = 0$,
\item if $A_i \in \R$ for $i \in \omega$ are pairwise disjoint and $\bigcup\limits_{i \in \omega}A_i \in \R$, then $\sum\limits_{i \in \omega} \mu(A_i)$ is defined and moreover,
$\mu(\bigcup\limits_{i \in \omega}A_i) = \sum\limits_{i \in \omega} \mu(A_i)$.
\end{itemize}
\end{defi}

\begin{rem}
\label{rem:definedness}
We note that any $S$-valued measure on a ring is \emph{finitely} additive, that is, if $A_1,\ldots,A_n \in \R$ are pairwise disjoint, then $\sum\limits_{i=1}^n \mu(A_i)$ is defined and moreover,  $\mu(\bigcup\limits_{i =1}^nA_i) = \sum\limits_{i =1}^n \mu(A_i)$ -- this follows by extending the finite family $A_1,\ldots,A_n$ to a countable family $(A_i)_{i \in \omega}$, where $A_i = \emptyset$ for $i > n$.  
\end{rem}

The following is an immediate consequence of Definition~\ref{def:measure-ring}.
\begin{prop}
\label{prop:ascending-ring}
Let $\mu : \R \to S$ be an $S$-valued measure on a ring $\R$, and let $A_1 \subseteq A_2 \subseteq \ldots$ be such that $A_i \in \R$ for $i \in \omega$ and also $\bigcup\limits_{i\in \omega} A_i \in \R$. Then, $\sup_{i \in \omega}\mu(A_i) = \mu(\bigcup\limits_{i\in \omega} A_i)$.
\end{prop}
\begin{proof}
Immediate from the definition of an $S$-valued measure on a ring, by considering the pairwise disjoint family $(A_i \setminus (A_0 \cup \ldots \cup A_{i-1}))_{i \in \omega}$.
\end{proof}

The following generalisation of a standard measure theory result to $S$-valued measures can also be proved.
\begin{prop}
\label{prop:ext-semi-ring}
Let $\S\subseteq \Pow X$ be a semi-ring and let $\mu : \S \to S$ be such that:
\begin{enumerate}
\item $\mu(\emptyset) = 0$,
\item whenever $A_i \in \S$ for $i \in \{1,\ldots,n\}$ are pairwise disjoint, the sum $\sum\limits_{i \in \{1,\ldots,n\}}\mu(A_i)$ is defined,
\item whenever $A_i \in \S$ for $i \in \{1,\ldots,n\}$ are pairwise disjoint and moreover, $\bigcup\limits_{i \in \{1,\ldots,n\}} A_i \in \S$, $\mu(\bigcup\limits_{i \in \{1,\ldots,n\}} A_i) = \sum\limits_{i \in \{1,\ldots,n\}}\mu(A_i)$.
\end{enumerate}
Then $\mu$ has a unique extension to an $S$-valued measure on the smallest ring $\R$ containing $\S$.
\end{prop}
\begin{proof}
Each $A \in \R$ can be written as $\bigcup\limits_{i \in \{1,\ldots,n\}} A_i$ with $A_i \in \S$ for $i \in \{1,\ldots,n\}$ pairwise disjoint. We let $\mu(A) = \sum\limits_{i \in \{1,\ldots,n\}}\mu(A_i)$. To prove the independence of this definition on the choice of $(A_i)_{i \in \{1,\ldots,n\}}$, let $B_j \in \S$ for $j \in \{1,\ldots,m\}$ be pairwise disjoint and such that $A = \bigcup\limits_{j \in \{1,\ldots,m\}} B_j$. Then
{\small \begin{eqnarray*}
\sum\limits_{i \in \{1,\ldots,n\}}\mu(A_i) = \sum\limits_{i \in \{1,\ldots,n\}}\sum\limits_{j \in \{1,\ldots,m\}}\mu(A_i \cap B_j) = \sum\limits_{j \in \{1,\ldots,m\}}\sum\limits_{i \in \{1,\ldots,n\}}\mu(A_i \cap B_j) = \sum\limits_{j \in \{1,\ldots,m\}}\mu(B_j)
\end{eqnarray*}}
with all the sums above being defined. That $\mu$ as defined above is an $S$-valued measure on $\R$ is immediate. 
\end{proof}

\begin{defi}
A \emph{$\sigma$-ring} is a ring which is closed under countable unions.
\end{defi}
Thus, a $\sigma$-ring which contains $X$ is a $\sigma$-algebra.

Given a ring $\R$, we write $\S(\R)$ for the $\sigma$-ring generated by $\R$. We also write $\H(\R)$ for the smallest subset of $\Pow X$ which contains $\S(\R)$ and is closed under taking subsets. It is not difficult to show that $\H(\R)$ is itself a $\sigma$-ring (see e.g.~\cite[Section~5.1]{deBarra}). Moreover, if $\R$ is a field, then $\S(\R)$ is a $\sigma$-algebra and $\H(\R) = \Pow X$.
   
We show in what follows that every $S$-valued measure on a field extends to an $S$-valued measure on the induced $\sigma$-algebra. A generalisation of the standard notion of \emph{outer measure} (see e.g.~\cite[Definition~1.3.4]{Ash} or \cite[Section~17.3]{Royden}) will be used to derive this result.

\begin{defi}[$S$-valued outer measure]
\label{def:outer}
Let $\R$ be a ring. A function $\mu^* : \H(\R) \to S$ is an \emph{$S$-valued outer measure} provided that:
\begin{enumerate}
\item $\mu^*(\emptyset) = 0$,
\item\label{out2} if $A,B \in \H(\R)$ and $A \subseteq B$ then $\mu^*(A) \sqsubseteq \mu^*(B)$,
\item\label{out3} if $A_i \in \H(\R)$ for $i \in \omega$ are pairwise disjoint and $\sum\limits_{i \in \omega} \mu^*(A_i)$ is defined, then $\mu^*(\bigcup\limits_{i \in \omega}A_i) \sqsubseteq \sum\limits_{i \in \omega} \mu^*(A_i)$.
\end{enumerate}
\end{defi}

\begin{rem}
In order to deal with the case when $S$ is a \emph{partial} semiring (in which case, given a countable collection $(A_i \in \H(\R))_{i \in \omega}$, $\sum\limits_{i \in \omega} \mu^*(A_i)$ may not be defined, even if the $A_i$s are pairwise disjoint), the above definition differs slightly from the standard definition of an outer measure. However, when $S$ is a \emph{total} semiring, condition (\ref{out3}) of Definition~\ref{def:outer} generalises the standard one (see.e.g.~\cite[Def.\,1.3.4]{Ash})  -- note that requiring it for pairwise disjoint, countable collections of subsets means that it also holds for arbitrary countable collections of subsets. Also, when $S$ is the (partial) probabilistic semiring, condition (\ref{out3}) of Definition~\ref{def:outer} is equivalent to the standard one used in the definition of outer probability measures, since whenever for a countable collection $(A_i)_{i \in \omega}$ the sum $\sum\limits_{i \in \omega} \mu^*(A_i)$ is not defined (in $[0,1]$), the required inequality holds automatically in $\mathbb R_{\ge 0}$.
\end{rem}

The next definition also generalises a standard one (see e.g.~\cite[Section1.3]{Ash} or \cite[Section~17.3]{Royden}).

\begin{defi}
\label{def:outer-measurable}
Given an $S$-valued outer measure $\mu^* : \H(\R) \to S$, we call a set $E \in \H(\R)$ \emph{$\mu^*$-measurable} if for every $A \in \H(\R)$, the sum $\mu^*(A \cap E) + \mu^*(A \cap (X \setminus E))$ is defined and moreover,
\begin{eqnarray}
\label{eq:def}
\mu^*(A) = \mu^*(A \cap E) + \mu^*(A \cap (X \setminus E))\,.
\end{eqnarray}
\end{defi}

\begin{rem}
At first sight, the above definition also differs from the standard definition of measurability induced by an outer measure when the semiring $S$ is partial. However, note that in the case of the probabilistic semiring, the standard definition, which considers the rhs of (\ref{eq:def}) in Definition~\ref{def:outer-measurable} as an element of $\mathbb R_{\ge 0}$, forces the sum in the rhs to belong to $[0,1]$; thus, in this case the two definitions are equivalent.
\end{rem}
The next result now generalises a standard one from measure theory.

\begin{prop}
\label{prop:sigma-alg-M}
If $\mu^* : \H(\R) \to S$ is an $S$-valued outer measure, then:
\begin{enumerate}
\item The collection $\M$ of $\mu^*$-measurable subsets forms a $\sigma$-algebra.
\item The restriction of $\mu^*$ to $\M$ is an $S$-valued measure.
\end{enumerate}
\end{prop}
\begin{proof}
In spite of our slightly different definition of outer measure, the proof is exactly the same as the standard one (see e.g.~\cite[Section~17.4]{Royden}): one first shows that $\M$ is closed under finite unions of pairwise disjoint subsets and that $\mu^*$ is finitely additive on $\M$, and then extends this to countable unions of pairwise disjoint sets and countable additivity. The requirement of Definition~\ref{def:outer-measurable} that the sum in the rhs of (\ref{eq:def}) is defined means that all the sums appearing in the standard proof are defined. 
\end{proof}

We are now ready to show how an $S$-valued measure on a field extends to an $S$-valued measure on the induced $\sigma$-algebra. As in standard measure theory, this is done with the help of an outer measure.

\begin{defi}
\label{def:induced-outer-meas}
Let $\R$ be a ring and let $\mu : \R \to S$ be an $S$-valued measure on $\R$. The \emph{outer measure $\mu^* : \H(\R) \to S$ induced by $\mu$} is given by
\begin{eqnarray*}
\mu^*(A) = \inf \{\,\sum\limits_{n \in \omega} \mu(E_n) \mid (E_n \in \R)_{n \in \omega} \text{ pairwise disjoint},~ A \subseteq \bigcup\limits_{n \in \omega} E_n \}
\end{eqnarray*}
\end{defi}
Note that the pairwise disjointness of $(E_n)_{n \in \omega}$ together with $\R$ being a ring means that each sum $\sum\limits_{i=1}^n \mu(E_i)$ with $n \in \omega$ is defined (see Remark~\ref{rem:definedness}), and therefore so is $\sum\limits_{n \in \omega}\mu(E_n)$. Also, note that the infima needed in Definition~\ref{def:induced-outer-meas} exist as a result of $(S,\sqsubseteq)$ being a complete lattice (see Assumption~\ref{ass:cpo}).

\begin{prop}
\label{prop:outer}~
\begin{enumerate}
\item $\mu^*$ is an outer measure on $\H(\R)$.
\item $\mu^*(E) = \mu(E)$ for $E \in \R$.
\end{enumerate}
\end{prop}
\begin{proof}
For the first statement, the fact that $\mu^*(\emptyset) = 0$ is immediate, and so is the monotonicity of $\mu^*$. (For the latter, (\ref{c2}) of Remark~\ref{rem-semiring} is needed.) The proof of countable sub-additivity of $\mu^*$ rests on Assumption~\ref{ass:cl}. To show countable sub-additivity, fix pairwise disjoint sets $A_i \in \H(\R)$ with $i \in \omega$, such that $\sum\limits_{i \in \omega} \mu^*(A_i)$ is defined. To show $\mu^*(\bigcup\limits_{i \in \omega}A_i) \sqsubseteq \sum\limits_{i \in \omega} \mu^*(A_i)$, let $(E_n^i \in \R)_{n \in \omega}$ be pairwise disjoint and such that $A_i \subseteq \bigcup\limits_{n \in \omega}E_n^i$, for each $i \in \omega$. Now view $(E_n^i \in \R)_{n \in \omega,i \in \omega}$ as a single countable family $(F_k)_{k \in \omega}$, and consider the countable, \emph{pairwise disjoint} family $(F'_k \in \R)_{k \in \omega}$ given by $F'_k = F_k \setminus (F_0 \cup \ldots \cup F_{k-1})$ for $k \in \omega$. (Recall that $\R$ is a ring and therefore $F'_k \in \R$ for $k \in \omega$.) Then, $\bigcup\limits_{i \in \omega}A_i \subseteq \bigcup\limits_{k \in \omega}F_k = \bigcup\limits_{k \in \omega}F'_k$, and therefore, by definition of $\mu^*$, $\mu^*(\bigcup\limits_{i \in \omega}A_i) \sqsubseteq \sum\limits_{k \in \omega} \mu(F'_k)$. At the same time, monotonicity of $\mu$ gives $\sum\limits_{k \in \omega} \mu(F'_k) \sqsubseteq \sum\limits_{k \in \omega} \mu(F_k) = \sum\limits_{i \in \omega} \sum\limits_{n \in \omega} \mu(E_n^i)$ whenever $\sum\limits_{i \in \omega} \sum\limits_{n \in \omega} \mu(E_n^i)$ is defined. (For the latter equality, Remark~\ref{commutativity-rem} is used.) Hence, $\mu^*(\bigcup\limits_{i \in \omega}A_i) \sqsubseteq \sum\limits_{i \in \omega} \sum\limits_{n \in \omega} \mu(E_n^i)$ whenever the latter sum is defined. As this holds for every choice of $(E_n^i \in \R)_{n \in \omega}$ with $(E_n^i)_{n \in \omega}$ pairwise disjoint and $A_i \subseteq \bigcup\limits_{n \in \omega}E_n^i$ with $i \in \omega$ for which $\sum\limits_{i \in \omega}\sum\limits_{n \in \omega}\mu(E_n^i)$ is defined, we now have:
\begin{eqnarray*}
\mu^*(\bigcup\limits_{i \in \omega}A_i) \sqsubseteq \inf \{  \sum\limits_{i \in \omega} \sum\limits_{n \in \omega} \mu(E_n^i) \mid (E_n^i \in \R)_{n \in \omega} \text{ pairwise disjoint},\, A_i \subseteq \bigcup\limits_{n \in \omega}E_n^i \text{ for } i \in \omega,~\\
 \sum\limits_{i \in \omega}\sum\limits_{n \in \omega}\mu(E_n^i) \text{ is defined\,}\}
\end{eqnarray*}
The above together with Assumption~\ref{ass:cl}, gives:
\begin{eqnarray*}
\mu^*(\bigcup\limits_{i \in \omega}A_i) \sqsubseteq \sum\limits_{i \in \omega} \inf \{ \sum\limits_{n \in \omega} \mu(E_n^i) \mid (E_n^i \in \R)_{n \in \omega} \text{ pairwise disjoint},\, A_i \subseteq \bigcup\limits_{n \in \omega}E_n^i \}
\end{eqnarray*}
That is, $\mu^*(\bigcup\limits_{i \in \omega}A_i) \sqsubseteq \sum\limits_{i \in \omega} \mu^*(A_i)$.

For the second statement, the fact that $\{E\}$ is a cover for $E \in \R$ immediately gives $\mu^*(E) \sqsubseteq \mu(E)$. To show $\mu(E) \sqsubseteq \mu^*(E)$, we must show $\mu(E) \sqsubseteq \sum\limits_{n \in \omega} \mu(E_n)$ for each pairwise disjoint family $(E_n \in \R)_{n \in \omega}$ such that $E \subseteq \bigcup\limits_{n \in \omega} E_n$. Given such a family $(E_n)_{n \in \omega}$, the family $(E \cap (E_0 \cup \ldots \cup E_n))_{n \in \omega}$ satisfies $\bigcup\limits_{n \in \omega}(E \cap (E_0 \cup \ldots \cup E_n)) = E \cap (\bigcup\limits_{n \in \omega} E_n) = E \in \R$, and then Proposition~\ref{prop:ascending-ring} gives $\sup_{n \in \omega} \mu(E \cap (E_0 \cup \ldots \cup E_n)) = \mu(E)$. On the other hand, $\mu(E \cap (E_0 \cup \ldots \cup E_n)) \sqsubseteq \mu(E_0 \cup \ldots \cup E_n) = \sum\limits_{i \in \{1,\ldots,n\}} \mu(E_i) \sqsubseteq \sum\limits_{n \in \omega} \mu(E_n)$ for all $n \in \omega$. Putting these together we have $\mu(E) = \sup_{n \in \omega} \mu(E \cap (E_0 \cup \ldots \cup E_n)) \sqsubseteq \sum\limits_{n \in \omega} \mu(E_n)$ as required.
\end{proof}

We now proceed as in the standard case to show that if $\mu^*$ is as in Definition~\ref{def:induced-outer-meas}, then the elements of $\R$ are $\mu^*$-measurable. For this, we additionally assume that $\R$ is a field.

\begin{prop}
\label{prop:field}
Let $\R$ be a field, and let $\mu : \R \to S$ and $\mu^* : \H(\R) \to S$ be as in Definition~\ref{def:induced-outer-meas}. Then each $E \in \R$ is $\mu^*$-measurable.
\end{prop}
\begin{proof}
We must show that for each $A \in \H(\R)$, the sum $\mu^*(A \cap E) + \mu^*(A \cap (X \setminus E))$ is defined and moreover, $\mu^*(A) = \mu^*(A \cap E) + \mu^*(A \cap (X \setminus E))$.

Since $\{E\}$ is a finite (and therefore countable) cover for $A \cap E$, and $\{X \setminus E\}$ is a finite cover for $A \cap (X \setminus E)$, it immediately follows that $\mu^*(A \cap E) \sqsubseteq \mu(E)$ and $\mu^*(A \cap (X \setminus E)) \sqsubseteq \mu(X \setminus E)$.  Definedness of $\mu^*(A \cap E) + \mu^*(A \cap (X \setminus E))$ now follows by (\ref{c2}) of Remark~\ref{rem-semiring}. Moreover, since $\mu^*$ is an outer measure and $A \cap E) \cap (A \cap (X \setminus E)) = \emptyset$, we immediately get $\mu^*(A) \sqsubseteq \mu^*(A \cap E) + \mu^*(A \cap (X \setminus E))$. For the reverse direction, note that any cover $(E_i)_{i \in \omega}$ for $A$ consisting of pairwise disjoint elements of $\R$ yields a cover $(E_i \cap E)_{i \in \omega}$ for $A \cap E$ and a cover $(E_i \cap (X \setminus E))_{i \in \omega}$ for $A \cap (X \setminus E)$, each consisting of pairwise disjoint elements. Then,
\begin{align*}
\sum\limits_{i \in \omega}\mu(E_i) & = \sum\limits_{i \in \omega}(\mu(E_i \cap E) + \mu(E_i \cap (X \setminus E))) \tag{Remark~\ref{rem:definedness}}\\
& = \sum\limits_{i \in \omega} \mu(E_i \cap E) + \sum\limits_{i \in \omega} \mu(E_i \cap (X \setminus E)) \tag{Remark~\ref{commutativity-rem}} \\
& \sqsupseteq \mu^*(A \cap E) + \mu^*(A \cap (X \setminus E)) \\ \tag{definition of $\mu^*(A \cap E)$ and $\mu^*(A \cap (X \setminus E))$}
\end{align*}
The above together with the definition of $\mu^*(A)$ now gives $\mu^*(A) \sqsupseteq \mu^*(A \cap E) + \mu^*(A \cap (X \setminus E))$. This concludes the proof.
\end{proof}

We are now ready to formulate our generalised measure extension result.

\begin{thm}
\label{thm:measure-extension}
Let $\mu : \R \to S$ be a measure on a field $\R$. Then, $\mu$ extends to a measure on the $\sigma$-algebra generated by $\R$.
\end{thm}
\begin{proof}
By Proposition~\ref{prop:outer}, the $S$-valued function $\mu^* : \H(\R) \to S$ of Definition~\ref{def:induced-outer-meas} is an outer measure which extends $\mu$. Then, by Proposition~\ref{prop:sigma-alg-M}, the set $\M$ of $\mu^*$-measurable sets is a $\sigma$-algebra containing $\R$, and the restriction of $\mu^*$ to $\M$ is an $S$-valued measure. Finally, by Proposition~\ref{prop:field}, $\M$ contains $\R$, and therefore also the $\sigma$-algebra generated by $\R$. This makes $\mu^*$ an $S$-valued measure on the $\sigma$-algebra generated by $\R$.
\end{proof}

In the standard case, the uniqueness of an extension of a measure $\mu$ on a field $\R$ to a measure on the generated $\sigma$-algebra requires $\mu$ to be  \emph{$\sigma$-finite} (cf.~Carath\'eodory's extension theorem, see e.g.~\cite[Theorem~1.3.10]{Ash}). Under the $\sigma$-finiteness assumption, the induced measure additionally satisfies $\mu(\bigcap\limits_{i \in \omega}A_i) = \inf\limits_{i \in \omega} \mu(A_i)$ with $A_0 \supseteq A_1 \supseteq \ldots$ a decreasing countable family of measurable sets. (Recall from Proposition~\ref{prop:countable-unions} that $S$-valued measures on $\sigma$-algebras always satisfy 
$\mu(\bigcup\limits_{i \in \omega}A_i) = \sup_{i \in \omega} \mu(A_i)$
 whenever $A_0 \subseteq A_1 \subseteq \ldots$ is an increasing countable family of measurable sets.)  Since the $S$-valued measures used later to provide a path-based semantics for our logics do not satisfy the equality $\mu_j(\bigcap\limits_{i \in \omega}A_i) = \inf\limits_{i \in \omega} \mu_j(A_i)$ for all $A_0 \supseteq A_1 \supseteq \ldots$ with $A_i$ measurable for $i \in \omega$ (see Example~\ref{counter-ex}), we are not interested in a uniqueness result generalising the above-mentioned result.

\section{Path-Based Semantics for Quantitative Linear-Time Logics}
\label{path-based-sem}

This section provides an alternative, path-based semantics for the logic $\mu\LL_\Lambda$, akin to the path-based semantics of LTL and the linear-time $\mu$-calculus (interpreted over either non-deterministic or probabilistic transition systems).

We begin by defining a $\sigma$-algebra structure on the set of paths from a given state in a $\T_S \circ F$-coalgebra $(C,\gamma)$. We subsequently show how to define an $S$-valued measure (Definition~\ref{semiring-valued-measure}) on this $\sigma$-algebra.

Similarly to the semantics of probabilistic LTL (see e.g.~\cite[Section~10.3]{Baier}), our $\sigma$-algebra is induced by so-called \emph{cylinder sets}. We begin by associating to each path fragment $q$ in $(C,\gamma)$ a cylinder set.

\begin{defi}
Let $q \in I_C$ be a path fragment from $c$ in $(C,\gamma)$. Its associated \emph{cylinder set} $\Cyl(q)$ is given by
\begin{eqnarray*}
\Cyl(q) = \{p \in \Paths_c \mid q \in \pref(p)\}
\end{eqnarray*}
\end{defi}
We note that only paths whose transitions have non-zero weights in $(C,\gamma)$ are considered.  
\begin{prop}
\label{prop:semi-ring}
For $c \in C$, the set 
\begin{eqnarray*}
{\Sigma}_c := \{ \emptyset\} \cup \{\,\Cyl(q) \mid q \text{ is a path fragment from $c$ in $(C,\gamma)$}\,\}
\end{eqnarray*}
is a semi-ring (Definition~\ref{semi-ring}). Moreover, $\Paths_c \in \Sigma_c$.
\end{prop}
\begin{proof}
(1) of Definition~\ref{semi-ring} clearly holds. For (2), note that $\Cyl(q_1) \cap \Cyl(q_2)$ is either $\emptyset$, if $q_1$ and $q_2$ are incompatible (Definition~\ref{defi:pref}), or $\Cyl(q)$, if $q$ is the smallest path fragment from $c$ such that $q_1 \in \pref(q)$ and $q_2 \in \pref(q)$, otherwise. For (3) of Definition~\ref{semi-ring}, assuming $A = \Cyl(p)$ and $B = \Cyl(q)$ with $A \setminus B \ne \emptyset$, unfolding the coalgebra structure $\gamma$ a number of times equal to $\max(\depth(p),\depth(q))$ allows us to identify a \emph{finite} number of path fragments $q_1,\ldots,q_n$ from $c$ in $(C,\gamma)$, such that $\Cyl(q_i) \cap \Cyl(q_j) = \emptyset$ for $i,j \in \{1,\ldots,n\}$ with $i \ne j$, and such that $\Cyl(p) \setminus \Cyl(q) = \Cyl(q_1) \cup \ldots \cup \Cyl(q_n)$. Finally, $\Paths_c \in \Sigma_c$ follows from $\Paths_c = \Cyl(q_c)$, where $q_c$ is the path fragment given by $(c,\iota_1(*))$. 
\end{proof}

\begin{defi}
\label{defi-measurable-sets}
For a $\T_S \circ F$-coalgebra $(C,\gamma)$ and $c \in C$, the set $\M_c \subseteq \Pow \Paths_c$ of \emph{measurable sets of paths from $c$} is 
the smallest $\sigma$-algebra containing $\Sigma_c$ (see Section~\ref{semiring-measure-gen}).
\end{defi}
Equivalently, $\M_c$ is the smallest $\sigma$-algebra induced by the smallest ring $\R_c$ containing $\Sigma_c$. The elements of $\R_c$ are finite unions of pairwise disjoint elements of $\Sigma_c$. Moreover, by Proposition~\ref{prop:semi-ring}, $\R_c$ is a field. Then, by Theorem~\ref{thm:measure-extension}, any $S$-valued measure $\mu$ on $\R_c$ (and therefore any $S$-valued measure on $\Sigma_c$) extends to an $S$-valued measure $\mu^*$ on the induced $\sigma$-algebra, constructed as in Section~\ref{semiring-measure-gen} with the help of an outer-measure.

\begin{defi}[Induced measure]
\label{induced-measure}
Let $(C,\gamma)$ be a $\T_S \circ F$-coalgebra. For $c \in C$, let $\mu_\gamma  : \Sigma_c \to S$ be given by:
\begin{enumerate}
\item $\mu_\gamma(\emptyset)=0$,
\item For $q$ a path fragment from $c$ in $(C,\gamma)$, $\mu_\gamma(\Cyl(q))$ is defined by induction on the structure of $q$:
\begin{enumerate}
\item If $\iota^{-1}_C(q) = (c,\iota_1(*))$, then $\mu_\gamma(\Cyl(q)) = \extent^\nu_\gamma(c)$ (the $\nu$-extent of Definition~\ref{extent-coalgebra}),
\item If $\iota^{-1}_C(q) = (c,\iota_\lambda (q_1,\ldots,q_{\arity(\lambda)}))$ for some $\lambda \in \Lambda$ and $\pi_1(\iota^{-1}_C(q_i)) = c_i$ for $i \in \{1,\ldots,\arity(\lambda)\}$, then
\begin{eqnarray*}
\mu_\gamma(\Cyl(q)) = \gamma(c)(\iota_\lambda(c_1,\ldots,c_{\arity(\lambda)})) \bullet \mu_\gamma(\Cyl(q_1)) \bullet \ldots \bullet \mu_\gamma(\Cyl(q_{\arity(\lambda)}))
\end{eqnarray*}
\end{enumerate}
\end{enumerate}
Also, let $\mu_\gamma : \R_c \to S$ be given by $\mu_\gamma(\bigcup\limits_{i \in \{1,\ldots,n\}}C_i) = \sum_{i \in \{1,\ldots,n\}} \mu_\gamma(C_i)$ for each finite family $(C_i)_{i \in \{1,\ldots,n\}}$ with $C_i \in \Sigma_c$ and $C_i \cap C_j = \emptyset$ for $i \ne j$. The \emph{induced measure} $\mu_\gamma : \M_c \to S$ is the extension of $\mu_\gamma : \R_c \to S$ to an $S$-valued measure on $\M_c$, as given by the proof of Theorem~\ref{thm:measure-extension}.
\end{defi}

The correctness of the above definition (i.e.~the existence of $\mu_\gamma : \R_c \to S$) is guaranteed by the additivity of $\mu_\gamma : \Sigma_c \to S$:

\begin{prop} Let $\mu_\gamma : \Sigma_c \to S$ be as in Definiton~\ref{induced-measure}. Then, $\mu_\gamma$ satisfies the assumptions of Proposition~\ref{prop:ext-semi-ring}.
\end{prop} 
\begin{proof}
That the first two assumptions are satisfied is an easy exercise (with induction and use of (\ref{c3}) of Remark~\ref{rem-semiring} being needed for the second assumption). For the third assumption, let $q$ be a path fragment from $c$ in $(C,\gamma)$, and assume $\Cyl(q) = \Cyl(q_1) \cup \ldots \cup \Cyl(q_n)$, with $\Cyl(q_i) \cap \Cyl(q_j) = \emptyset$ for $i \ne j$. Since for $i \in \{1,\ldots,n\}$, $\Cyl(q_i) \subseteq \Cyl(q)$, an easy induction on $\depth(q)$ gives $q \in \pref(q_i)$. Moreover, since the cylinder sets $\Cyl(q_i)$ with $i \in \{1,\ldots,n\}$ are pairwise disjoint, we can assume w.l.o.g.~that the path fragments $q_1,\ldots,q_n$ are pairwise incompatible: if this is not the case for $q_i$ and $q_j$, one can always replace each of $q_i$ and $q_j$ with an equivalent set of path fragments of depth uniformly $\max(\depth(q_i),\depth(q_j))$, and subsequently remove the path fragments whose associated cylinder sets have measure $0$. The fact that $\mu_\gamma(\Cyl(q)) = \mu_\gamma(\Cyl(q_1)) + \ldots + \mu_\gamma(\Cyl(q_n))$ then follows by induction on $\max \{\depth(q_i) \mid i \in \{1,\ldots,n\} \,\} - \depth(q)$. In the base case, the path fragments $q_1,\ldots,q_n$ all extend $q$ with at most one step in each leaf of $q$. Then, since the cylinder sets associated to $q_1,\ldots,q_n$ cover $\Cyl(q)$, it follows that for all path fragments that extend $q$ and are not subsumed by one of $q_1,\ldots,q_n$, the associated cylinder sets are empty (and thus have measure $0$). This, together with the definition of $\nu$-extents, gives $\mu_\gamma(\Cyl(q)) = \mu_\gamma(\Cyl(q_1)) + \ldots + \mu_\gamma(\Cyl(q_n))$. The induction step follows by applying the induction hypothesis to the path fragments $q^1, \ldots,q^{\arity(\lambda)}$, where the tree associated to $q$ has root given by $(c,\lambda)$ and immediate sub-trees corresponding to $q^1, \ldots,q^{\arity(\lambda)}$.
\end{proof}
As a result, $\mu_\gamma : \Sigma_c \to S$ extends to a measure on the field $\R_c$ generated by $\Sigma_c$. We can now consider the $S$-valued measure $\mu_\gamma$ on the smallest $\sigma$-algebra containing $\Sigma_c$, given by (the restriction of) the induced outer measure $\mu_\gamma^*$. (Recall that we did not prove a uniqueness result for an $S$-valued measure extending $\mu_\gamma : \Sigma_c \to S$. At this point it is not clear to us whether such a result can be proved, and what additional assumptions on the semiring $S$ this might require.) To avoid confusion with least fixpoint formulas, we will always use the subscript $\gamma$ when referring to the resulting $S$-valued measure.

\begin{exa}
\label{ex:measure}
Using Example~\ref{exa:extent} and Definition~\ref{induced-measure}, we now obtain the following instances of the resulting measure $\mu_\gamma : \M_c \to S$:
\begin{enumerate}
\item For $S = (\{0,1\},\vee,0,\wedge,1)$ and $A \in \M_c$, we have $\mu_\gamma(A) = 1$ if and only if $A \ne \emptyset$.
\item For $S = ([0,1],+,0,*,1)$ and $A \in \M_c$, $\mu_\gamma(A)$ gives the probability of exhibiting a path in $A$ from state $c$. Note that in this case, under the assumption that the sum of the probabilities of outgoing transitions from each state equals $1$, the induced measure coincides with the measure used when interpreting LTL over \emph{probabilistic transition systems}\footnote{These are essentially discrete-time Markov chains, but without the initial distribution.} (see e.g.~\cite[Section~10.3]{Baier}).
\item For $S = (\mathbb N^\infty,\min,\infty, +,0)$ and $A \in \M_c$, $\mu_\gamma(A)$ gives the minimal weight of a path in $A$ from state $c$. 
\end{enumerate}
\end{exa}

The next example shows that the resulting measure $\mu_\gamma : \M_c \to S$ does \emph{not} satisfy
\begin{eqnarray*}
\mu_\gamma(\bigcap\limits_{n \in \omega} A_n) = \inf\limits_{n \in \omega} \mu_\gamma(A_n)
\end{eqnarray*}
whenever $A_n$ with $n \in \omega$ are measurable sets such that $A_0 \supseteq A_1 \supseteq \ldots$.

\begin{exa}
\label{counter-ex}
Let $S = (\{0,1\},\vee,0,\wedge,1)$, and consider the $\T_S \circ (\{a,b\} \times \Id)$-coalgebra depicted below, (where all transitions shown have weight $1$):
\begin{eqnarray*}
\UseComputerModernTips\xymatrix@-1pc{
c_0 \ar[r]^-{a} \ar[d]_-{b} & d_0^0 \ar@(ur,dr)[]^-{b}\\
c_1 \ar[r]^-{a} \ar[d]_-{b} & d_1^0  \ar[r]^-{a} & d_1^1 \ar@(ur,dr)[]^-{b}\\
c_2 \ar[r]^-{a} \ar[d]_-{b} & d_2^0  \ar[r]^-{a} & d_2^1 \ar[r]^-{a} & d_2^2 \ar@(ur,dr)[]^-{b}\\
\vdots}
\end{eqnarray*}
Now consider the following sets of paths from $c_0$:
\begin{eqnarray*}A_0 & = & \{
\UseComputerModernTips\xymatrix@-1pc{c_0 \ar[r]^-{a} & d_0^0 \ar@(ur,dr)[]^-{b}},\,
\UseComputerModernTips\xymatrix@-1pc{c_0 \ar[r]^-{b} & c_1 \ar[r]^-{a} & d_1^0  \ar[r]^-{a} & d_1^1 \ar@(ur,dr)[]^-{b}},\,
\UseComputerModernTips\xymatrix@-1pc{c_0 \ar[r]^-{b} & c_1 \ar[r]^-{b} & c_2 \ar[r]^-{a} & d_2^0  \ar[r]^-{a} & d_2^1 \ar[r]^-{a} & d_2^2 \ar@(ur,dr)[]^-{b}},\,
\ldots
\}\\
A_1 & = & \{
\UseComputerModernTips\xymatrix@-1pc{c_0 \ar[r]^-{b} & c_1 \ar[r]^-{a} & d_1^0  \ar[r]^-{a} & d_1^1 \ar@(ur,dr)[]^-{b}},\,
\UseComputerModernTips\xymatrix@-1pc{c_0 \ar[r]^-{b} & c_1 \ar[r]^-{b} & c_2 \ar[r]^-{a} & d_2^0  \ar[r]^-{a} & d_2^1 \ar[r]^-{a} & d_2^2 \ar@(ur,dr)[]^-{b}},\,
\ldots
\}\\
\ldots
\end{eqnarray*}
with $A_0 \supseteq A_1 \supseteq \ldots$. That is, for $i \in \omega$, $A_i$ consists of those paths from $c_0$ containing at least $i + 1$ $a$-labelled transitions. Then, $\bigcap\limits_{i \in \omega}A_i = \emptyset$ and so $\mu_\gamma(\bigcap\limits_{i \in \omega}A_i) = 0$, however, $\mu_\gamma(A_i) = 1$ for all $i \in \omega$, and therefore $\inf\limits_{i \in \omega} \mu_\gamma(A_i) = 1$. 
\end{exa}

The above equality holds, however, when each $A_n$ is a \emph{finite} union of cylinder sets.

\begin{prop}
\label{prop:inf}
Let $A_0 \supseteq A_1 \supseteq \ldots$, with each $A_n$ ($n \in \omega$) being given by a \emph{finite} union of cylinder sets; that is, $A_n = \bigcup \A_n$ with $\A_n$ consisting of finitely-many cylinder sets, for $n \in \omega$. Then 
\begin{eqnarray*}
\mu_\gamma(\bigcap\limits_{n \in \omega} A_n) = \inf\limits_{n \in \omega} \mu_\gamma(A_n)
\end{eqnarray*}
\end{prop}
\begin{proof}
Clearly, $\mu_\gamma(\bigcap\limits_{n \in \omega}A_n) \sqsubseteq \inf\limits_{n \in \omega} \mu_\gamma (A_n)$:  $\bigcap\limits_{n \in \omega}A_n \subseteq A_n$ gives $\mu_\gamma(\bigcap\limits_{n \in \omega}A_n) \sqsubseteq \mu_\gamma(A_n)$ for all $n \in \omega$, and therefore $\mu_\gamma(\bigcap\limits_{n \in \omega}A_n) \sqsubseteq \inf\limits_{n \in \omega} \mu_\gamma (A_n)$.

To show that $\mu_\gamma(\bigcap\limits_{n \in \omega}A_n) \sqsupseteq \inf\limits_{n \in \omega} \mu_\gamma (A_n)$ also holds, recall from Definition~\ref{def:induced-outer-meas} that $\mu_\gamma(\bigcap\limits_{n \in \omega}A_n) = \inf \{\,\sum\limits_{k \in \omega} \mu(E_k) \mid (E_k \in \R_c)_{k \in \omega} \text{ pairwise disjoint},~ \bigcap\limits_{n \in \omega}A_n \subseteq \bigcup\limits_{k \in \omega} E_k \}$, which, by the definition of $\R_c$, coincides with $\inf \{\,\sum\limits_{k \in \omega} \mu(B_k) \mid (B_k \in \Sigma_c)_{n \in \omega} \text{ pairwise disjoint},~ \bigcap\limits_{n \in \omega}A_n \subseteq \bigcup\limits_{k \in \omega} B_k \}$. We now fix a countable family $(B_k \in \Sigma_c)_{k \in \omega}$ of pairwise disjoint cylinder sets with $\bigcap\limits_{n \in \omega}A_n \subseteq \bigcup\limits_{k \in \omega} B_k$. We can assume w.l.o.g.~that all cylinder sets $B_k$ with $k \in \omega$ have uniform depth, that is, when viewed as trees, all the paths from the root to the leaves of the tree have the same length. (Each cylinder set which does not have uniform depth can be replaced by a \emph{finite} number of pairwise disjoint cylinder sets of uniform depth, with the measure of the union of these cylinder sets being equal to the measure of the original cylinder set.) We can similarly assume that each $\A_n$ consists of pairwise disjoint cylinder sets of uniform depth, and moreover, each cylinder set in $\A_{n+1}$ is a subset of some cylinder set in $\A_n$. (Overlaps between different cylinder sets belonging to some $\A_n$ can be removed by replacing the cylinder sets in question with finite disjoint unions of cylinder sets of larger depth. Also, whenever a cylinder set in $\A_{n+1}$ is not included in a \emph{single} cylinder set in $\A_{n}$, the former can be replaced by a finite number of pairwise disjoint cylinder sets of larger depth, in such a way that the property is established.) We now show that, for each cylinder set $C \in \A_0$, there exists $n_C \in \omega$ such that all cylinder sets in $\A_{n_C}$ which are included in $C$ are also included in $\bigcup\limits_{k \in \omega} B_k$. For, if this was not the case, there would exist a decreasing sequence of cylinder sets $C=C_0 \supseteq C_1 \supseteq \ldots$ with $C_n \in \A_n$ and $C_n \not\subseteq \bigcup\limits_{k \in \omega} B_k$ for $n \in \omega$ -- this follows by applying K\"onig's lemma to the graph whose nodes are cylinder sets belonging to some $\A_n$ and included in $C$, but not included in $\bigcup\limits_{k \in \omega} B_k$, and whose edges are determined by the subset relation between cylinder sets at successive levels. This, in turn, would yield an element $z \in Z$ such that $z \in \bigcap\limits_{n\in \omega} A_n$ (as $z \in C_n \subseteq A_n$ for all $n \in \omega$) but $z \not\in \bigcup\limits_{k \in \omega} B_k$ (as none of the cylinder sets $C_i$ is contained in $\bigcup\limits_{k \in \omega} B_k$). Now by finiteness of $\A_0$, there exists $m \in \omega$ such that $A_m \subseteq \bigcup\limits_{k \in \omega} B_k$ -- simply take $m = \max \{n_C \mid C \in \A_0\}$. We now have:
\begin{eqnarray*}
\inf\limits_{n \in \omega} \mu_\gamma(A_n) \sqsubseteq \mu_\gamma(A_m) \sqsubseteq \sum\limits_{k \in \omega} \mu_\gamma(B_k)
\end{eqnarray*}
As the above holds for \emph{any} choice of cylinder set cover $(B_k)_{k \in \omega}$ for $\bigcap\limits_{n \in \omega} A_n$, we now obtain:
\begin{eqnarray*}
\inf\limits_{n \in \omega} \mu_\gamma(A_n) \sqsubseteq \mu_\gamma(\bigcap\limits_{n \in \omega} A_n)
\end{eqnarray*}
This concludes the proof.
\end{proof}

Before defining the path-based semantics for the logic $\mu\LL_\Lambda$, we prove one more property of the $\sigma$-algebras $\M_c$ and of the associated measures $\mu_\gamma : \M_c \to S$ with $c \in C$. This will be needed to show that the path-based semantics of modal formulas (of the form $\langle \lambda \rangle(\varphi_1,\ldots,\varphi_{\arity(\lambda)})$) agrees with their step-wise semantics.

\begin{prop}
\label{prop:lemma}
Let $\lambda \in \Lambda$ and $c,c_1,\ldots,c_{\arity(\lambda)} \in C$ be such that $\gamma(c)(\iota_\lambda(c_1,\ldots,c_{\arity(\lambda)})) \ne 0$. 
If $A_i \in \M_{c_i}$ for $i \in \{1,\ldots,\arity(\lambda)\}$, then $(c,\lambda)(c_1,\ldots,c_{\arity(\lambda)})[A_1/c_1,\ldots,A_{\arity(\lambda)}/c_{\arity(\lambda)}] \in \M_c$. Moreover, the following holds:
{\small \begin{eqnarray*}
\mu_\gamma((c,\lambda)(c_1,\ldots,c_{\arity(\lambda)})[A_1/c_1,\ldots,A_{\arity(\lambda)}/c_{\arity(\lambda)}]) = \gamma(c)(\iota_\lambda(c_1,\ldots,c_{\arity(\lambda)})) \bullet \mu_\gamma(A_1) \bullet \ldots \bullet \mu_\gamma(A_{\arity(\lambda)})\,.
\end{eqnarray*}}
\end{prop}
\begin{proof}
The first statement is immediate in the case when one of $A_1,\ldots,A_{\arity(\lambda)}$ is the empty set. Assume therefore that this is not the case (and thus also none of $\Paths_{c_1},\ldots,\Paths_{c_{\arity(\lambda)}}$ is the empty set). The statement now follows from $(c,\lambda)(c_1,\ldots,c_{\arity(\lambda)})[A_1/c_1,\ldots,A_{\arity(\lambda)}/c_{\arity(\lambda)} ] = \bigcap\limits_{i \in \{1,\ldots,\arity(\lambda)\}} (c,\lambda)(c_1,\ldots,c_{\arity(\lambda)})[\Paths_{c_1}/c_1,\ldots,A_i/c_i,\ldots,\Paths_{c_{\arity(\lambda)}}/c_{\arity(\lambda)}]$, together with each of the sets of paths in the above intersection belonging to $\M_c$. To see why the latter is true, note that we can use the $\sigma$-algebra $\M_c \subseteq \Pow \Paths_c$ to derive a $\sigma$-algebra $\M'_{c_i} \subseteq \Pow \Paths_{c_i}$ for each $i \in \{1,\ldots,\arity(\lambda)\}$ by taking
\begin{eqnarray*}
\M'_{c_i} = \{A_i \subseteq \Paths_{c_i} \mid (c,\lambda)[\Paths_{c_1}/c_1,\ldots,A_i/c_i,\ldots,\Paths_{c_{\arity(\lambda)}}/c_{\arity(\lambda)}] \in \M_c\}\,.
\end{eqnarray*}
That this yields a $\sigma$-algebra follows from $\M_c$ being a $\sigma$-algebra: (i) $\emptyset \in \M'_{c_i}$ follows from $\emptyset \in \M_c$, (ii) closure of $\M'_{c_i}$ under complements follows from the fact that, for $A_i \in \M'_{c_i}$, both $(c,\lambda)[\Paths_{c_1}/c_1,\ldots,\Paths_{c_{\arity(\lambda)}}/c_{\arity(\lambda)}]$ and $(c,\lambda)[\Paths_{c_1}/c_1,\ldots,A_i/c_i,\ldots,\Paths_{c_{\arity(\lambda)}}/c_{\arity(\lambda)}]$ belong to $\M_c$, and therefore so does their set difference $(c,\lambda)[\Paths_{c_1}/c_1,\ldots,(\Paths_{c_i} \setminus A_i)/c_i,\ldots,\Paths_{c_{\arity(\lambda)}}/c_{\arity(\lambda)}]$, and (iii) closure of $\M'_{c_i}$ under countable unions follows from the closure of $\M_c$ under countable unions of sets of the form used in the definition of $\M'_{c_i}$. Now $\M'_{c_i}$ contains all cylinder sets in $\Sigma_{c_i}$ (as $\M_c$ contains all cylinder sets in $\Sigma_c$), and therefore $\M'_{c_i} \supseteq \M_{c_i}$ (as the latter is the \emph{smallest} such $\sigma$-algebra). Thus, $A_i \in \M_{c_i}$ implies $A_i \in \M'_{c_i}$ implies $(c,\lambda)[\Paths_{c_1}/c_1,\ldots,A_i/c_i,\ldots,\Paths_{c_{\arity(\lambda)}}/c_{\arity(\lambda)}] \in \M_c$ as required.

The proof of the second statement is less straightforward. We reduce proving the equality to proving two inequalities.
\begin{enumerate}
\item For the "$\sqsubseteq$" part, we show that the following holds for any choice of countable, disjoint cylinder set covers $\C_i$ for $A_i$, with $i \in \{1,\ldots,\arity(\lambda)\}$:
{\small 
\begin{eqnarray*}
 \mu_\gamma((c,\lambda)(c_1,\ldots,c_{\arity(\lambda)})[A_1/c_1,\ldots,A_{\arity(\lambda)}/c_{\arity(\lambda)}]) \sqsubseteq \gamma(c)(\iota_\lambda(c_1,\ldots,c_{\arity(\lambda)})) \bullet \mu_\gamma(\C_1) \bullet \ldots \bullet \mu_\gamma(\C_{\arity(\lambda)})\,.
\end{eqnarray*}}
For this, note that the covers $\C_i$ yield a countable, disjoint cylinder set cover $\C$ for $(c,\lambda)(c_1,\ldots,c_{\arity(\lambda)})[A_1/c_1,\ldots,A_{\arity(\lambda)}/c_{\arity(\lambda)}]$, with each $C \in \C$ given by $\Cyl(q)$ where $q$ is a path fragment with root $(c,\lambda)$ and sub-trees $q_1,\ldots,q_{\arity(\lambda)}$, with $\Cyl(q_i) \in \C_i$ for $i \in \{1,\ldots,\arity(\lambda)\}$. It then follows from the definition of $\mu_\gamma$ on cylinder sets (Definition~\ref{induced-measure}) together with the distributivity of $\bullet$ over countable sums (Remark~\ref{rem:distrib-countable-sums})  that $\mu_\gamma(\C) = \gamma(c)(\iota_\lambda(c_1,\ldots,c_{\arity(\lambda)})) \bullet \mu_\gamma(\C_1)  \bullet \ldots \bullet \mu_\gamma(\C_{\arity(\lambda)})$. This gives the above inequality. Moreover, since this inequality holds for an arbitrary choice of $\C_i$s, and since $\bullet$ preserves infima in each argument (see Assumption~\ref{ass:cl}), we obtain
{\small 
\begin{eqnarray*}
 \mu_\gamma((c,\lambda)(c_1,\ldots,c_{\arity(\lambda)})[A_1/c_1,\ldots,A_{\arity(\lambda)}/c_{\arity(\lambda)}]) \sqsubseteq \gamma(c)(\iota_\lambda(c_1,\ldots,c_{\arity(\lambda)})) \bullet \mu_\gamma(A_1) \bullet \ldots \bullet \mu_\gamma(A_{\arity(\lambda)})
\end{eqnarray*}}
as required.
\item For the "$\sqsupseteq$" part, we let $\C$ be a countable, disjoint cylinder set cover for the set $(c,\lambda)(c_1,\ldots,c_{\arity(\lambda)})[A_1/c_1,\ldots,A_{\arity(\lambda)}/c_{\arity(\lambda)}]$. The obvious way to proceed would be to construct cylinder set covers $\C_i$ for $A_i$, such that $\C \supseteq (c,\lambda)(c_1,\ldots,c_{\arity(\lambda)})[\C_1/c_1,\ldots,$\linebreak$\C_{\arity(\lambda)}/c_{\arity(\lambda)}]$. However, this turns out to be difficult, if at all possible. Instead, we will construct a decreasing sequence of covers $\bigcup \C^i_0 \supseteq \bigcup \C^i_1 \supseteq \ldots$ for each $A_i$, with $i \in \{1,\ldots,\arity(\lambda)\}$, additionally satisfying $\C \supseteq (c,\lambda)(c_1,\ldots,c_{\arity(\lambda)})[(\bigcap\limits_{n \in \omega} \bigcup \C^1_n)/c_1,\ldots,$\linebreak$(\bigcap\limits_{n \in \omega} \bigcup \C^{\arity(\lambda)}_n)/c_{\arity(\lambda)}]$, and use this inclusion to show the required inequality.

We assume w.l.o.g.~that $\C$ only contains cylinder sets of uniform depth, listed in increasing depth order, and that $\C$ is \emph{minimal}, in that no proper subset of $\C$ covers $(c,\lambda)(c_1,\ldots,c_{\arity(\lambda)})[A_1/c_1,\ldots,A_{\arity(\lambda)}/c_{\arity(\lambda)}]$. (Any cylinder set in $\C$ with non-uniform depth can be replaced by a \emph{finite} number of pairwise disjoint cylinder sets of uniform depth, with the measure of the union of these cylinder sets being equal to the measure of the original cylinder set. Also, cylinder sets (of uniform depth) that are not needed to cover $(c,\lambda)(c_1,\ldots,c_{\arity(\lambda)})[A_1/c_1,\ldots,A_{\arity(\lambda)}/c_{\arity(\lambda)}]$ can be removed from $\C$ one by one to obtain a minimal cover; each such removal can only decrease the value of $\mu_\gamma(\C)$.) Now assuming $\C$ consists of cylinder sets $B_j = \Cyl((c,\lambda)(q_1^j,\ldots,q_{\arity(\lambda)}^j))$ with $j \in \omega$, we have, for $j,k \in \omega$ with $j < k$ and for $i \in \{1,\ldots,\arity(\lambda)\}$, that either $\Cyl(q^j_i) \cap \Cyl(q^k_i) = \emptyset$ or $\Cyl(q^j_i) \supseteq \Cyl(q^k_i)$. A key observation then is that, if $B_j, B_k \in \C$ are such that $\Cyl(q^j_i) \supseteq \Cyl(q^k_i)$, then since $B_j \cap B_k = \emptyset$, and since each of $B_j, B_k$ is needed to cover the set $(c,\lambda)(c_1,\ldots,c_{\arity(\lambda)})[A_1/c_1,\ldots,A_{\arity(\lambda)}/c_{\arity(\lambda)}]$, the set $A_i$ can only contain a path in $\Cyl(q^j_i) \setminus \Cyl(q^k_i)$ if this is witnessed by some other $B_l$ such that $\Cyl(q^j_i) \supseteq \Cyl(q^l_i)$; in other words, the cylinder set $\Cyl(q^j_i)$ is not really needed to construct a cover of $A_i$, and considering instead $\Cyl(q^k_i)$ and any additional cylinder sets $\Cyl(q^l_i)$ as above is sufficient. Moreover, this applies to \emph{any} pair of cylinder sets $B_j, B_k \in \C$ as above. Then, a decreasing sequence of finite covers of cylinder sets for each $A_i$, with $i \in \{1,\ldots,\arity(\lambda)\}$, is obtained through the following steps:
\begin{enumerate}
\item The countable set of cylinder sets $\{\Cyl(q^j_i) \mid j \in \omega\}$ is partitioned into \emph{finite} collections $\C^i_n = \{ \Cyl(q^j_i) \mid j \in \omega,\,\depth(q^j_i) = n\}$, with $n \in \omega$.
\item The collections $\C^i_n$ with $n \in \omega$ are extended as follows:
\begin{itemize}
\item for $n = 0,1,\ldots$, if $\C^i_{n+1}$ does not contain \emph{any} subset of some cylinder set in $\C^i_n$, then that cylinder set is added to $\C^i_{n+1}$.
\item for $n = 0,1,\ldots$, if $\C^i_n$ does not contain any \emph{superset} of some cylinder set in $\C^i_{n+1}$, then that cylinder set is added to $\C^i_n$.
\end{itemize}
\end{enumerate}
For each $i \in \{1,\ldots,\arity(\lambda)\}$, the resulting collections $\C_i^n$ with $n \in \omega$ are still finite, and such that $\bigcup \C^i_0 \supseteq \bigcup \C^i_1 \supseteq \ldots$. Moreover, this decreasing sequence is such that $\bigcap\limits_{n \in \omega} \bigcup \C^i_n \supseteq A_i$ (because of the key observation above), and satisfies the hypothesis of Proposition~\ref{prop:inf} (which in turn gives $\mu_\gamma(\bigcap\limits_{n \in \omega} \bigcup \C^i_n) = \inf\limits_{n \in \omega} \mu_\gamma(\bigcup \C^i_n)$). Finally, our construction of the sets $\C^i_n$ with $n \in \omega$ and $i \in \{1,\ldots,\arity(\lambda)\}$ gives
\begin{eqnarray*}
(c,\lambda)(c_1,\ldots,c_{\arity(\lambda)})[(\bigcap\limits_{n \in \omega}\bigcup \C^1_n)/c_1,\ldots,(\bigcap\limits_{n \in \omega}\bigcup \C^{\arity(\lambda)}_n)/c_{\arity(\lambda)}] \subseteq \C\,. 
\end{eqnarray*}
We now have:
\begin{align*}
\mu_\gamma(\C) & \sqsupseteq \mu_\gamma((c,\lambda)(c_1,\ldots,c_{\arity(\lambda)})[(\bigcap\limits_{n \in \omega}\bigcup \C^1_n)/c_1,\ldots,(\bigcap\limits_{n \in \omega}\bigcup \C^{\arity(\lambda)}_n)/c_{\arity(\lambda)}])  \tag{monotonicity of $\mu_\gamma$}\\
& = \mu_\gamma(\bigcap\limits_{n \in \omega} (c,\lambda)(c_1,\ldots,c_{\arity(\lambda)})[\bigcup \C^1_n/c_1,\ldots,\bigcup \C^{\arity(\lambda)}_n/c_{\arity(\lambda)}]) \\
& = \inf\limits_{n \in \omega}\mu_\gamma((c,\lambda)(c_1,\ldots,c_{\arity(\lambda)})[\bigcup \C^1_n/c_1,\ldots,\bigcup \C^{\arity(\lambda)}_n/c_{\arity(\lambda)}]) \tag{*}\\
& = \inf\limits_{n \in \omega} \left(\gamma(c)(\iota_\lambda(c_1,\ldots,c_{\arity(\lambda)})) \bullet \mu_\gamma(\bigcup \C^1_n) \bullet \ldots \bullet \mu_\gamma(\bigcup \C^{\arity(\lambda)}_n) \right) \tag{**}\\
& = \gamma(c)(\iota_\lambda(c_1,\ldots,c_{\arity(\lambda)})) \bullet \inf\limits_{n \in \omega} (\mu_\gamma(\bigcup \C^1_n) \bullet \ldots \bullet \mu_\gamma(\bigcup \C^{\arity(\lambda)}_n))\\ \tag{Assumption~\ref{ass:cl}}\\
 & \sqsupseteq \gamma(c)(\iota_\lambda(c_1,\ldots,c_{\arity(\lambda)})) \bullet \inf\limits_{n \in \omega} \mu_\gamma(\bigcup \C^1_n) \bullet \ldots \bullet \inf\limits_{n \in \omega} \mu_\gamma(\bigcup \C^{\arity(\lambda)}_n) \\
 & = \gamma(c)(\iota_\lambda(c_1,\ldots,c_{\arity(\lambda)})) \bullet \mu_\gamma(\bigcap\limits_{n \in \omega}\bigcup \C^1_n) \bullet \ldots \bullet \mu_\gamma(\bigcap\limits_{n \in \omega}\bigcup \C^{\arity(\lambda)}_n)\\ \tag{Proposition~\ref{prop:inf}}\\
 &  \sqsupseteq \gamma(c)(\iota_\lambda(c_1,\ldots,c_{\arity(\lambda)})) \bullet  \mu_\gamma(A_1) \bullet \ldots \bullet \mu_\gamma(A_{\arity(\lambda)}) \tag{monotonicity of $\mu_\gamma$}
\end{align*}
In the above, (*) follows from the application of Proposition~\ref{prop:inf}, after noting that the sets $(c,\lambda)(c_1,\ldots,c_{\arity(\lambda)})[\bigcup \C^1_n/c_1,\ldots,\bigcup \C^{\arity(\lambda)}_n/c_{\arity(\lambda)}]$ with $n \in \omega$ satisfy the hypothesis of this result (by finiteness of each $\C_n^i$). Also, (**) follows from the definition of $\mu_\gamma$ on finite disjoint unions of cylinder sets and the distributivity of $\bullet$ over finite sums, after noting that each of $\C^i_n$ as well as $(c,\lambda)(c_1,\ldots,c_{\arity(\lambda)})[\bigcup \C^1_n/c_1,\ldots,\bigcup \C^{\arity(\lambda)}_n/c_{\arity(\lambda)}]$ can be written as finite disjoint unions of cylinder sets. Now as the above sequence of (in)equalities holds for any disjoint cylinder set cover $\C$ for $(c,\lambda)(c_1,\ldots,c_{\arity(\lambda)})[A_1/c_1,\ldots,A_{\arity(\lambda)}/c_{\arity(\lambda)}]$, we obtain
{\small 
\begin{eqnarray*}
\qquad \mu_\gamma((c,\lambda)(c_1,\ldots,c_{\arity(\lambda)})[A_1/c_1,\ldots,A_{\arity(\lambda)}/c_{\arity(\lambda)}]) \sqsupseteq \gamma(c)(\iota_\lambda(c_1,\ldots,c_{\arity(\lambda)})) \bullet \mu_\gamma(A_1) \bullet \ldots \bullet \mu_\gamma(A_{\arity(\lambda)})
\end{eqnarray*}}
as required. This concludes the proof of the "$\sqsupseteq$" part and the proof of the proposition.
\end{enumerate}
\end{proof}

We now observe that $(\Paths_C,\pi_2 \circ \zeta_C)$ is an $F$-coalgebra, and therefore there is a standard notion of satisfaction of formulas in $\mu\LL_\Lambda$ by states of this coalgebra, i.e.~by paths in $C$ (see Section~\ref{coalg-fixpoint-logics}). In particular, under this standard semantics, any state of an $F$-coalgebra satisfies the formula $\top$. We are finally ready to define a path-based semantics for the logic $\mu\LL_\Lambda$. 

\begin{defi}[Path-based semantics for $\mu\LL_\Lambda$]
\label{path-based-semantics}
Let $(C,\gamma)$ be a $\T_S \circ F$-coalgebra. The \emph{path-based semantics} $\llsem \phi \rrsem_\gamma \in S^C$ of a formula $\phi \in \mu\LL_\Lambda$ in $(C,\gamma)$ is given by $\llsem \phi \rrsem_\gamma(c) = \mu_\gamma(\Paths_c(\phi))$ for $c \in C$, where $\Paths_c(\phi) = \{ p \in \Paths_c \mid p \in \lsem \phi \rsem_{\pi_2 \circ \zeta_C}\}$.
\end{defi}

\begin{rem}
\label{rem-existing-semantics}
As expected, when $S = (\{0,1\}, \vee, 0, \wedge,1)$, the measure of Definition~\ref{induced-measure} associates a non-zero value to a measurable set of paths precisely when the set in question is non-empty. When $S = ([0,1],+,0,*,1)$, our semantics is similar to the probabilistic semantics of LTL, which also involves defining a $\sigma$-algebra structure on the set of maximal paths from a given state \cite[Chapter~10]{Baier}. However, differently from probabilistic systems, the definition of the $S$-valued measure induced by a $\T_S \circ F$-coalgebra has a coinductive flavour, given our use of $\nu$-extents to measure cylinder sets. When the sum of probabilities of outgoing transitions from each state in a $\T_S \circ F$-coalgebra is $1$, our definition of the induced measure is similar to the standard one, which only takes into account the probabilities of transitions that match a given path fragment $q$, as in this case the $\nu$-extent of each state is equal to $1$. However, when the above condition is not satisfied, or when $S$ is a \emph{different} (e.g.~the tropical) semiring, our definition takes into account not just the weights of transitions that match a path fragment $q$, but also the future linear-time behaviour ($\nu$-extent) of the states annotating the leaves of $q$. This is natural, given our emphasis on \emph{maximal} (rather than finite) traces.
\end{rem}

Definition~\ref{path-based-semantics} thus generalises existing path-based semantics for non-deterministic and probabilistic systems. The remainder of this section shows the correctness of this definition, by proving that the set $\Paths_c(\phi)$ is indeed measurable for each $\phi \in \mu\LL_\Lambda$ and $c \in C$. To this end, we associate sets of paths not just to formulas in $\mu\LL_\Lambda$, but also to formulas in $\mu\LL_\Lambda^{\V}$ with $\V$ a set of variables and $V : \V \to \{0,1\}^{\Paths_C}$ a valuation. Then we show that, under certain assumptions on $V$, those sets of paths are measurable. This, in turn, yields measurability of $\Paths_c(\phi)$. Our proof makes use of the concept of \emph{fixpoint nesting depth}.

\begin{defi}
The \emph{fixpoint nesting depth} of a formula $\phi \in \mu\LL_\Lambda^{\V}$, denoted $\fnd(\phi)$, is defined by induction on the structure of $\phi$:
\begin{itemize}
\item If $\phi = \bot$ or $\phi = \top$ or $\phi = x$ with $x \in \V$, then $\fnd(\phi) = 0$.
\item If $\phi = \langle \lambda \rangle(\phi_1,\ldots,\phi_{\arity(\lambda)})$, then $\fnd(\phi) = \max \{\fnd(\phi_i) \mid i \in I\}$. The definition in the case of modalities incorporating guarded disjunctions is similar.
\item If $\phi = \eta x.\phi_x$, then $\fnd(\phi) = \fnd(\phi_x)+1$.
\end{itemize}
\end{defi}
We now fix a $\T_S \circ F$-coalgebra $(C,\gamma)$. For $\phi \in \mu\LL_\Lambda^{\V}$, $c \in C$ and $V : \V \to \{0,1\}^{\Paths_C}$, we let $\Paths_c^{V}(\phi)  := \{ p \in \Paths_c \mid p \in \lsem \phi \rsem_{\pi_2 \circ \zeta_C}^V \}$.
For conciseness, in what follows we write $\zeta'_C$ for $\pi_2 \circ \zeta_C : \Paths_C \to F \Paths_C$.

\begin{prop}
\label{prop-meas}
Let $(C,\gamma)$ be a $\T_S \circ F$-coalgebra, and let $\phi \in \mu\LL_\Lambda^{\V}$. If $V : \V \to \{0,1\}^{\Paths_C}$ is such that the set $\Paths_c^V(x) = \{p \in \Paths_C \mid V(x)(p)=1 \}$ is measurable for all $c \in C$ and $x \in \V$, then the set $\Paths_c^V(\phi)$ is measurable for all $c \in C$.
\end{prop}
\begin{proof}
Induction on $\fnd(\phi)$.
\begin{enumerate}
\item \label{case1} $\fnd(\phi) = 0$. We prove the statement by structural induction on $\phi$:
\begin{enumerate}
\item $\phi = \bot$. In this case $\Paths_c^V(\phi) = \emptyset \in \M_c$.
\item $\phi = \top$. In this case $\Paths_c^V(\phi) = \Paths_c \in \M_c$.
\item $\phi = x \in \V$. In this case the statement follows from the assumption on $V$.
\item $\phi = \langle \lambda \rangle(\phi_1,\ldots,\phi_{\arity(\lambda)})$ with $\Paths_{c_i}^V(\phi_i)$ measurable for each $c_i \in C$ and each $i \in \{1,\ldots,\arity(\lambda)\}$. Then, $\Paths_c^V(\langle \lambda \rangle(\phi_1,\ldots,\phi_{\arity(\lambda)}))$ is given by the finite union
\begin{eqnarray*}
\bigcup\limits_{\gamma(c)(\iota_{\lambda}(c_1,\ldots,c_{\arity(\lambda)})) \ne 0} (c,\lambda)(c_1,\ldots,c_{\arity(\lambda)})[\Paths_{c_1}^V(\phi_1)/c_1,\ldots,\Paths^V_{c_{\arity(\lambda)}}(\phi_{\arity(\lambda)})/c_{\arity(\lambda)} ]
\end{eqnarray*}
That each of $(c,\lambda)(c_1,\ldots,c_{\arity(\lambda)})[\Paths_{c_1}^V(\phi_1)/c_1,\ldots,\Paths^V_{c_{\arity(\lambda)}}(\phi_{\arity(\lambda)})/c_{\arity(\lambda)} ]$ is measurable follows from the induction hypothesis by Proposition~\ref{prop:lemma}.
\end{enumerate}
\item \label{case2} $\fnd(\phi) > 0$. As for case (\ref{case1}), we prove the statement by structural induction on $\phi$, with four similar sub-cases treated in exactly the same way, but with two additional cases, considered below:
\begin{enumerate}
\item[(e)] $\phi = \mu x.\phi_x$. In this case, it follows from the continuity of the operator used to define $\lsem \mu x .\phi_x \rsem_{\zeta'_C}^V$ (see Remark~\ref{rem:poly}) that
\begin{eqnarray}
\label{equality1}
\Paths_c^V(\phi) = \bigcup\limits_{i \in \omega} \Paths_c^{V \cup \{x \mapsto \lsem \phi_x^{(i)} \rsem^V_{\zeta'_C}\}}(\phi_x)
\end{eqnarray}
where $\phi_x^{(0)} = \bot$ and $\phi_x^{(i+1)} = \phi_x[\phi_x^{(i)}/x]$ for $i \in \omega$\footnote{In the above, we assume that $x \not\in \V$. This can easily be guaranteed by performing some variable renamings whenever the formula in question is not clean.}. Note that, at this point, we cannot immediately use the induction hypothesis, since in general $\fnd(\phi_x^{(i)}) \not < \fnd(\phi)$ -- take, for instance, $\phi := \mu x.\nu y.(\langle a \rangle(x) \sqcup \langle b \rangle y)$, for which $\fnd(\phi_x^{(2)}) = 2$. We proceed in the following way: we prove by induction on $i \in \omega$ that $\lsem \phi_x^{(i)} \rsem^V_{\zeta'_C}$ is measurable and $\Paths_c^{V \cup \{x \mapsto \lsem \phi_x^{(i)} \rsem^V_{\zeta'_C}\}}(\phi_x)$ is measurable.
\begin{enumerate}
\item $i = 0$. In this case, $\phi_x^{(i)} = \bot$ and hence $\lsem \phi_x^{(i)} \rsem^V_{\zeta'_C} = \emptyset$ is measurable. Also, $\Paths_c^{V \cup \{x \mapsto \lsem \phi_x^{(i)} \rsem^V_{\zeta'_C}\}}(x) = \Paths_c^{V \cup \{x \mapsto \lambda z.0\}}(x) = \emptyset$ is measurable. Then, since $\fnd(\phi_x) = \fnd(\phi) -1$, and since $\Paths_c^{V  \cup \{x \mapsto \lsem \phi_x^{(i)} \rsem^V_{\zeta'_C}\}}(y)$ is measurable (as it coincides with $\Paths_c^V(y)$) for $y \in \V$, it follows by the induction hypothesis that $\Paths_c^{V \cup \{x \mapsto \lsem \phi_x^{(i)} \rsem^V_{\zeta'_C}\}}(\phi_x)$ is also measurable.
\item Assume that $\lsem \phi_x^{(i)} \rsem^V_{\zeta'_C}$ is measurable, and $\Paths_c^{V \cup \{x \mapsto \lsem \phi_x^{(i)} \rsem^V_{\zeta'_C}\}}(\phi_x)$ is measurable. As $\phi_x^{(i+1)} = \phi_x[\phi_x^{(i)}/x]$, these assumptions yield $\lsem \phi_x^{(i+1)} \rsem^V_{\zeta'_C}$ measurable, and therefore $\Paths_c^{V \cup \{x \mapsto \lsem \phi_x^{(i+1)} \rsem^V_{\zeta'_C}\}}(x)$ is measurable (since it coincides with $\Paths_c^{V \cup \{x \mapsto \lsem \phi_x^{(i)} \rsem^V_{\zeta'_C}\}}(\phi_x)$). Then, since $\fnd(\phi_x) = \fnd(\phi) -1$, it follows by the induction hypothesis as before that $\Paths_c^{V \cup \{x \mapsto \lsem \phi_x^{(i+1)} \rsem^V_{\zeta'_C}\}}(\phi_x)$ is itself measurable.
\end{enumerate}
Now since each of $\Paths_c^{V \cup \{x \mapsto \lsem \phi_x^{(i)} \rsem^V_{\zeta'_C}\}}(\phi_x)$, with $i \in \omega$, is measurable, and moreover, $\Paths_c^{V \cup \{x \mapsto \lsem \phi_x^{(i)} \rsem^V_{\zeta'_C}\}}(\phi_x) \subseteq \Paths_c^{V \cup \{x \mapsto \lsem \phi_x^{(i+1)} \rsem^V_{\zeta'_C}\}}(\phi_x)$ for $i \in \omega$, it follows that $\Paths_c^V(\phi) = \bigcup\limits_{i \in \omega} \Paths_c^{V \cup \{x \mapsto \lsem \phi_x^{(i)} \rsem^V_{\zeta'_C}\}}(\phi_x)$ is also measurable.
\item[(f)] $\phi = \nu x.\phi_x$. In this case, the co-continuity of the operator used to define $\lsem \nu x .\phi_x \rsem_{\zeta'_C}^V$ (see Remark~\ref{rem:poly}) gives
\begin{eqnarray}
\label{equality2}
\Paths_c^V(\phi) = \bigcap\limits_{i \in \omega} \Paths_c^{V \cup \{x \mapsto \lsem \phi_x^{(i)} \rsem^V_{\zeta'_C}\}}(\phi_x)
\end{eqnarray}
where $\phi_x^{(0)} = \top$ and $\phi_x^{(i+1)} = \phi_x[\phi_x^{(i)}/x]$ for $i \in \omega$. The conclusion now follows similarly to the previous case. \qedhere
\end{enumerate}
\end{enumerate}
\end{proof}
Clearly, for $\phi \in \mu\LL_\Lambda$ and $c \in C$, we have $\Paths_c(\phi) = \Paths_c^\emptyset(\phi)$, where $\emptyset : \emptyset \to \{0,1\}^{\Paths_C}$ is the unique (empty) valuation. Since by Proposition~\ref{prop-meas}, $\Paths_c^\emptyset(\phi)$ is measurable, so is $\Paths_c(\phi)$.

\section{Equivalence of the Two Semantics}
\label{equiv-sem}

This section shows that the step-wise semantics and the path-based semantics for $\mu\LL_\Lambda$ coincide.

\begin{thm}
\label{thm-equiv}
The step-wise semantics for $\mu\LL_\Lambda$ (Definition~\ref{step-wise-semantics}) and the path-based semantics for $\mu\LL_\Lambda$ (Definition~\ref{path-based-semantics}) coincide. That is, given a $\T_S \circ F$-coalgebra $(C,\gamma)$ and $\phi \in \mu\LL_\Lambda$, $\lsem \phi \rsem_\gamma = \llsem \phi \rrsem_\gamma$.
\end{thm}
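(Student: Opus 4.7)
My plan is to prove the statement by induction on the fixpoint nesting depth $\fnd(\phi)$, with structural induction on $\phi$ within each level. To handle fixpoint formulas inductively, I first generalise both semantics to open formulas in $\mu\LL_\Lambda^{0,\V}$: given a valuation $V : \V \to \{0,1\}^{Z_C}$ such that each $\Paths_c^V(x)$ is measurable, I define an induced $S$-valued valuation $V^\sharp : \V \to S^C$ by $V^\sharp(x)(c) = \mu_\gamma(\Paths_c^V(x))$, and prove the strengthened statement $\lsem \phi \rsem_\gamma^{V^\sharp} = \llsem \phi \rrsem_\gamma^V$, where $\llsem \phi \rrsem_\gamma^V(c) := \mu_\gamma(\Paths_c^V(\phi))$.

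The base cases (at $\fnd(\phi) = 0$) are routine. For $\phi = \top$, the step-wise semantics equals $\ext_\gamma$ by Proposition~\ref{prop-top-extent}, while the path-based semantics at $c$ is $\mu_\gamma(\Paths_c) = \ext_\gamma(c)$ by clause~(1) of Definition~\ref{induced-measure}, noting that under the standard fixpoint semantics on the $F$-coalgebra $(Z_C,\pi_2\circ\zeta_C)$ the constant $\top$ denotes $1$, so $\Paths_c(\top) = \Paths_c$. For $\phi = \bot$ both sides are $0$, and for $\phi = x$ both sides equal $V^\sharp(x)$ by construction. The key base case is the modal one, $\phi = [\lambda](\phi_1,\ldots,\phi_n)$: unfolding the step-wise side using Definitions~\ref{canonical-pred-lift} and~\ref{def-ext-pred-lifting} rewrites $\lsem \phi \rsem_\gamma^{V^\sharp}(c)$ as $\sum \gamma(c)(\iota_\lambda(c_1,\ldots,c_n)) \bullet \lsem \phi_1 \rsem_\gamma^{V^\sharp}(c_1) \bullet \ldots \bullet \lsem \phi_n \rsem_\gamma^{V^\sharp}(c_n)$, the sum ranging over tuples with non-zero weight, while the decomposition of $\Paths_c^V(\phi)$ appearing in the proof of Proposition~\ref{prop-meas}, combined with clauses~(2) and~(3) of Definition~\ref{induced-measure}, yields the same expression with $\llsem \phi_i \rrsem_\gamma^V(c_i)$ in place of $\lsem \phi_i \rsem_\gamma^{V^\sharp}(c_i)$. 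The structural IH then concludes.

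The only genuinely delicate case is a fixpoint $\phi = \mu x.\phi_x$ (the $\nu$-case is symmetric). By Corollary~\ref{semantics-fixpoint}, $\lsem \mu x.\phi_x \rsem_\gamma^{V^\sharp} = \bigsqcup_i \lsem \phi_x^{(i)} \rsem_\gamma^{V^\sharp}$, while the continuity arguments used in the proof of Proposition~\ref{prop-meas} together with clause~(4) of Definition~\ref{induced-measure} give $\llsem \mu x.\phi_x \rrsem_\gamma^V = \bigsqcup_i \llsem \phi_x^{(i)} \rrsem_\gamma^V$. Since $\fnd(\phi_x^{(i)})$ need not be smaller than $\fnd(\phi)$, the outer IH cannot be applied directly to $\phi_x^{(i)}$; instead, I perform a nested induction on $i$ showing $\lsem \phi_x^{(i)} \rsem_\gamma^{V^\sharp} = \llsem \phi_x^{(i)} \rrsem_\gamma^V$. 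The step from $i$ to $i+1$ uses $\phi_x^{(i+1)} = \phi_x[\phi_x^{(i)}/x]$ together with substitution lemmas for both semantics: the path-based substitution lemma is inherited from the standard fixpoint logic on $(Z_C,\pi_2\circ\zeta_C)$, while the step-wise one is a straightforward structural induction on $\phi_x$. These rewrite the two sides in terms of $\phi_x$ under the extended valuations $V' := V[x \mapsto \lsem \phi_x^{(i)} \rsem_{\pi_2\circ\zeta_C}^V]$ and $V^\sharp[x \mapsto \lsem \phi_x^{(i)} \rsem_\gamma^{V^\sharp}]$; the inner IH identifies $(V')^\sharp$ with the latter, and since $\fnd(\phi_x) = \fnd(\phi) - 1$, the outer IH applied to $\phi_x$ under $V'$ closes the loop.

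The main obstacle will be the bookkeeping in this two-level induction: one must carefully track the $\{0,1\}$-valued valuations on $Z_C$ (used in defining the measurable sets $\Paths_c^V(\phi)$) alongside the induced $S$-valued valuations on $C$ (used by the step-wise semantics), and verify the two substitution lemmas. Once the correspondence $V \mapsto V^\sharp$ is set up correctly and the substitution-plus-continuity trick above is deployed, the remaining reasoning amounts to matching the coinductive measure on paths with the step-wise computation built out of $\E_{\T_S}$ and the canonical predicate liftings $\lsem \lambda \rsem$.
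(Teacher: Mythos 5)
Your proposal matches the paper's own proof essentially step for step: the same strengthening to open formulas via the induced valuation $V^\sharp$ (the paper's $\tilde V$), the same outer induction on fixpoint nesting depth with structural induction inside, the same treatment of the modal case via the disjoint-union decomposition of $\Paths_c^V([\lambda](\ldots))$, and crucially the same nested induction on the Kleene approximants $\phi_x^{(i)}$ to circumvent the fact that $\fnd(\phi_x^{(i)})$ need not be smaller than $\fnd(\phi)$. The argument is correct and no further comparison is needed.
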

\begin{proof}
We will prove $\lsem \phi \rsem_\gamma \sqsubseteq \llsem \phi \rrsem_\gamma$ and $\lsem \phi \rsem_\gamma \sqsupseteq \llsem \phi \rrsem_\gamma$.

\begin{itemize}
\item The proof of $\lsem \phi \rsem_\gamma \sqsubseteq \llsem \phi \rrsem_\gamma$ exploits the automata-theoretic characterisation of $\lsem \phi \rsem_\gamma$ as the extent of the product (see Definition~\ref{simple-prod-aut}) of $(C,\gamma)$ and $(\Cl(\phi),\beta,\Omega)$ (Theorem~\ref{thm:extent-char}), where $(\Cl(\phi),\beta,\Omega)$ is the parity $\T_S \circ F$-automaton induced by the formula $\phi$. This result assumes that $\phi$ is (clean and) strictly guarded, but as discussed in Section~\ref{fixpoint-logic}, any formula $\phi \in \mu\LL_\Lambda$ is equivalent to a (clean and) strictly guarded one. Now given $\phi \in \mu\LL_\Lambda$ and $c \in C$, the part of the product automaton relevant to the computation of the extent $\extent_{\gamma \otimes \beta}(c,\phi)$ consists of those states reachable from $(c,\phi)$. Moreover, the automaton $(\Cl(\phi),\beta,\Omega)$ is \emph{deterministic} (there is at most one transition labelled by any $\lambda \in \Lambda$ from any state). As a result, for a reachable state $(d,\psi)$ of the product automaton, the coalgebra structure is inherited from that of $d$ in $(C,\gamma)$ (with only transitions matching those of the formula automaton being considered), and the parity is given by $\Omega(\psi)$. An immediate consequence of this is that paths from $c$ in $(C,\gamma)$ satisfying $\phi$ are in one-to-one correspondence with paths from $(c,\phi)$ in $(C \times \Cl(\phi),\gamma \otimes \beta)$ satisfying $\phi$, with the associated transition weights also agreeing. As a result, any countable, disjoint cylinder set cover for $\Paths_{(c,\phi)}(\phi)$ in $(C \times \Cl(\phi),\gamma \otimes \beta)$ yields a countable, disjoint cylinder set cover for $\Paths_c(\phi)$ in $(C,\gamma)$ with the same measure, and conversely. This finally gives $\mu_\gamma(\Paths_c(\phi)) = \mu_{\gamma \otimes \beta}(\Paths_{(c,\phi)}(\phi))$. Using this observation together with Theorem~\ref{thm:extent-char} and the definition of $\llsem \phi \rrsem_\gamma$, we can reduce proving $\lsem \phi \rsem_\gamma \sqsubseteq \llsem \phi \rrsem_\gamma$ to proving that the following holds
\begin{eqnarray}
\label{eq:goal}
\extent_{\gamma \otimes \beta}(c,\phi) \sqsubseteq \mu_{\gamma \otimes \beta}(\C)
\end{eqnarray}
for any countable, disjoint cylinder set cover $\C$ for $\Paths_{(c,\phi)}(\phi)$ in $(C \times \Cl(\phi),\gamma \otimes \beta)$. The proof of (\ref{eq:goal}) makes use of the \emph{unfolding} of the automaton $(C \times \Cl(\phi),\gamma \otimes \beta,\Omega)$ with initial state $(c,\phi)$ into an infinite tree. For a pointed, parity $\T_S \circ F$-automaton $(D,\delta,\Omega,d_0)$, its \emph{unfolding} is the pointed, parity $\T_S \circ F$-automaton $(D',\delta',\Omega',d'_0)$ where $D'$ contains a copy $d'_0$ of the initial state $d_0$, and for each copy $d' \in D'$ of some $d \in D$ and each transition $\UseComputerModernTips\xymatrix@-1pc{d \ar[r]^-{w,i} & (d_1,\ldots,d_{j_i})}$ in $(D,\delta)$, $(D',\delta')$ contains (new) copies $d_1',\ldots,d_{j_i}'$ of $d_1,\ldots,d_{j_i}$ and a transition $\UseComputerModernTips\xymatrix@-1pc{d' \ar[r]^-{w,i} & (d_1',\ldots,d_{j_i}')}$. The states of $D'$ inherit their parities from the corresponding states in $D$: if $d' \in D'$ is a copy of $d \in D$, then $\Omega'(d') = \Omega(d)$.

We now write $(D,\delta,\Omega,d_0)$ for the unfolding of $(C \times \Cl(\phi),\gamma \otimes \beta,\Omega)$ at $(c,\phi)$, and fix a countable, disjoint cylinder set cover $\C$ for the set of paths $\Paths_{(c,\phi)}(\phi)$ in $(C \times \Cl(\phi),\gamma \otimes \beta)$. Using $\C$ and the automaton $(D,\delta,\Omega,d_0)$, we construct a countable collection of parity $\T_S \circ F$-automata: (i) one for each cylinder set $C \in \C$, denoted $(D,\delta,\Omega,d_0)\!\!\restriction\!C$, whose paths are those paths in $(D,\delta,\Omega,d_0)$ which are covered by $C$, (ii) and another one, denoted $(D,\delta,\Omega,d_0) \setminus \C$, whose paths are those paths in $(D,\delta,\Omega,d_0)$ not covered by any $C \in \C$. Concretely, for $C \in \C$, $(D,\delta,\Omega,d_0)\!\!\restriction\!C = (D^C,\delta^C,\Omega,d_0^C)$ is the sub-tree of $(D,\delta,d_0)$ which "matches" $C$; that is, assuming $C = \Cyl(q)$, $(D^C,\delta^C,\Omega,d_0^C)$ is obtained from $(D,\delta,\Omega,d_0)$ by removing, starting from the root and continuing for a number of steps equal to the depth of the path fragment $q$, transitions not covered by $q$ along with the entire sub-trees having those transitions as initial transitions. Also, $(D,\delta,\Omega,d_0) \setminus \C = (D^0,\delta^0,\Omega,d_0^0)$ retains precisely those transitions of $(D,\delta,\Omega,d_0)$ which can be extended to a path not covered by $\C$; note that paths not covered by $\C$ cannot satisfy $\phi$, as $\C$ covers all of $\Paths_{(c,\phi)}(\phi)$. Both $(D,\delta,\Omega,d_0)\!\!\restriction\!C$ and $(D,\delta,\Omega,d_0) \setminus \C$) inherit parities from $(D,\delta,\Omega,d_0)$. In other words, we have "separated" the unfolding $(D,\delta,\Omega,d_0)$ of $(C \times \Cl(\phi),\gamma \otimes \beta,\Omega)$ into a countable number of $\T_S \circ F$-automata, based on the given cover $\C$ for $\Paths_{(c,\phi)}(\phi)$.

The inequality (\ref{eq:goal}) now follows from:
\begin{eqnarray*}
\extent_{\gamma \otimes \beta}(c,\phi) ~=~ 
\extent_\delta(c,\phi) ~\sqsubseteq~
\sum\limits_{C \in \C} \mu_{\delta^C}(C) ~=~
\sum\limits_{C \in \C} \mu_{\gamma \otimes \beta}(C)
\end{eqnarray*}
where:
\begin{itemize}
\item The first equality follows from the extent being preserved by the unfolding. This is because extents are preserved by $\T_S \circ F$-coalgebra homomorphisms (an easy induction on the number of parities proves this), and in particular by the homomorphism sending each copy of a state to the original state.
\item The inequality above is a consequence of:
\begin{itemize}
\item $\extent_\delta(c,\phi) \sqsubseteq \extent_{\delta^0}((c,\phi)^0) + \sum\limits_{C \in \C} \extent^\nu_{\delta^C}((c,\phi)^C)$, which also follows easily by induction on the number of parities, using the definitions of extent and $\nu$-extent, and the "splitting" of $(D,\delta,d_0)$ into $(D^0,\delta^0,d_0^0)$ and $(D^C,\delta^C,d_0^C)$ with $C \in \C$,
\item $\extent^\nu_{\delta^C}((c,\phi)^C) =  \mu_{\delta^C}(C)$ for each $C \in \C$, which follows from the definition of $\mu_{\delta^C}$ on cylinder sets, using the fact that $C$ covers \emph{all} paths from $(c,\phi)^C$ in $(D^C,\delta^C,\Omega,d_0^C)$ and nothing else), 
\item the fact that the extent of an automaton with no \emph{accepting paths}\footnote{Here, an \emph{accepting path} is one which satisfies the parity condition; that is, all branches of the tree representing that path are such that the largest parity occurring infinitely often along the branch is even.} (in this case $(D^0,\delta^0,\Omega,d_0^0)$) is $0$. To see why this is true note that, whenever the extent of an automaton state is not $0$, one can construct an accepting path from that state by using the definition of extent, in particular, the relationship between the extent of a state and the extents of its successors. The proof of this is again by induction on the number of parities, and the only slight difficulty is proving that the constructed path is accepting; for this, one needs to record the tuples of ordinals (one ordinal per parity) at which the extent values of the states along the path are obtained, as per Theorem~\ref{thm:cousot}. The reason why $(D^0,\delta^0,\Omega,d_0^0)$ has no accepting paths is that any accepting path would satisfy $\phi$ (given the construction of $(D^0,\delta^0,\Omega,d_0^0)$ from $(C\times \Cl(\phi),\gamma \otimes \beta,\Omega)$), and would therefore be covered by $\C$.
\end{itemize}
\item The last equality is an immediate consequence of the definition of $(D^C,\delta^C,\Omega,d_0^C)$.
\end{itemize}
This concludes the proof of the inequality $\lsem \phi \rsem_\gamma \sqsubseteq \llsem \phi \rrsem_\gamma$.

\item We now prove the converse inequality. For $\phi \in \mu\LL_\Lambda^{\V}$ and $V : \V \to \{0,1\}^{\Paths_C}$ such that $\Paths_c^V(x)$ is measurable for each $c \in C$ and $x \in \V$, we let $\llsem \phi \rrsem_\gamma^V : C \to S$ be given by $\llsem \phi \rrsem_\gamma^V(c) = \mu_\gamma(\Paths_c^V(\phi))$ for $c \in C$. (Recall that, by Proposition~\ref{prop-meas}, $\Paths_c^V(\phi)$ is measurable.) We will prove a more general statement, namely that $\lsem \phi \rsem_\gamma^{\tilde V} \sqsupseteq \llsem \phi \rrsem_\gamma^V$ for each $\phi \in \mu\LL_\Lambda^{\V}$ and each $V : \V \to \{0,1\}^{\Paths_C}$ such that $\Paths_c^V(x)$ is measurable for each $c \in C$ and $x \in \V$, where $\tilde V : \V \to S^C$ is given by $\tilde V(x)(c) = \mu_\gamma(\Paths_c^V(x))$. For this, we again use induction on the fixpoint nesting depth of $\phi$.
\begin{enumerate}
\item \label{case1thm} $\fnd(\phi) = 0$. We prove the statement by structural induction on $\phi$:
\begin{enumerate}
\item $\phi = \bot$. Then:
\begin{align*}
\llsem \bot \rrsem_\gamma^V (c) & = \mu_\gamma(\Paths_c^V(\bot)) \tag{definition of $\llsem \_\rrsem_\gamma^V$}\\
& = \mu_\gamma(\emptyset) \tag{$\Paths_c^V(\bot) = \emptyset$}\\
& = 0 \tag{$\mu_\gamma$ is an $S$-valued measure}\\
& = \lsem \bot \rsem_\gamma^{\tilde V} (c) \tag{step-wise semantics of $\bot$}\\
\end{align*}
\pagebreak
\item $\phi = \top$. Then:
\begin{align*}
\llsem \top \rrsem_\gamma^V (c) & = \mu_\gamma(\Paths_c^V(\top)) \tag{definition of $\llsem \_\rrsem_\gamma^V$}\\
& = \mu_\gamma(\Paths_c) \tag{$\Paths_c^V(\top) = \Paths_c$}\\
& = \extent^\nu_\gamma(c) \tag{definition of $\mu_\gamma$ on the largest cylinder set}\\
& = \lsem \top \rsem_\gamma^{\tilde V} (c) \tag{step-wise semantics of $\top$}\\
\end{align*}
\item $\phi = x \in \V$. In this case we have:
\begin{align*}
\llsem x \rrsem_\gamma^V (c) & = \mu_\gamma(\Paths_c^V(x)) \tag{definition of $\llsem \_\rrsem_\gamma^V$}\\
& = \tilde V(x)(c) \tag{definition of $\tilde V$}\\
& = \lsem x \rsem_\gamma^{\tilde V} (c) \tag{step-wise semantics of $x$}\\
\end{align*}
\item $\phi = \langle \lambda \rangle(\phi_1,\ldots, \phi_{\arity(\lambda)})$. In this case we have:
\begin{align*}
& \phantom{{}={}} \llsem\langle \lambda \rangle(\phi_1,\ldots, \phi_{\arity(\lambda)})\rrsem_\gamma^V (c) \\
& = \mu_\gamma(\Paths_c^V(\langle \lambda \rangle(\phi_1,\ldots, \phi_{\arity(\lambda)}))) \tag{definition of $\llsem \_\rrsem_\gamma^V$} \\
& = \mu_\gamma(\bigcup\limits_{\gamma(c)(\iota_{\lambda}(c_1,\ldots,c_{\arity(\lambda)}))  \ne 0} (c,\lambda)(c_1,\ldots,c_{\arity(\lambda)})[\Paths_{c_i}^V(\phi_i)/c_i]) \tag{definition of $\Paths_c^V(\langle \lambda \rangle(\phi_1,\ldots, \phi_{\arity(\lambda)}))$} \\
& = \sum_{\gamma(c)(\iota_\lambda(c_1,\ldots,c_{\arity(\lambda)})) \ne 0} \mu_\gamma((c,\lambda)(c_1,\ldots,c_{\arity(\lambda)})[\Paths_{c_i}^V(\phi_i)/c_i])) \tag{pairwise disjointness of the sets $(c,\lambda)(c_1,\ldots,c_{\arity(\lambda)})[\Paths_{c_i}^V(\phi_i)/c_i]$} \\
& = \sum_{\gamma(c)(\iota_\lambda(c_1,\ldots,c_{\arity(\lambda)})) \ne 0} \gamma(c)(\iota_{\lambda}(c_1,\ldots,c_{\arity(\lambda)})) \bullet (\bullet_{i \in \{1,\ldots \arity(\lambda)\}} \mu_\gamma(\Paths_{c_i}^V(\phi_i))) \tag{Proposition~\ref{prop:lemma}} \\
& = \sum_{\gamma(c)(\iota_\lambda(c_1,\ldots,c_{\arity(\lambda)})) \ne 0} \gamma(c)(\iota_{\lambda}(c_1,\ldots,c_{\arity(\lambda)})) \bullet \llsem \varphi_1 \rrsem_\gamma^{V} \bullet \ldots \bullet \llsem \varphi_{\arity(\lambda)} \rrsem_\gamma^{V} \\
& \sqsubseteq \sum_{\gamma(c)(\iota_\lambda(c_1,\ldots,c_{\arity(\lambda)})) \ne 0} \gamma(c)(\iota_{\lambda}(c_1,\ldots,c_{\arity(\lambda)})) \bullet \lsem \varphi_1 \rsem_\gamma^{\tilde V} \bullet \ldots \bullet \lsem \varphi_{\arity(\lambda)} \rsem_\gamma^{\tilde V} \tag{induction hypothesis, monotonicity of $\bullet$ and $+$ in each argument} \\
& = \gamma^*(\E_{\T_S}(\lsem \lambda \rsem_C (\lsem \varphi_1 \rsem_\gamma^{\tilde V},\ldots,\lsem \varphi_{\arity(\lambda)} \rsem_\gamma^{\tilde V})))(c) \tag{definition of $\E_{\T_S}$ and $\lsem \lambda \rsem$} \\
& = \lsem \langle \lambda \rangle(\phi_1,\ldots, \phi_{\arity(\lambda)}) \rsem_\gamma^{\tilde V}(c) \tag{step-wise semantics of $\langle \lambda \rangle(\phi_1,\ldots, \phi_{\arity(\lambda)})$} \\
\end{align*}
\end{enumerate}
\item $\fnd(\phi) > 0$. Again, we prove the statement by structural induction on $\phi$, with similar sub-cases as in case (\ref{case1thm}), and with two additional cases:
\begin{enumerate}
\item[(e)] $\phi = \mu x .\phi_x \in \mu\LL_\Lambda^\V$. By the (outer) induction hypothesis, we have $\llsem \psi \rrsem_\gamma^V \sqsubseteq \lsem \psi \rsem_\gamma^{\tilde V}$ for each $\psi$ with $\fnd(\psi) < \fnd(\phi)$ and each $V : \V \cup \{x\} \to \{0,1\}^{\Paths_C}$ such that the set $\Paths_c^V(y)$ is measurable for each $c \in C$ and $y \in \V \cup \{x\}$, with $\tilde V : \V \cup \{x\} \to S^C$ given by $\tilde V(y)(c) = \mu_\gamma(\Paths_c^V(y))$. Then, for $V : \V \to \{0,1\}^{\Paths_C}$ such that $\Paths_c^V(y)$ is measurable for each $c \in C$ and $y \in \V$, and $\tilde V : \V \to S^C$ given by $\tilde V(y)(c) = \mu_\gamma(\Paths_c^V(y))$ for $y \in \V$ and $c \in C$, we have:
\begin{align*}
\llsem \mu x .\phi_x \rrsem_\gamma^V(c) &= \mu_\gamma(\Paths_c^V(\mu x.\phi_x)) \tag{definition of $\llsem \_\rrsem_\gamma^V$} \\
& = \mu_\gamma(\bigcup\limits_{i \in \omega} \Paths_c^{V \cup \{x \mapsto \lsem \phi_x^{(i)} \rsem^V_{\zeta'_C}\}}(\phi_x)) \tag{continuity of the operator used to define $\lsem \mu x .\phi_x \rsem_{\zeta'_C}^V$, see Remark~\ref{rem:poly}} \\
& = \sup_{i \in \omega} \mu_\gamma(\Paths_c^{V \cup \{x \mapsto \lsem \phi_x^{(i)} \rsem^V_{\zeta'_C}\}}(\phi_x) ) \tag{Proposition~\ref{prop:countable-unions}} \\
&\sqsubseteq \sup_{i \in \omega} \lsem \phi_x \rsem_\gamma^{\tilde V \cup \{x \mapsto \lambda c. \mu_\gamma(\Paths_c^{V \cup \{x \mapsto \lsem \phi_x^{(i)} \rsem^V_{\zeta'_C}\}}(x)) \}} (c) \tag{induction hypothesis, using $\fnd(\phi_x) < \fnd(\phi)$} \\
& = \sup_{i \in \omega} \lsem \phi_x \rsem_\gamma^{\tilde V \cup \{x \mapsto \lambda c. \mu_\gamma(\Paths_c^V(\phi_x^{(i)}))\}}(c) \\
\tag{$\Paths_c^{V \cup \{x \mapsto \lsem \phi_x^{(i)} \rsem^V_{\zeta'_C}\}}(x) = \Paths_c^V(\phi_x^{(i)})$} \\
& \sqsubseteq \sup_{i \in \omega} \lsem \phi_x \rsem_\gamma^{\tilde V \cup \{x \mapsto \lsem \phi_x^{(i)} \rsem_{\gamma}^{\tilde V} \}} (c) \\
& \sqsubseteq \lsem \mu x .\phi_x \rsem_\gamma^{\tilde V}(c) \tag*{\begin{minipage}[t]{7cm}(characterisation of $\lsem \mu x.\phi_x \rsem^{\tilde V}_\gamma$ as supremum of an increasing, ordinal-indexed chain, see Theorem~\ref{thm:cousot})\end{minipage}} \\
\end{align*}
For the first inequality above, note that, by the proof of Proposition~\ref{prop-meas} (the case when $\phi = \mu x .\phi_x$), $\lsem \phi_x^{(i)} \rsem^V_{\zeta'_C}$ is measurable. For the second inequality above, note that, as in the proof of Proposition~\ref{prop-meas}, we again cannot make direct use of the induction hypothesis, since in general $\fnd(\phi_x^{(i)}) < \fnd(\phi)$ does not hold. Thus, we need to appeal once more to an inductive proof, to show $\mu_\gamma(\Paths_c^V(\phi_x^{(i)})) \sqsubseteq \lsem \phi_x^{(i)} \rsem_{\gamma}^{\tilde V}(c)$ for $i = 0,1,\ldots$ and $c \in C$. This follows below.
\begin{enumerate}
\item $i = 0$. In this case, $\phi_x^{(i)} = \bot$ and the statement is immediate.
\item Assume that $\mu_\gamma(\Paths_c^V(\phi_x^{(i)})) \sqsubseteq \lsem \phi_x^{(i)} \rsem_{\gamma}^{\tilde V}(c)$. We then have:
\begin{align*}
& \phantom{{}={}} \mu_\gamma(\Paths_c^V(\phi_x^{(i+1)})) \\
& = \mu_\gamma(\Paths_c^V(\phi_x[\phi_x^{(i)}/x])) \tag{definition of $\phi_x^{(i+1)}$}\\
& = \mu_\gamma(\Paths_c^{V \cup \{x \mapsto \lsem \phi_x^{(i)} \rsem^V_{\zeta'_C}\}} (\phi_x)) \tag{proof of Proposition~\ref{prop-meas}}\\
& \sqsubseteq \lsem \phi_x \rsem_{\gamma}^{\reallywidetilde{V \cup \{ x \mapsto \lsem \phi_x^{(i)} \rsem^V_{\zeta'_C} \}}}(c) \tag{outer induction hypothesis, using $\fnd(\phi_x) < \fnd(\phi)$}\\
& = \lsem \phi_x \rsem_{\gamma}^{\tilde V \cup \{ x \mapsto \lambda c.\mu_\gamma(\Paths_c^V(\phi_x^{(i)}))}(c) \tag{definition of $\reallywidetilde{V \cup \{ x \mapsto \lsem \phi_x^{(i)} \rsem^V_{\zeta'_C} \}}$}\\
& \sqsubseteq \lsem \phi_x \rsem_{\gamma}^{\tilde V \cup \{ x \mapsto \lsem \phi_x^{(i)} \rsem_{\gamma}^{\tilde V}}(c) \tag{inner induction hypothesis, monotonicity}\\
& = \lsem \phi_x^{(i+1)} \rsem_{\gamma}^{\tilde V}(c) \tag{definition of $\phi_x^{(i+1)}$}\\
\end{align*}
\end{enumerate}
This concludes the inner induction proof, and the proof in the case $\phi = \mu x.\phi_x$.
\item[(f)] $\phi = \nu x .\phi_x$ with $\phi_x \in \mu\LL_\Lambda^\V$. The proof in this case is different from the case $\phi = \mu x .\phi_x$: although the third equality in that proof could be replaced by an inequality (as $\mu_\gamma(\bigcap\limits_{n \in \omega} A_n) \sqsubseteq \inf\limits_{n \in \omega} \mu_\gamma(A_n)$ holds whenever $A_n$ with $n \in \omega$ are measurable sets such that $A_0 \supseteq A_1 \supseteq \ldots$), the inequality required in the last step of the proof does not hold -- see e.g.~Example~\ref{counter-ex}, where for $\phi := \nu x.\mu y.(\langle a \rangle x \sqcup \langle b \rangle y)$, the sets $A_0,A_1,\ldots$ are precisely the sets of paths satisfying $\phi_x^{(1)}, \phi_x^{(2)},\ldots$, and where $\lsem \phi \rsem_\gamma(c_0) = \lsem \nu x.\phi_x \rsem_\gamma(c_0) = 0$ while $\inf\limits_{i \in \omega} \lsem \phi_x \rsem_\gamma^{\tilde V \cup \{x \mapsto \lsem \phi_x^{(i)} \rsem_{\gamma}^{\tilde V} \}} (c_0) = 1$. As before, we assume $\llsem \phi_x \rrsem_\gamma^V \sqsubseteq \lsem \phi_x \rsem_\gamma^{\tilde V}$ for each $V : \V \cup \{x\} \to \{0,1\}^{\Paths_C}$ such that the set $\Paths_c^V(y)$ is measurable for each $c \in C$ and $y \in \V \cup \{x\}$, with $\tilde V : \V \cup \{x\} \to S^C$ given by $\tilde V(y)(c) = \mu_\gamma(\Paths_c^V(y))$. We also fix $V : \V \to \{0,1\}^{\Paths_C}$ such that $\Paths_c^V(y)$ is measurable for each $c \in C$ and $y \in \V$, and let $\tilde V : \V \to S^C$ be given by $\tilde V(y)(c) = \mu_\gamma(\Paths_c^V(y))$ for $y \in \V$ and $c \in C$. To show $\llsem  \nu x.\phi_x \rrsem^V_\gamma \sqsubseteq \lsem \nu x.\phi_x \rsem^{\tilde V}_\gamma$, it suffices to show that $\llsem  \nu x.\phi_x \rrsem^V_\gamma$ (that is, $\mu_\gamma(\lsem \nu x.\phi_x \rsem^V_{\zeta'_C})$) is a post-fixpoint of the operator $\LOp^{\tilde V}_\phi : S^C \to S^C$ used in the definition of $\lsem \nu x.\phi_x \rsem^{\tilde V}_\gamma$. To this end, we write $\Op^V_\phi : \{0,1\}^{\Paths_C} \to \{0,1\}^{\Paths_C}$ for the operator used to define $\lsem \nu x.\phi_x \rsem^V_{\zeta'_C}$. We then have:
\begin{align*}
\LOp^{\tilde V}_\phi(\mu_\gamma(\lsem \nu x.\phi_x \rsem^V_{\zeta'_C})) & = \lsem \phi_x \rsem_\gamma^{\tilde V[\mu_\gamma(\lsem \nu x.\phi_x \rsem^V_{\zeta'_C})/x]} \tag{definition of $\LOp^{\tilde V}_\phi$, see Section~\ref{coalg-fixpoint-logics}}\\
& \sqsupseteq \llsem \phi_x \rrsem_\gamma^{V[\lsem \nu x.\phi_x \rsem^V_{\zeta'_C}/x]} \tag{induction hypothesis}\\
& = \mu_\gamma(\Paths^{V[\lsem \nu x.\phi_x \rsem^V_{\zeta'_C}/x]}(\phi_x)) \tag{Definition~\ref{path-based-semantics}}\\
& = \mu_\gamma(\lsem \phi_x \rsem_{\zeta'_C}^{V[\lsem \nu x.\phi_x \rsem^V_{\zeta'_C}/x]}) \tag{definition of $\lsem \phi_x \rsem_{\zeta'_C}$}\\
& = \mu_\gamma(\Op^V_\phi(\lsem \nu x.\phi_x \rsem^V_{\zeta'_C})) \tag{definition of $\Op^V_\phi$, see Section~\ref{coalg-fixpoint-logics}}\\
& = \mu_\gamma(\lsem \nu x.\phi_x \rsem^V_{\zeta'_C}) \tag{$\lsem \nu x.\phi_x \rsem^V_{\zeta'_C}$ is the greatest fixpoint of $\Op^V_\phi$}\\
\end{align*}
\end{enumerate}
\end{enumerate}
This concludes the proof of the inequality $\lsem \phi \rsem_\gamma \sqsupseteq \llsem \phi \rrsem_\gamma$ and also the proof of the theorem.\qedhere
\end{itemize}
\end{proof}

Theorem~\ref{thm-equiv} instantiates to all our example semirings (see Example~\ref{example-semirings}), thereby formalising the statements of Example~\ref{exa:instances}.

The next example describes the relationship between our logics on the one hand, and LTL and the linear-time $\mu$-calculus, interpreted over both non-deterministic and probabilistic transition systems, on the other.

\begin{exa}
\label{ex:ltl}
Let $\At$ denote a finite set of atomic propositions, and let $F : \Set \to \Set$ be given by $F = \Pow(\At) \times \Id \simeq \coprod\limits_{A \subseteq \At}\Id$. Then, finitely-branching, non-deterministic (probabilistic) transition systems can be viewed as $\T_S \circ F$-coalgebras, with $S$ the boolean (resp.~probabilistic) semiring: such transition systems are in one-to-one correspondence with $\Pow(\At) \times \T_S$-coalgebras, and these can in turn be viewed as $\T_S \circ F$-coalgebras by post-composing the coalgebra map with the strength map of the monad $\T_S$. Concretely, if $(C,\gamma)$ is the $\Pow(\At) \times \T_S$-coalgebra associated to a non-deterministic (probabilistic) transition system, then $(C,\st_{\Pow(\At),C} \circ \gamma)$ is a $\T_S \circ F$-coalgebra:
\begin{align*}
\UseComputerModernTips\xymatrix{
C \ar[r]^-{\gamma} & \Pow(\At) \times \T_S C \ar[rr]^-{\st_{\Pow(\At),C}} & & \T_S(\Pow(\At) \times C)}
\end{align*}
We immediately observe that, with the above choice of functor $F$, paths through a transition system (given by infinite sequences of states, with each pair of successive states belonging to the transition relation) are in one-to-one correspondence with paths through the associated $\T_S \circ F$-coalgebra (which additionally record the atomic propositions that hold in each state along the path). In what follows we will not distinguish between the two. Also, for this choice of $F$, both the qualitative logic of Section~\ref{coalg-fixpoint-logics}, interpreted over $F$-coalgebras, and the quantitative logic of Definition~\ref{syntax}, interpreted over $\T_S \circ F$-coalgebras, employ modal operators of the form $\langle A \rangle$ with $A \subseteq \At$; while the use of subsets of atomic propositions as modal operators may seem surprising at first, the encoding of LTL into B\"uchi automata also uses automata over the alphabet $\Pow(\At)$. For the qualitative logic associated to the above choice of $F$, given an $F$-coalgebra $(C,\gamma)$, the interpretation of the modal operators is as follows:
\begin{align*}
c \in \lsem \langle A \rangle \phi \rsem_\gamma^V ~\text{ iff }~ \pi_1(\gamma(c)) = A \text{ and } \pi_2(\gamma(c)) \in \lsem \phi \rsem_\gamma^V
\end{align*}
for $c \in C$ and $V : \V \to \{0,1\}^C$ a valuation. As a result, the LTL next operator $\next \_$ is recovered as $\bigsqcup\limits_{A \subseteq \At} \langle A \rangle\_$, whereas atomic propositions $a \in \At$ are recovered as $\bigsqcup\limits_{A \subseteq \At, A \ni a} \langle A \rangle \top$. Now recall that standard propositional operators are \emph{not} present in our logics. As a result, a direct encoding of LTL into our qualitative logic for $F$-coalgebras, defined by induction on the structure of formulas, is not possible. That said, the presence of guarded disjunctions means that our qualitative logic for $F$-coalgebras \emph{can} encode all of LTL: to each LTL formula $\phi$ one can associate a non-deterministic B\"uchi automaton $B_\phi$ over the alphabet $\Pow(\At)$, and equivalently a \emph{deterministic} parity automaton $A_\phi$ over the same alphabet, in such a way that $\phi$ holds on an infinite path $p$ iff $B_\phi$/$A_\phi$ accepts the infinite word over $\Pow(\At)$ induced by $p$\footnote{This infinite word collects the sets of atomic propositions that hold in the states along $p$.}. Now using results in \cite{CirsteaSH17}, the automaton $A_\phi$ can be associated an equivalent $\mu\LL_{\Pow(\At)}$-formula. As a result, although our logics lack conjunctions and arbitrary disjunctions, they are at least as expressive as LTL. (They are in fact \emph{more} expressive, given that the expressiveness of deterministic parity automata goes beyond that of LTL.) A similar argument shows that our qualitative logics, when instantiated to the above functor $F$, match precisely the expressiveness of the linear-time $\mu$-calculus. It is also worth recalling from Example~\ref{exa:express} that typical linear-time properties can easily be encoded in our logics, in spite of the absence of conjunctions and arbitrary disjunctions.

We now show that existential LTL and the existential variant of the linear-time $\mu$-calculus, when interpreted over either non-deterministic or probabilistic transition systems, can be recovered as instances of our quantitative logics.
\begin{enumerate}
\item In the case of non-deterministic transition systems, we make the standard assumption that each state has at least one successor. A consequence of this is that the $\nu$-extent of each state is $1$. Then, for a non-deterministic transition system $T$ viewed as a $\T_S \circ F$-coalgebra $(C,\gamma)$ (with $S = (\{0,1\},\vee,0,\wedge,1)$), a state $c \in C$ and an existential LTL/linear-time $\mu$-calculus formula $\phi$ with semantics $\lsem \phi \rsem_T : C \to \{0,1\}$, writing $e(\phi)$ for the encoding of $\phi$ into our qualitative logic and using the observation that the interpretations of $\phi$ and $e(\phi)$ agree over infinite paths, we have:
\begin{align*}
\mu_\gamma (\{p \in \Paths_c \mid p \in \lsem e(\phi) \rsem_{\zeta'_C}\}) = 1 \quad \text{iff} \quad \Paths_c \cap  \lsem e(\phi) \rsem_{\zeta'_C} \ne \emptyset \\
\quad \text{iff} \quad \lsem \phi \rsem_{\Paths_c(T)} \ne \emptyset \quad \text{iff} \quad \lsem \phi \rsem_T(c) = 1
\end{align*}
where $\Paths_c(T)$ is the set of infinite computation paths from $c$ in the transition system $T$.
Thus, we recover the existential LTL semantics $\lsem \phi \rsem_T : C \to \{0,1\}$ of $\phi$. The same argument applies to the linear-time $\mu$-calculus. 
\item In the case of probabilistic transition systems, we assume that the sum of the probabilities of outgoing transitions from each state equals $1$. As a result, the $\nu$-extent of each state is again $1$. Then, for a probabilistic transition system $T$ viewed as a $\T_S \circ F$-coalgebra $(C,\gamma)$ (with $S = ([0,1],+,0,*,1)$), a state $c \in C$ and an existential LTL/linear-time $\mu$-calculus formula $\phi$ with semantics $\lsem \phi \rsem_T : C \to [0,1]$, again writing $e(\phi)$ for the encoding of $\phi$ into our qualitative logic and using the observation that the interpretations of $\phi$ and $e(\phi)$ agree over infinite paths, we have:
\begin{align*}
\mu_\gamma (\{p \in \Paths_c \mid p \in \lsem e(\phi) \rsem_{\zeta'_C}) = \mu_\gamma (\lsem \phi \rsem_{\Paths_c(T)}) = \lsem \phi \rsem_T(c)
\end{align*}
and thus we again recover the standard semantics $\lsem \phi \rsem_T : C \to [0,1]$ of $\phi$ over probabilistic transition systems. (Note that, by Example~\ref{ex:measure}, the definition of $\mu_\gamma$ agrees with the definition of the measure used in the semantics of probabilistic LTL.)
\end{enumerate}
\end{exa}

\section{Expressiveness of \texorpdfstring{$\mu\LL_\Lambda$}{μL\_Λ}}
\label{sem-char}

We now define (non-symmetric) semantic and logical distances between states of $\T_S \circ F$-coalgebras, with the latter induced by the logic $\mu\LL_\Lambda$, and show that the two coincide.

We begin by defining a binary operation $\oslash : S \times S \to S$ by
\begin{eqnarray*}
t \oslash s = \sup \{ u \mid u \bullet s \sqsubseteq t \}\,.
\end{eqnarray*}
The operation $\varoslash$ is thus a kind of inverse to the semiring multiplication. Assumption~\ref{ass:cpo} ensures that $\varoslash$ is well defined.

\begin{rem}
\label{rem:top}
As a result of the above definition, $t \oslash s = \top$ if and only if $s \sqsubseteq t$. Moreover, $(t \oslash s) \bullet s \sqsubseteq t$. The latter follows from
\begin{align*}
(t \oslash s) \bullet s & = \sup \{ u \mid u \bullet s \sqsubseteq t \} \bullet s \tag{definition of $\oslash$}\\
& = \sup \{ u  \bullet s \mid u \bullet s \sqsubseteq t \} \tag{$\bullet$ preserves suprema in the first argument, by Assumption~\ref{ass:cl}}\\
& \sqsubseteq t\\
\end{align*}
\end{rem}

\begin{rem}
\label{rem:adjoints}
The operation $\_ \varoslash s : (S,\sqsubseteq) \to (S,\sqsubseteq)$ is in fact a right adjoint to $\_ \bullet s : (S,\sqsubseteq) \to$ $(S,\sqsubseteq)$; that is, 
$u \bullet s \sqsubseteq t$ if and only if $u \sqsubseteq t \varoslash s $. The "only if" direction is immediate, whereas the "if" direction follows from the monotonicity of $\bullet$ in the second argument together with Remark~\ref{rem:top}.
\end{rem}

\begin{exa}~
\begin{enumerate}
\item When $S =  (\mathbb N^\infty,\min,\infty,+,0)$, $\varoslash : S \times S \to S$ instantiates to
$\varominus : \mathbb N^\infty \times \mathbb N^\infty \to \mathbb N^\infty$ given by
\begin{eqnarray*}
n \varominus m = \begin{cases}
0, ~\text{ if } m \ge n,\\
n - m,~\text{ otherwise}.
\end{cases}
\end{eqnarray*}
The above definition also applies to the bounded variants of the tropical semiring.
\item Similarly, when $S = ([0,1],+,0,*, 1)$, $\varoslash : S \times S \to S$ instantiates to $\varoslash : [0,1] \times [0,1] \to [0,1]$ given by
\begin{eqnarray*}
p \varoslash q = \begin{cases}
1, ~\text{ if } q \le p,\\
p/q,~\text{ otherwise}.
\end{cases}
\end{eqnarray*}
\item When $S = (\{0,1\},\vee,0,\wedge,1)$, $\varoslash: S \times S \to S$ can be defined using implication:
\begin{eqnarray*}
t \varoslash s = (s \rightarrow t)\,.
\end{eqnarray*}
\end{enumerate}
\end{exa}

The next two properties of $\varoslash$ will be useful later.

\begin{prop}
\label{prop:oslash0}
If $r, s, t \in S$ are such that $r \sqsubseteq s$, then $t \varoslash r \sqsupseteq t \varoslash s$.
\end{prop}
\begin{proof}
Immediate from the definition of $\varoslash$ and the monotonicity of $\bullet$ in the second argument.
\end{proof}

\begin{prop}
\label{prop:triangle}
The following holds: $(a \oslash b)\bullet (b \oslash c) \sqsubseteq a \oslash c$ for $a,b,c \in S$.
\end{prop}
\begin{proof}
The statement follows from:
\begin{align*}
(a \oslash b)\bullet (b \oslash c) & \sqsubseteq \sup \{ u \mid u \bullet b \sqsubseteq a \} \bullet \sup \{v \mid v \bullet c \sqsubseteq b \} \tag{definition of $\oslash$}\\
& = \sup \{ u \bullet v \mid u \bullet b \sqsubseteq a,  v \bullet c \sqsubseteq b \} \tag{Assumption~\ref{ass:cl}} \\
& \sqsubseteq \sup \{ u \bullet v \mid u \bullet v \bullet c \sqsubseteq a \} \tag{$u \bullet b \sqsubseteq a$ and $v \bullet c \sqsubseteq b$ imply $u \bullet v \bullet c \sqsubseteq a$}\\
& \sqsubseteq \sup \{ t \mid t \bullet c \sqsubseteq a \} \\
& \sqsubseteq a \oslash c \tag{definition of $\oslash$}
\end{align*}
\end{proof}

\begin{cor}
\label{cor:triangle}
Let $a_i,b_i,c_i \in S$ for $i \in I$. Then the following holds:
\begin{eqnarray*}
\inf\limits_{i \in I}(a_i \oslash b_i) \bullet \inf\limits_{i \in I}(b_i \oslash c_i) \sqsubseteq \inf\limits_{i \in I}(a_i \oslash c_i) 
\end{eqnarray*}
\end{cor}
\begin{proof}
We have:
\begin{eqnarray*}
\inf\limits_{i \in I}(a_i \oslash b_i) \bullet \inf\limits_{i \in I}(b_i \oslash c_i) \sqsubseteq (a_i \oslash b_i) \bullet (b_i \oslash c_i) \sqsubseteq a_i \oslash c_i 
\end{eqnarray*}
for all $i \in I$, and therefore:
\begin{eqnarray*}
\inf\limits_{i \in I}(a_i \oslash b_i) \bullet \inf\limits_{i \in I}(b_i \oslash c_i) & \sqsubseteq & \inf\limits_{i \in I} (a_i \oslash c_i)
\end{eqnarray*}
as stated.
\end{proof}

The following property of $\varoslash$ will also be needed later on.

\begin{prop}
\label{prop:oslash}
The operation $\varoslash : S \times S \to S$ satisfies $(\inf\limits_{i \in I}t_i) \varoslash s = \inf\limits_{i \in I}(t_i \varoslash s)$ for $s_i, t_i \in S$ with $i \in I$.
\end{prop}
\begin{proof}
Immediate from Remark~\ref{rem:adjoints}, using the fact that right adjoints preserve limits.
\end{proof}
 
The notion of \emph{partial trace}, defined next, plays a key role in our development.

\begin{defi}[Partial trace]
A \emph{partial ($F$-)trace} is an element of the initial $(\{*\} + F)$-algebra $(B,\beta)$.
\end{defi}
The notion of partial trace thus covers both (completed) finite traces and incomplete traces. It is similar to the notion of path fragment (Definition~\ref{path-fragm-def}), except that a partial trace does not record coalgebra states. (While we could have defined partial traces as an instance of Definition~\ref{path-fragm-def}, we believe a direct definition is clearer.) We write $\epsilon = \beta(\iota_1(*))$ for the empty partial trace.

The notion of \emph{partial trace behaviour} defined next exploits the fact that $\beta : \{*\} + F B \to B$ is an isomorphism.

\begin{defi}[Partial trace behaviour]
\label{par-trace-beh}
The \emph{partial trace behaviour} of a $\T_S \circ F$-coalgebra $(C,\gamma)$, denoted $\ptr_\gamma : C \times B \to S$, is the greatest fixpoint of the operator on $\Rel_{C,B}$ given by
\begin{eqnarray*}
\UseComputerModernTips\xymatrix@+0.5pc{
\Rel_{C, B} \ar[r]^-{\Rel(F)} & \Rel_{F C,F B} \ar[r]^-{\E_{\T_S}} & \Rel_{\T_S F C, F B}  \ar[r]^-{(\gamma \times \id_{F B})^*} & \Rel_{C, FB} \ar[r]^-{[\extent^\nu_\gamma,\_] \circ i} & \Rel_{C,\{*\}+F B} \ar[r]^-{(\id_{C} \times \beta^{-1})^*} & \Rel_{C,B}}
\end{eqnarray*}
where $\extent^\nu_\gamma$ is the $\nu$-extent of Definition~\ref{extent-coalgebra}, and where $i : C \times (\{*\} + F B) \to C + C \times F B$ is the isomorphism arising from the distributivity of products over coproducts in $\Set$.
\end{defi}
Thus, similarly to the notions of maximal trace behaviour (Definition~\ref{def-max-trace-beh}) and finite trace behaviour (Definition~\ref{def-finite-trace-beh}), the notion of partial trace behaviour assigns values in $S$ to pairs consisting of a state in a $\T_S \circ F$-coalgebra and a partial $F$-trace. The operator in Definition~\ref{par-trace-beh} is similar to the one used to define finite trace behaviour (Definition~\ref{def-finite-trace-beh}), while taking into account that $B$ also contains incomplete traces. In particular, $\ptr_\gamma$ assigns the value $\extent^\nu_\gamma(c)$ to the pair $(c,\epsilon)$.

We now observe that partial traces directly correspond to certain formulas in our logic, namely to those fixpoint-free formulas which only contain modalities of the form $\langle \lambda \rangle$ with $\lambda \in \Lambda$. We let $\LL_\Lambda^1 \subseteq \LL_\Lambda$ consist of all such formulas. 

\begin{rem}
\label{partial-traces-modal-formulas}
Partial traces $b \in B$ are in one-to-one correspondence with modal formulas $\phi_b \in \LL_\Lambda^1$ and moreover, $\ptr_\gamma(c,b) = \lsem \phi_b \rsem_\gamma(c)$ for all states $c$ of a $\T_S \circ F$-coalgebra $(C,\gamma)$. In particular, the empty partial trace $\epsilon$ corresponds to the modal formula $\top$. This can be proved by an easy induction over $B$, using the specific shape of the functor $F$.
\end{rem}

We are now ready to define both a semantic distance and a logical distance between states of $\T_S \circ F$-coalgebras.

\begin{defi}[Linear-time distance]
\label{def:lt}
The \emph{linear-time distance} $\d_{\gamma,\delta} : C \times D \to S$ from states of a $\T_S \circ F$-coalgebra $(C,\gamma)$ to states of a $\T_S \circ F$-coalgebra $(D,\delta)$ is given by
\begin{eqnarray*}
\d_{\gamma,\delta}(c,d) = \inf\limits_{b \in B} (\ptr_\delta(d,b) \oslash \ptr_\gamma(c,b))
\end{eqnarray*}
for $c \in C$ and $d \in D$.
\end{defi}
Intuitively, the distance from $c$ to $d$ measures how much worse the partial trace behaviour of state $d$ is compared to the partial trace behaviour of state $c$. Using Remark~\ref{rem:top}, we have that $\d_{\gamma,\delta}(c,d) = \top$ if and only if $\ptr_\gamma(c,b) \sqsubseteq \ptr_\delta(d,b)$ for all $b \in B$. Informally, $\d_{\gamma,\delta}(c,d) = \top$ if and only if the partial trace behaviour of $d$ is "better" than the partial trace behaviour of $c$, in that the quantity $d$ associates to a partial trace is above the quantity $c$ associates to that same partial trace, for every partial trace in $B$.

\begin{exa}
When $S = (\{0,1\},\vee,0,\wedge,1)$, $\d_{\gamma,\delta}(c,d) = 1 = \top$ if and only if state $d$ can exhibit all partial traces which state $c$ can exhibit. Also, when $S = (\mathbb N^\infty,\min,\infty,+,0)$, $\d_{\gamma,\delta}(c,d)$ is the maximum difference, taken across all partial traces, between the cost of exhibiting a partial trace from state $d$ in $(D,\delta)$ and the cost of exhibiting the same partial trace from state $c$ in $(C,\gamma)$. In particular, if $d$ has a lower cost than $c$ for each partial trace, $\d_{\gamma,\delta}(c,d) = 0 = \top$. 
\end{exa}

\begin{defi}[Logical distance]
\label{def:logi}
Given a set $L \subseteq \mu\LL_\Lambda$ of formulas, the \emph{logical distance} $\d^L_{\gamma,\delta} : C \times D \to S$ from a state $c \in C$ of a $\T_S \circ F$-coalgebra $(C,\gamma)$ to a state $d \in D$ of a $\T_S \circ F$-coalgebra $(D,\delta)$ is given by
\begin{eqnarray*}
\d^L_{\gamma,\delta}(c,d) = \inf\limits_{\phi \in L} (\llsem \phi \rrsem_\delta(d) \oslash \llsem \phi \rrsem_\gamma(c)) \,.
\end{eqnarray*}
\end{defi}

Similarly to the linear-time distance, we have that $\d_{\gamma,\delta}^L(c,d) = \top$ if and only if $\llsem \phi \rrsem_\gamma(c) \sqsubseteq \llsem \phi \rrsem_\delta(d)$ for all $\phi \in L$.

The next result states that both distances satisfy abstract versions of the reflexivity axiom (when restricted to a single coalgebra) and of the triangle inequality required of any pseudometric. The fact that the inequalities in (\ref{i3}) and (\ref{i4}) of Proposition~\ref{i34} are reversed compared to the standard triangle inequality stems from the fact that a distance of $\top \in S$ in our setting corresponds to a distance of $0 \in \mathbb R_{\ge 0}$ in the standard setting. We note also that our distances are \emph{not} symmetric when restricted to a single $\T_S \circ F$-coalgebra.

\begin{prop}
\label{i34}
Let $(C,\gamma)$, $(D,\delta)$ and $(E,\eta)$ be $\T_S \circ F$-coalgebras, and let $L \subseteq \mu\LL_\Lambda$. Then the following hold:
\begin{enumerate}
\item $\d_{\gamma,\gamma}(c,c) = \top$,
\item $\d_{\gamma,\gamma}^L(c,c) = \top$,
\item\label{i3} $\d_{\gamma,\delta}(c,d) \bullet \d_{\delta,\eta}(d,e) \sqsubseteq \d_{\gamma,\eta}(c,e)$,
\item\label{i4} $\d_{\gamma,\delta}^L(c,d) \bullet \d_{\delta,\eta}^L(d,e) \sqsubseteq \d_{\gamma,\eta}^L(c,e)$,
\end{enumerate}
for all $c \in C$, $d \in D$ and $e \in E$.
\end{prop}
\begin{proof}
The statements follow directly from Definitions~\ref{def:lt} and respectively \ref{def:logi}, together with Corollary~\ref{cor:triangle}.
\end{proof}

\begin{rem}
Lawvere \cite{Lawvere} observed that pseudometric spaces are the same as categories enriched in the quantale $(\mathbb N^\infty,\ge)$, with $+$ as tensor. Our use of semiring-valued distances is similar in spirit to loc.\,cit.~-- in the above quantale, $0$ is the top element and a distance of $\top$ corresponds to two points being equivalent. Indeed, our assumptions on the universe $S$ of quantities make $(S,\sqsubseteq)$ a quantale (with $\bullet$ as tensor), although we choose not to expand on or exploit this observation in the present paper.  
\end{rem}

An immediate consequence of Remark~\ref{partial-traces-modal-formulas} is that logical distance w.r.t.~modal formulas in $\LL_\Lambda^1$ coincides with the linear-time distance $\d_{\gamma,\delta}$. This immediately gives $\d_{\gamma,\delta}^{\mu\LL_\Lambda} \sqsubseteq \d_{\gamma,\delta}^{\LL_\Lambda^1} = \d_{\gamma,\delta}$. The remainder of this section shows that logical distance w.r.t.~the \emph{full} logic $\mu\LL_\Lambda$ also coincides with $\d_{\gamma,\delta}$. The next two lemmas, formalising some properties of $\varoslash$, will prove useful for this.
\begin{lem}
\label{lem:inf}
Let $s_i, t_i \in S$ for $i \in I$. Then the following holds:
\begin{eqnarray*}
(\inf_{i \in I}t_i) \varoslash (\inf_{i \in I}s_i) \sqsupseteq \inf_{i \in I}(t_i \varoslash s_i)
\end{eqnarray*}
\end{lem}
\begin{proof}
We have
\begin{align*}
\inf_{i \in I}(t_i \varoslash s_i) & \sqsubseteq \inf_{i \in I}(t_i \varoslash (\inf_{j \in I}s_j))  \tag{Proposition~\ref{prop:oslash0}}\\
& = (\inf_{i \in I}t_i) \varoslash (\inf_{j \in I}s_j) \tag{Proposition~\ref{prop:oslash}}\\
& = (\inf_{i \in I}t_i) \varoslash (\inf_{i \in I}s_i) \tag*{\qedhere}
\end{align*}
\end{proof}

\begin{lem}
\label{lem:sum-oslash}
The following holds for $s_i, t_i \in S$ with $i \in \omega$:
\begin{eqnarray*}
(\sum\limits_{i \in \omega}t_i) \oslash (\sum\limits_{i \in \omega}s_i) \sqsupseteq \inf\limits_{i \in \omega} (t_i \oslash s_i)
\end{eqnarray*}
\end{lem}
\begin{proof}
Let $u = \inf\limits_{i \in \omega} (t_i \oslash s_i)$. Then, $u \sqsubseteq t_i \oslash s_i$ for $i \in \omega$, and therefore, by Remark~\ref{rem:top} and the monotonicity of $\bullet$ in the second argument, $u \bullet s_i \sqsubseteq t_i$ for all $i \in \omega$. We then have:
\begin{align*}
u \bullet \sum\limits_{i \in \omega}s_i & = \sum\limits_{i \in \omega} (u \bullet s_i) \tag{Remark~\ref{rem:distrib-countable-sums}}\\
& \sqsubseteq \sum\limits_{i \in \omega} t_i \tag{see above}\\
\end{align*}
and therefore, using the definition of $\oslash$, $u \sqsubseteq (\sum\limits_{i \in \omega}t_i) \oslash (\sum\limits_{i \in \omega}s_i)$ as required.
\end{proof}

We will also use the following two lemmas to rephrase the definition of  $\mu_\gamma : \M_c \to S$ with $c \in C$ (see Definitions~\ref{induced-measure} and \ref{def:induced-outer-meas}) on measurable sets of paths of the form $\Paths_c(\phi)$ with $\phi \in \mu\LL_\Lambda$, by replacing cylinder set covers with covers whose elements are of the form $\Paths_c(\psi)$ with $\psi \in \LL^1_\Lambda$. To this end, we say that $\psi_2 \in  \LL^1_\Lambda$ \emph{extends} $\psi_1 \in \LL^1_\Lambda$ when the partial trace associated to $\psi_1$ (see Remark~\ref{partial-traces-modal-formulas}) is a prefix of that associated to $\psi_2$. Also, we say that $\psi_1, \psi_2 \in \LL^1_\Lambda$ are \emph{disjoint} when their associated partial traces are incompatible (i.e. they are not both prefixes of some other partial trace). Thus, two disjoint formulas in $\LL^1_\Lambda$ cannot both hold on a path. Disjointness then extends to pairs of formulas in $\{\bot\} \cup \LL_\Lambda^1$, with $\bot$ assumed to be disjoint from any formula in $\{\bot\} \cup \LL_\Lambda^1$, including itself.

The first lemma states that the semantics, over paths in $(C,\gamma)$, of any formula in $\LL^1_\Lambda$ can be expressed in terms of cylinder sets.

\begin{lem}
\label{lemma:form}
Let $(C,\gamma)$ be a $\T_S\circ F$-coalgebra, and let $\psi \in \LL^1_\Lambda$. Then, for $c \in C$, $\Paths_c(\psi)$ can be written as a finite disjoint union of pairwise disjoint cylinder sets.
\end{lem}
\begin{proof}[Proof (Sketch)]
Structural induction on $\psi$, using the fact that there are finitely-many transitions from each state in $(C,\gamma)$, and thus also finitely many transitions which match a given modal operator $\langle \lambda \rangle$ with $\lambda \in \Lambda$.
\end{proof}

The second lemma shows that the semantics, over paths in $(C,\gamma)$, of any fixpoint formula $\phi \in \mu\LL_\Lambda$ can be expressed in terms of the semantics of formulas in $\LL^1_\Lambda$.

\begin{lem}
\label{lemma:form0}
Let $(C,\gamma)$ be a $\T_S \circ F$-coalgebra, and let $\phi \in \mu\LL_\Lambda^{\V}$. Moreover, let $V : \V \to \{0,1\}^{\Paths_C}$ be such that for each $x \in V$, $V(x)$ is of the form
\begin{eqnarray}
\label{eq:snd}
\bigcap \limits_{i \in \omega} (\bigcup_{j \in \omega} \lsem \phi^i_j \rsem_{\zeta'_C})
\end{eqnarray}
with the formulas $\phi^i_j \in \{\bot\} \cup \LL^1_\Lambda$ for $i,j \in \omega$ subject to the following conditions:
\begin{itemize}
\item for each $i \in \omega$, the formulas $(\phi^i_j)_{j \in \omega}$ are pairwise disjoint,
\item for each $i \in \omega$, the family $(\phi^{i+1}_j)_{j \in \omega}$ \emph{refines} the family $(\phi^i_j)_{j \in \omega}$; that is, for each $j \in \omega$, there exists $k \in \omega$ such that $\phi^{i+1}_j$ extends $\phi^i_k$.
\end{itemize}
Then, $\lsem \phi \rsem_{\zeta'_C}$ can also be written in the above form.
\end{lem}
\begin{proof}[Proof (Sketch)]
The proof is by induction of the fixpoint nesting depth of $\phi$.
\begin{enumerate}
\item\label{znd} $\fnd(\phi) = 0$. We prove the statement by structural induction on $\phi$.
\begin{enumerate}
\item For $\phi = \top$, the statement is immediate -- take $\phi^i_0 = \top$ and $\phi^i_{j+1} = \bot$ for $i,j \in \omega$.
\item For $\phi = x \in \V$, the statement follows from the assumption on $V$.
\item For $\phi = \langle \lambda \rangle(\phi_1,\ldots,\phi_{\arity(\lambda)})$ and $i \in \omega$, assume that each $\lsem \phi_k \rsem_{\zeta'_C}$ with $k\!\in\!\{1,\ldots,\arity(\lambda)\}$ can be written as $\bigcap \limits_{i \in \omega} (\bigcup_{j \in \omega} \lsem (\phi_k)^i_j \rsem_{\zeta'_C})$. For $i \in \omega$, let $\{\phi^i_j \mid j \in \omega\}$ contain all formulas of the form $\langle \lambda \rangle((\phi_1)^i_{j_1},\ldots,(\phi_{\arity(\lambda)})^i_{j_{\arity(\lambda)}}))$ with $j_k \in \omega$ for $k \in \{1,\ldots,\arity(\lambda)\}$. Clearly, for each $i \in \omega$, the formulas $(\phi^i_j)_{j \in \omega}$ satisfy the two conditions in the statement of the lemma, given that each family $((\phi_k)^i_j)_{j \in \omega}$ with $k \in \{1,\ldots,\arity(\lambda)\}$ does so. The fact that (\ref{eq:snd}) holds can also be easily proved.
\item The case of modalities incorporating disjunctions (this also covers the case $\phi = \bot$) is treated similarly.
\end{enumerate}
\item $\fnd(\phi) > 0$. As for case (\ref{znd}), the statement follows by structural induction on $\phi$, with four similar sub-cases treated in the same way, but with two additional cases, considered below:
\begin{enumerate}
\item[(e)] For $\phi = \mu x.\phi_x$, just like in the proof of Proposition~\ref{prop-meas}, we cannot directly apply the induction hypothesis. Instead, we use the equality (\ref{equality1}) to reduce showing that $\lsem \phi \rsem_{\zeta'_C}$ can be written in the required form to showing that each of $\lsem \phi_x \rsem_{\zeta'_C}^{V \cup \{x \mapsto \lsem \phi_x^{(n)} \rsem_{\zeta'_C}^V\}}$, with $n \in \omega$, can be written in this form. Once this is shown (by an easy induction on $n$), the resulting families $(\phi_x^{n})^i_j$ with $i,j \in \omega$ and $n \in \omega$ can be used to obtain the required $(\phi^i_j)_{j \in \omega,i\in \omega}$. For this, the observation that, for fixed $i \in \omega$ and $n \in \omega$, the family $(\phi_x^{n+1})^i_j$ with $j \in \omega$ contains\footnote{once formulas containing $\bot$ have been identified with $\bot$} the family $(\phi_x^{n})^i_j$ with $j \in \omega$, is used; this, in turn, follows from $\phi$ not containing unguarded occurrences of any fixpoint variables.
\item[(f)] The proof in the case when $\phi = \nu x.\phi_x$ is similar and uses the equality (\ref{equality2}) to reduce showing that $\lsem \phi \rsem_{\zeta'_C}$ can be written in the required form to showing that each of $\lsem \phi_x \rsem_{\zeta'_C}^{V \cup \{x \mapsto \lsem \phi_x^{(n)} \rsem_{\zeta'_C}^V\}}$, with $n \in \omega$, can be written in this form. In this case, the observation that, for fixed $i \in \omega$ and $n \in \omega$, the family $(\phi_x^{n+1})^i_j$ with $j \in \omega$ refines the family $(\phi_x^{n})^i_j$ with $j \in \omega$, is used; this, again, is a consequence of guardedness.\qedhere
\end{enumerate}
\end{enumerate}
\end{proof}

\begin{exa}
Let $F = \{a,b\} \times \Id$.
\begin{itemize}
\item For $\phi = \mu x.\nu y.(\langle a \rangle x \sqcup \langle b \rangle y)$, capturing the property that only finitely-many $a$-labelled transitions are present, $\lsem \phi \rsem_{\zeta'_C}$ is given by the intersection $\bigcap\limits_{i \in \omega}A_i$ where, for $i \in \omega$, $A_i = \bigcup\limits_{k \in \omega, n_1,\ldots,n_k \in \mathbb N\setminus \{0\}} \lsem \langle b \rangle^{n_1} \langle a \rangle \langle b \rangle^{n_2} \langle a \rangle \ldots \langle b \rangle^{n_k} \langle a \rangle \langle b \rangle^i \top \rsem_{\zeta'_C}$. In other words, for $i \in \omega$, the family $(\phi^i_j)_{j \in \omega}$ contains those formulas in $\LL_\Lambda^1$ whose associated partial trace contains at least $i$ \emph{consecutive} $b$-transitions.
\item For $\phi = \nu x.\mu y.(\langle a \rangle x \sqcup \langle b \rangle y)$, capturing the property that infinitely-many $a$-labelled transitions are present, $\lsem \phi \rsem_{\zeta'_C}$ is given by the intersection $\bigcap\limits_{i \in \omega}A_i$ where, for $i \in \omega$, $A_i = \bigcup\limits_{n_0,\ldots,n_{i-1} \in \mathbb N} \lsem \langle b \rangle^{n_0} \langle a \rangle \langle b \rangle^{n_1} \langle a \rangle \ldots \langle b \rangle^{n_{i-1}} \langle a \rangle \top \rsem_{\zeta'_C}$. In other words, for $i \in \omega$, the family $(\phi^i_j)_{j \in \omega}$ contains those formulas in $\LL_\Lambda^1$ whose associated partial trace contains at least $i$ $a$-labelled transitions.
\end{itemize}
\end{exa}

We now have the following result.

\begin{prop}
\label{prop:formula}
Let $(C,\gamma)$ be a $\T_S\circ F$-coalgebra, and let $c \in C$ and $\phi \in \mu\LL_\Lambda$. Then:
{\small \begin{eqnarray*}
\mu_\gamma(\Paths_c(\phi))= \inf \{\sum\limits_{i \in \omega} \mu_\gamma(\Paths_c(\phi_i)) \mid 
(\phi_i \in \LL_\Lambda^1)_{i \in \omega} \text{ pairwise disj.},\, \Paths_c(\phi) \subseteq \bigcup\limits_{i \in \omega}\Paths_c(\phi_i)\}
\end{eqnarray*}}
\end{prop}
\begin{proof}
Recall that
\begin{eqnarray*}
\mu_\gamma(\Paths_c(\phi))= \inf \{ \mu_\gamma(\C) \mid \C \text{ is a countable, disjoint cylinder set cover for $\Paths_c(\phi)$} \} \,.
\end{eqnarray*}
Clearly, the lhs in the statement of Proposition~\ref{prop:formula} is $\sqsubseteq$ the rhs -- by Lemma~\ref{lemma:form}, countable covers for $\Paths_c(\phi)$ made of pairwise disjoint sets of the form $\Paths_c(\psi)$ with $\psi \in \LL^1_\Lambda$ yield countable covers made of pairwise disjoint cylinder sets, with the same measure.

To show that the lhs is $\sqsupseteq$ the rhs, let $\C$ be a countable cover for $\Paths_c(\phi)$ consisting of pairwise disjoint cylinder sets. We will use $\C$ to construct a countable cover for $\Paths_c(\phi)$ consisting of pairwise disjoint sets of the form $\Paths_c(\psi) \subseteq \bigcup \C$ with $\psi \in \LL^1_\Lambda$. To this end, let $\Psi = \{ \psi \in \LL^1_\Lambda \mid \emptyset \ne \Paths_c(\psi) \subseteq \bigcup \C \}$, and let $\C_\Psi = \{ \Paths_c(\psi) \mid \psi \in \Psi\}$. Note that, since $(C,\gamma)$ is finitely branching, $\C_\Psi$ is countable. Clearly, $\bigcup \C_\Psi \subseteq \bigcup \C$. We will now show that $\C_\Psi$ covers $\Paths_c(\phi)$. Assume that this is not the case; that is, there exists $p \in \Paths_c(\phi)$ such that $p \not\in \Paths_c(\psi)$ for any $\psi \in \Psi$. By $(Z_C,\zeta_C)$ being a final $C \times F$-coalgebra, with $Z_C$ obtained as an $\omega^\op$-limit, there exist cylinder sets $C^p_i$ of depth uniformly $i$, with $i \in \omega$, such that $\{p\} = \bigcap \limits_{i \in \omega} C^p_i$. Remark~\ref{partial-traces-modal-formulas} now yields a formula $\psi_i = \psi_{C^p_i} \in \LL^1_\Lambda$ (which corresponds to the underlying partial trace of $C^p_i$) for each $i \in \omega$. We immediately obtain $\psi_i \not \in \Psi$ (since $p \in C^p_i \subseteq \Paths_c(\psi_i)$) for $i \in \omega$. We now use the fact that $\Paths_c(\phi) = \bigcap\limits_{i \in \omega} \bigcup\limits_{j \in \omega} \Paths_c(\phi^i_j)$, with $\phi^i_j \in \LL^1_\Lambda$ for $i,j \in \omega$ (see Lemma~\ref{lemma:form0}) to derive a contradiction. From this and $p \in \Paths_c(\phi)$ we obtain a decreasing sequence $\Paths_c(\phi^0_{j_0}) \supseteq \Paths_c(\phi^1_{j_1}) \supseteq \ldots$ with $p \in \bigcap\limits_{i \in \omega} \Paths_c(\phi^i_{j_i})$. Moreover, $\phi^i_{j_i} \not\in \Psi$ (since $p \in \Paths_c(\phi^i_{j_i})$) for any $i \in \omega$. As a result, $\Paths_c(\phi^i_{j_i}) \not\subseteq \bigcup \C$ for any $i \in \omega$. It then follows that $\bigcap\limits_{i \in \omega} \Paths_c(\phi^i_{j_i}) \not\subseteq \bigcup \C$ (see below). This, in turn, gives $p' \in (\bigcap\limits_{i \in \omega} \Paths_c(\phi^i_{j_i})) \setminus (\bigcup \C)$. Moreover, $p'$ is $F$-behaviourally equivalent to $p$ (given the shape of the $\phi^i_{j_i}$s and the fact that both $p$ and $p'$ belong to $ \Paths_c(\phi^i_{j_i})$ for $i \in \omega$), which now gives $p' \in \Paths_c(\phi)$ (as behaviourally equivalent states in $(Z_C,\zeta_C)$ satisfy the same formulas of $\mu\LL_\Lambda$). We have therefore derived a contradiction, since $p' \not\in \bigcup \C$ and therefore $p' \not\in \Paths_c(\phi)$. This concludes the proof that $\C_\Psi$ covers $\Paths_c(\phi)$. We have thus constructed a countable cover for $\Paths_c(\phi)$, with elements of the form $\Paths_c(\psi) \subseteq \bigcup \C$ with $\psi \in \LL^1_\Lambda$. The last step is to provide a disjoint sub-cover of $\C_\Psi$, which still covers $\Paths_c(\phi)$; this can be done by first replacing each $\Paths_c(\psi)$ for which $\psi \in \Psi$ is \emph{not} of uniform depth by a finite collection of sets of the form $\Paths_c(\psi)$ with $\psi$ of uniform depth, and then removing from $\C_\Psi$ all sets $\Paths_c(\psi')$ for which there exists $\psi \in \Psi$ with $\Paths_c(\psi') \subsetneq \Paths_c(\psi)$. This transformation preserves the measure of $\C_\Psi$. Also, the resulting disjoint cover $\C_\Psi$ for $\Paths_c(\phi)$ has measure below that of $\C$ (as $\bigcup \C_\Psi \subseteq \bigcup \C$). As a result, the lhs in the statement of Proposition~\ref{prop:formula} is $\sqsupseteq$ the rhs.

It remains to prove the earlier claim that, if $\Paths_c(\phi^i_{j_i}) \not\subseteq \bigcup \C$ for $i \in \omega$, then $\bigcap\limits_{i \in \omega} \Paths(\phi^i_{j_i}) \not\subseteq \bigcup \C$. For this, note that by Lemma~\ref{lemma:form}, each $\Paths_c(\phi^i_{j_i})$ is a \emph{finite} union of pairwise disjoint cylinder sets (which we can assume w.l.o.g.~are of depth uniformly $i$ -- if not, simply replace them by finite unions of cylinder sets of depth uniformly $i$), at least one of which is not included in $\bigcup \C$. This allows us to construct a decreasing chain $C_0 \supseteq C_1 \supseteq \ldots$ of cylinder sets with $C_i \subseteq \Paths_c(\phi^i_{j_i})$ and $C_i \not\subseteq \bigcup \C$ for $i \in \omega$. Then, $\bigcap\limits_{i \in \omega} C_i \not\subseteq \bigcup \C$. For, if this was not the case, since $\bigcap\limits_{i \in \omega} C_i = \{p''\}$ with $p''\in Z_C$ (due to the shape of the $C_i$s), we would have $p'' \in \bigcup \C$, and therefore $p'' \in C$ for some $C \in \C$. But this would give $C_i \subseteq C$ for some $i \in \omega$, and therefore $C_i \subseteq \bigcup \C$, which contradicts the definition of the $C_i$s. This concludes the proof.
\end{proof}

We finally show that the logical distance w.r.t.~$\mu\LL_\Lambda$ and the linear-time distance coincide.
\begin{thm}
Let $(C,\gamma)$ and $(D,\delta)$ be two $\T_S \circ F$-coalgebras. Then, $\d_{\gamma,\delta} = \d_{\gamma,\delta}^{\mu\LL_\Lambda}$.
\end{thm}
\begin{proof}
We have already observed that $\d_{\gamma,\delta}^{\mu\LL_\Lambda} \sqsubseteq \d_{\gamma,\delta}$. It therefore suffices to prove that, for $c \in C$, $d \in D$, and $\phi \in \mu\LL_\Lambda$, $\llsem \phi \rrsem_\delta(d) \oslash \llsem \phi \rrsem_\gamma(c) \sqsupseteq \d_{\gamma,\delta}(c,d)$. This then gives $\d_{\gamma,\delta}^{\mu\LL_\Lambda} \sqsupseteq \d_{\gamma,\delta}$. We have:
\begin{align*}
& \phantom{{}={}} \llsem \phi \rrsem_\delta(d) \oslash \llsem \phi \rrsem_\gamma(c) \\
& = \mu_\gamma(\Paths_d(\phi))  \oslash \mu_\delta(\Paths_c(\phi)) \tag{definition of $\llsem \_ \rrsem$} \\
& =  \inf \{\sum\limits_{i \in \omega} \mu_\delta(\lsem \phi_i \rsem_{\zeta_D}) \mid 
(\phi_i \in \LL_\Lambda^1)_{i \in \omega} \text{ pairwise disjoint},\, \Paths_d(\phi) \subseteq \bigcup\limits_{i \in \omega}\lsem \phi_i \rsem_{\zeta_D}\}~\varoslash \\
& \phantom{{}={}} \inf \{\sum\limits_{i \in \omega} \mu_\gamma(\lsem \phi_i \rsem_{\zeta_C}) \mid 
(\phi_i \in \LL_\Lambda^1)_{i \in \omega} \text{ pairwise disjoint},\, \Paths_c(\phi) \subseteq \bigcup\limits_{i \in \omega}\lsem \phi_i \rsem_{\zeta_C}\}  \tag{Proposition~\ref{prop:formula}} \\
& = \inf \{\sum\limits_{i \in \omega} \mu_\delta(\lsem \phi_i \rsem_{\zeta_D}) \mid 
(\phi_i \in \LL_\Lambda^1)_{i \in \omega} \text{ pairwise disjoint},\, \\
& \phantom{{}={}} (\phi_i)_{i \in \omega} \text{ minimal s.t.~} \Paths_c(\phi) \subseteq \bigcup\limits_{i \in \omega}\lsem \phi_i \rsem_{\zeta_C} \text{ and } \Paths_d(\phi) \subseteq \bigcup\limits_{i \in \omega}\lsem \phi_i \rsem_{\zeta_D} \}  ~\varoslash \\
& \phantom{{}={}} \inf \{\sum\limits_{i \in \omega} \mu_\gamma(\lsem \phi_i \rsem_{\zeta_C}) \mid 
(\phi_i \in \LL_\Lambda^1)_{i \in \omega} \text{ pairwise disjoint},\, \\
& \phantom{{}={}} (\phi_i)_{i \in \omega} \text{ minimal s.t.~} \Paths_c(\phi) \subseteq \bigcup\limits_{i \in \omega}\lsem \phi_i \rsem_{\zeta_C} \text{ and } \Paths_d(\phi) \subseteq \bigcup\limits_{i \in \omega}\lsem \phi_i \rsem_{\zeta_D} \} \tag{*} \\
& \sqsupseteq \inf \{\sum\limits_{i \in \omega} \mu_\delta(\lsem \phi_i \rsem_{\zeta_D}) \oslash \sum\limits_{i \in \omega} \mu_\gamma(\lsem \phi_i \rsem_{\zeta_C}) \mid 
(\phi_i \in \LL_\Lambda^1)_{i \in \omega} \text{ pairwise disjoint},\, \\
& \phantom{{}={}} (\phi_i)_{i \in \omega} \text{ minimal s.t.~} \Paths_c(\phi) \subseteq \bigcup\limits_{i \in \omega}\lsem \phi_i \rsem_{\zeta_C} \text{ and } \Paths_d(\phi) \subseteq \bigcup\limits_{i \in \omega}\lsem \phi_i \rsem_{\zeta_D} \} \\
\tag{Lemma~\ref{lem:inf}} \\
& \sqsupseteq \inf\limits_{\phi \in \LL_\Lambda^1}  (\mu_\delta(\lsem \phi \rsem_{\zeta_D}) \oslash \mu_\gamma(\lsem \phi \rsem_{\zeta_C})) \tag{Lemma~\ref{lem:sum-oslash}} \\
& = \d_{\gamma,\delta}(c,d)
\end{align*}
The equality marked (*) follows from the shape of the $\phi_i$s, the minimality of the family $(\phi_i)_{i \in \omega}$ (w.r.t.~the conditions $\Paths_c(\phi) \subseteq \bigcup\limits_{i \in \omega}\lsem \phi_i \rsem_{\zeta_C}$ and $\Paths_d(\phi) \subseteq \bigcup\limits_{i \in \omega}\lsem \phi_i \rsem_{\zeta_D}$) and the observation that, if the pairwise disjoint family $(\phi_i)_{i \in \omega}$ is such that $\bigcup_{i \in \omega} \Paths_c(\phi_i) \supseteq \Paths_c(\phi)$, then $\sum\limits_{i \in \omega}\mu_\gamma(\Paths_c(\phi_i)) = \sum\limits_{j \in \omega} \mu_\gamma(\Paths_c(\psi_j))$ whenever the pairwise disjoint family $(\psi_j)_{j \in \omega}$ includes the family $(\phi_i)_{i \in \omega}$\footnote{That is, each $\phi_i$ with $i \in \omega$ is equal to some $\psi_j$ with $j \in \omega$.} -- this is because whenever $j \in \omega$ is such that $\psi_j \ne \phi_i$ for all $i \in \omega$, we have $\mu_\gamma(\Paths_c(\psi_j)) = 0$. This concludes the proof.
\end{proof} 

As a corollary, using Remark~\ref{rem:top}, we obtain the following logical characterisation of (a generalised version of) partial trace inclusion.

\begin{cor}
Let $(C,\gamma)$ and $(D,\delta)$ be two $\T_S \circ F$-coalgebras, and let $c \in C$ and $d \in D$. Then, $\ptr_\gamma(c,b) \sqsubseteq \ptr_\delta(d,b)$ for all $b \in B$ if and only if $\llsem \phi \rrsem_\gamma(c) \sqsubseteq \llsem \phi \rrsem_\delta(d)$ for all $\phi \in \mu\LL_\Lambda$.
\end{cor}

A similar result can be stated for partial trace equivalence. This provides a semantic characterisation of the logical equivalence between states induced by the logic $\mu\LL_\Lambda$.

\begin{cor}
Let $(C,\gamma)$ and $(D,\delta)$ be two $\T_S \circ F$-coalgebras, and let $c \in C$ and $d \in D$. Then, $\ptr_\gamma(c,b) = \ptr_\delta(d,b)$ for all $b \in B$ if and only if $\llsem \phi \rrsem_\gamma(c) = \llsem \phi \rrsem_\delta(d)$ for all $\phi \in \mu\LL_\Lambda$.
\end{cor}

\section{Conclusions and Future Work}
\label{conclusions}

We described a generic approach to defining linear-time fixpoint logics for state-based systems that incorporate branching behaviour, and proved the equivalence of the step-wise semantics for these logics with an alternative path-based semantics, akin to those employed by existing linear-time logics. This required generalising standard measure-theory concepts and results to measures valued in a partial semiring. Our approach is uniform in this semiring, can recover standard path-based semantics for existing logics interpreted over non-deterministic and probabilistic systems (see Example~\ref{ex:ltl}), and also instantiates to new settings (resource-aware systems and tree-shaped linear behaviours). The equivalence of the two semantics motivates the use of the term "linear-time" to describe our logics. We also introduced a semantic notion of linear-time distance between states of coalgebras with branching, and showed that this coincides with a logical distance induced by our logics.

Future work will explore generalising these results to coalgebras of type $F \circ \T_S$ (as considered e.g.~in \cite{JACOBS2015}) and beyond. We will also study extensions of our logics that incorporate a notion of \emph{offset} (in the case of the tropical semiring, this would instantiate to resource gains), as considered in \cite{Cirstea19}, as well as extensions to coalgebraic types which incorporate multiple types of branching.

\bibliography{lmcs.bib}

\end{document}